%\PassOptionsToPackage{colorlinks,linkcolor={blue},citecolor={blue},urlcolor={red},breaklinks=true,final}{hyperref}
% This is samplepaper.tex, a sample chapter demonstrating the
% LLNCS macro package for Springer Computer Science proceedings;
% Version 2.21 of 2022/01/12
%
\documentclass[a4paper,UKenglish,cleveref, autoref, thm-restate, numberwithinsect,final]{lipics-v2021}

\nolinenumbers

\sloppy
\usepackage[T1]{fontenc}

\hideLIPIcs
% T1 fonts will be used to generate the final print and online PDFs,
% so please use T1 fonts in your manuscript whenever possible.
% Other font encondings may result in incorrect characters.
%

\usepackage[author=anonymous,marginclue,footnote,final]{fixme}

\FXRegisterAuthor{sp}{asp}{SP}
\FXRegisterAuthor{ls}{als}{LS}

\usepackage{xspace}
\usepackage{tikz-cd}
\usepackage{graphicx}
\usepackage{amsfonts}
\usepackage{amsmath}
\usepackage{amssymb}
\usepackage{mathtools}
\usepackage{stmaryrd}
\usepackage{enumitem}
\usepackage{float}
\usepackage{thmtools}
\usepackage{thm-restate}
\usepackage{bussproofs}
\usepackage{relsize}
\usepackage{listings}
\lstset{
%  basicstyle=\ttfamily,
  mathescape
}

% Used for displaying a sample figure. If possible, figure files should
% be included in EPS format.
%
%\allowdisplaybreaks

\usepackage[strings]{underscore}
\usepackage{hyperref}
\hypersetup{pdftex,colorlinks=true,allcolors=blue}
\usepackage{cleveref}

\newcommand{\gfresh}{\mathsf{N}}
\newcommand{\bfresh}{\mathsf{B}}
\newcommand{\lfresh}{\mathsf{D}}
\newcommand{\Nom}{\mathsf{Nom}}
\newcommand{\Terms}{\mathcal{T}}
\newcommand{\pfun}{\rightharpoonup}

\newcommand{\newletter}[1]{{\midmid}#1}

\newcommand{\midmid}{\hspace{0.2ex}{\rule[-0.1ex]{0.6pt}{1.65ex}}\hspace{0.2ex}}

\newcommand{\FN}{\mathsf{FN}}

\newcommand{\Names}{\mathbb{A}}
\newcommand{\barNames}{\overline{\Names}}

\newcommand{\alphaeq}{\equiv_\alpha}
\newcommand{\powufs}{\pow_{\mathsf{ufs}}}
\newcommand{\pow}{\mathcal{P}}
\newcommand{\powfs}{\pow_{\mathsf{fs}}}

\newcommand{\mycomment}[1]{}
\newcommand{\fresh}{\mathbin{\#}}
\newcommand{\supp}{\mathsf{supp}}
\newcommand{\fix}{\mathop{\mathsf{fix}}}
\newcommand{\Fix}{\mathop{\mathsf{Fix}}}

\newcommand\restr[2]{#1|_{#2}}
\makeatother

% My own Macros

\newcommand{\id}{\mathit{id}}

\newcommand{\ExpTime}{\textsc{ExpTime}\xspace}

\theoremstyle{definition}
\newtheorem{defn}[theorem]{Definition}

\numberwithin{equation}{section}
\newcommand{\autorefexpls}[1]{\renewcommand{\exampleautorefname}{Examples}%
\autoref{#1}  \renewcommand{\exampleautorefname}{Example}%
}

\title{Nominal Tree Automata With Name Allocation}
%
%\titlerunning{Regular Nominal Tree Automata}
% If the paper title is too long for the running head, you can set
% an abbreviated paper title here

%\author{Anonymous Author(s)}{}{}{}{}%{https://orcid.org/0000-0002-5050-2565}{Funded by the Deutsche Forschungsgemeinschaft (DFG, German Research Foundation) -- project number 434050016}%SpeQT
%\author{Jonas Forster\inst{1}, Lutz Schr{\"o}der\inst{1}, Paul Wild\inst{1}, Harsh Beohar\inst{2}, Sebastian Gurke\inst{3}, Barbara König\inst{3}, Karla Messing\inst{3}}
%\authorrunning{Anonymous Author(s)}

%TODO mandatory, please use full name; only 1 author per \author macro; first two parameters are mandatory, other parameters can be empty. Please provide at least the name of the affiliation and the country. The full address is optional. Use additional curly braces to indicate the correct name splitting when the last name consists of multiple name parts.

\author{Simon Prucker}{Friedrich-Alexander-Universität Erlangen-Nürnberg,
Germany}{simon.prucker@fau.de}{https://orcid.org/0009-0000-2317-5565}{}%{Funded by the Deutsche Forschungsgemeinschaft (DFG, German Research Foundation) -- project number 434050016}%SpeQT

\author{Lutz Schr{\"o}der}{Friedrich-Alexander-Universität Erlangen-Nürnberg,
Germany}{lutz.schroeder@fau.de}{https://orcid.org/0000-0002-3146-5906}{}%{Funded by the Deutsche Forschungsgemeinschaft (DFG, German Research Foundation) -- project number 434050016}%SpeQT

\authorrunning{S.\ Prucker and L.\ Schr{\"o}der}

\ccsdesc{Theory of computation-Formal languages and automata theory}

\keywords{Data languages, tree automata, nominal automata, inclusion
  checking}
%
% First names are abbreviated in the running head.
% If there are more than two authors, 'et al.' is used.
%
%

\funding{funded by the Deutsche Forschungsgemeinschaft (DFG, German
  Research Foundation) – Projektnummer 517924115} % CoNaN

  \EventEditors{Rupak Majumdar and Alexandra Silva}
  \EventNoEds{2}
  \EventLongTitle{35th International Conference on Concurrency Theory (CONCUR 2024)}
  \EventShortTitle{CONCUR 2024}
  \EventAcronym{CONCUR}
  \EventYear{2024}
  \EventDate{September 9--13, 2024}
  \EventLocation{Calgary, Canada}
  \EventLogo{}
  \SeriesVolume{311}
  \ArticleNo{23}

\Copyright{Simon Prucker and Lutz Schr{\"o}der}

\begin{document}

\maketitle              % typeset the header of the contribution

\begin{abstract}
  Data trees serve as an abstraction of structured data, such as XML
  documents. A number of specification formalisms for languages of
  data trees have been developed, many of them adhering to the
  paradigm of register automata, which is based on storing data values
  encountered on the tree in registers for subsequent comparison with
  further data values. Already on word languages, the expressiveness
  of such automata models typically increases with the power of
  control (e.g.\ deterministic, non-deterministic, alternating).
  Language inclusion is typically undecidable for non-deterministic or
  alternating models unless the number of registers is radically
  restricted, and even then often remains non-elementary. We present
  an automaton model for data trees that retains a reasonable level of
  expressiveness, in particular allows non-determinism and any number
  of registers, while admitting language inclusion checking in
  elementary complexity, in fact in parametrized exponential time. We
  phrase the description of our automaton model in the language of
  nominal sets, building on the recently introduced paradigm of
  explicit name allocation in nominal automata.
\end{abstract}
\section{Introduction}\label{sec:intro}
\noindent Letters from infinite alphabets generally serve as an
abstraction of data values in formalisms for the specification and
verification of structured data such as data words or data trees
(e.g.~\cite{NevenEA04}). They have variously been used to represent
data values in XML documents~\cite{NevenEA04}, object
identities~\cite{GrigoreEA13}, parameters of method
calls~\cite{HowarEA19}, or nonces in cryptographic
protocols~\cite{KurtzEA07}. There are, by now, quite a number of
automata models for data languages, including register
automata~\cite{KaminskiFrancez94}, data walking
automata~\cite{ManuelEA16}, and data automata~\cite{BojanczykEA11}. A
typical phenomenon in such models is that expressiveness increases
strictly with the power of control (ranging from deterministic to
alternating models). In such models, the key reasoning problem of
inclusion checking tends to be either undecidable or computationally
very hard unless stringent restrictions are imposed on either the
power of control or on key parameters such as the number of
registers. For instance, inclusion checking of nondeterministic
register automata is undecidable unless one restricts either to
unambiguous automata~\cite{MottetQuaas19,Colcombet15} or to automata
with at most two registers~\cite{KaminskiFrancez94} (no complexity
bound being given in the latter case); inclusion checking for
alternating register automata is undecidable unless one restricts to only
one register, and even then fails to be primitive
recursive~\cite{DemriLazic09}; the inclusion problem of data walking
automata is decidable but at least as hard as Petri net
reachability~\cite{ColcolmbetManuel14arXiv}, which by recent results
is
Ackermann-complete~\cite{Leroux21,CzerwinskiOrlikowski21,LerouxS19};
non-emptiness of data automata~\cite{BojanczykEA11} is decidable but,
again, at least as hard as Petri net reachability; and emptiness of
variable automata~\cite{GrumbergEA12} is undecidable unless one
restricts to the (less expressive) deterministic fragment.

Register-based automata models are often essentially equivalent to
automata models in nominal sets~\cite{Pitts13}; for instance, register
automata with nondeterministic reassignment~\cite{KaminskiZeitlin10}
are equivalent to \emph{nondeterministic orbit-finite
  automata}~\cite{BojanczykEA14}. In the nominal setting, one way to
ameliorate the mentioned degree of computational hardness while
retaining a reasonable level of expressiveness is provided by the
paradigm of explicit \emph{name allocation} in nominal automata, which
first appeared in \emph{regular nondeterministic nominal automata
  (RNNA)}~\cite{SchroderEA17} and has subsequently been used in Büchi
RNNA~\cite{UrbatEA21} and in a linear-time nominal
$\mu$-calculus~\cite{HausmannEA21}. In this paradigm, notions of
freshness are based on $\alpha$-equivalence, i.e.\ renaming of bound
names as in $\lambda$-calculus, which blocks renaming of bound names
into names that have free occurrences later in the word; in terms of
the equivalent register-based setting, this amounts to a lossiness
condition saying that at every point, register contents may
nondeterministically be erased (thus freeing the register). At the
same time, name-allocating models impose finite branching. Inclusion
checking in these models is typically elementary, and in fact has low
parametrized complexities with the parameter being the \emph{degree},
which in the register-based setting corresponds to the number of
registers.

Our present contribution is a nominal non-deterministic top-down tree
automaton model that follows the name allocation paradigm. As our main
result, we show that our model of \emph{regular nominal tree automata}
(RNTA) admits inclusion checking in doubly exponential time, more
precisely in parametrized singly exponential time with the degree as
the parameter (recall that the problem is already \ExpTime-complete
for finite-alphabet non-deterministic top-down tree automata). We thus
obtain an efficiently decidable formalism for the specification data
words that admits full non-determinism and unboundedly many registers.

Omitted proofs can be found in the full
version~\cite{PruckerSchroder24arXiv}.

\subparagraph*{Related Work} We have already mentioned work on
register word automata. Kaminski and Tan~\cite{KaminskiTan08}
introduce top-down and bottom-up register tree automata with or
without non-deterministic reassignment; computational hardness of
inclusion and universality is inherited from register word automata
(while membership and emptiness are decidable in elementary complexity
\textsc{NP} and \ExpTime,
respectively~\cite{SendaEA18}). Without non-deterministic
reassignment, the top-down model and the bottom-up model have
incomparable expressiveness.  Our model of regular nominal tree
automata (RNTA) relates, along the usual
correspondence~\cite{BojanczykEA14}, to non-deterministic top-down
register tree automata without non-deterministic reassignment. Van
Heerdt et al.~\cite{HeerdtEA19} treat bottom-up nominal tree automata
with name allocation in a general algebraic setting, describing
minimization and determinization constructions (without considering
inclusion checking; determinization produces orbit-infinite
automata). Since the finite-branching condition for the bottom-up
variant differs from top-down finite branching as imposed in RNTAs, we
expect similar incomparability as in the register-based setting; we
leave the investigation of this point to future work. A register
automata model for data trees with a different navigation pattern has
been introduced by Jurdzi\'nski and
Lazi\'c~\cite{JurdzinskiLazic07,JurdzinskiLazic11} and extended by
Figueira~\cite{Figueira10}; in this model, the automaton moves
downwards or rightwards on the tree instead of passing copies of
itself to child nodes. The emptiness problem of the alternating model
is decidable but not primitive recursive when the number of registers
is restricted to~$1$; expressiveness is incomparable to Kaminski and
Tan's model~\cite{JurdzinskiLazic11}. One formalism for data trees
that does allow inclusion checking in elementary complexity is the
logic $FO^2(+1,\sim)$~\cite{BojanczykEA09}, whose satisfiablity
problem is in \textsc{3NexpTime}. Register tree automata have been
extended to cover an ordering on the data
values~\cite{Tan14,TorunczykZeume22}.

\pagebreak

\section{Preliminaries}\label{sec:prelims}
We assume basic familiarity with category theory
(e.g.~\cite{AdemekEA90}). We briefly recall some background on nominal
sets (see~\cite{Pitts13} for a textbook treatment), as well as on tree
automata (e.g.~\cite{ComonEA08}), register
automata~\cite{KaminskiFrancez94}, and nominal
automata~\cite{BojanczykEA14}.

\subparagraph*{Group Actions and Nominal sets} We fix a countably
infinite set $\Names$ of \emph{names}. Throughout, we let~$G$ denote
the group of finite permutations on~$\Names$, which is generated by
the permutations $(ab)$ that swap two names $a,b\in\Names$; we write
$\id$ for the identity permutation, and $(-)\cdot(-)$ for the group
operation, i.e.\ applicative-order composition of permutations. A
\emph{$G$-set} consists of a set~$X$ and a left action
$(-)\cdot(-)\colon G\times X\to X$ of~$G$ on~$X$ (subject to
$\id\cdot x=x$ and $\pi\cdot (\pi'\cdot x)=(\pi\cdot\pi')\cdot
x$). Given $G$-sets $X,Y$, a map $f\colon X\to Y$ is
\emph{equivariant} if $f(\pi\cdot x)=\pi\cdot f(x)$ for all $x,y$,
and a subset $S \subseteq X$ is \emph{equivariant} if $\pi\cdot s \in S$
for all $\pi \in G$ and $s \in S$, i.e. $S$ is closed under the group action. The
\emph{orbit} of $x\in X$ is $G\cdot x=\{\pi\cdot x\mid\pi\in G\}$. The
orbits form a disjoint decomposition of~$X$; the $G$-set~$X$ is
\emph{orbit-finite} if it has only finitely many orbits.

We write $\fix(x)=\{\pi\in G\mid\pi(x)=x\} $ for $x\in X$, and
$\Fix(A)=\bigcap_{x\in A}\fix(x)$ for $A\subseteq X$. The set~$\Names$
itself is a $G$-set in a canonical manner. A set $S\subseteq\Names$ is
a \emph{support} of an element $x\in X$ if
\begin{equation*}
  \Fix(S)\subseteq\fix(x),
\end{equation*}
that is, if every permutation that fixes all names in~$S$ also
fixes~$x$, which we understand as~$x$ depending only on the names
in~$S$. 

Then, a $G$-set~$X$ is a \emph{nominal set} if every element of~$X$
has a finite support. It follows that every~$x\in X$ has a
\emph{least} finite support $\supp(x)$, also just called the
\emph{support of~$x$}. A name~$a\in\Names$ is \emph{fresh for~$x$} if
$a\notin\supp(x)$, in which case we write $a\fresh x$. We write $\Nom$
for the category of nominal sets and equivariant maps. Examples of
nominal sets include~$\Names$ itself (with $\supp(a)=\{a\}$ for
$a\in\Names$); the product $X\times Y$ of nominal sets $X,Y$ (with
$\supp(x,y)=\supp(x)\cup\supp(y)$); and the finitely supported
powerset $\powfs X$ of a nominal set~$X$, which consists of the
subsets of~$X$ that have finite support under the pointwise action
of~$G$ on the full powerset. A set $A\subseteq X$ is \emph{uniformly
  finitely supported} if $\bigcup_{x\in A}\supp(x)$ is finite; we
write
$\powufs X=\{A\subseteq X\mid A\text{ uniformly finitely supported}\}$
for the \emph{uniformly finitely supported powerset} of~$X$. If~$X$ is
orbit-finite, then the uniformly finitely supported subsets of~$X$ are
precisely the finite subsets.

An important role in the technical development is played by the
\emph{abstraction functor} $[\Names](-)$ on~$\Nom$. For a nominal
set~$X$, $[\Names]X$ is defined as the quotient $\Names\times X/\sim$
of $\Names\times X$ by the equivalence relation~$\sim$ defined by
\begin{equation*}
  (a,x)\sim(b,y)\quad\text{iff}\quad(ca)\cdot x=(cb)\cdot y\quad\text{for $c\fresh(a,x,b,y)$}.
\end{equation*}
Equivalently, $(a,x)\sim(b,y)$ iff either $(a,x)=(b,y)$ or
$y=(ab)\cdot x$ and $b\fresh x$. We write $\langle a\rangle x$ for the
equivalence class of $(a,x)$ under~$\sim$. Thus, $\langle a\rangle x$
may be read as arising from~$x$ by binding the name~$a$. The
equivalence~$\sim$ then captures $\alpha$-equivalent renaming of the
bound name~$a$; in particular, note that by the alternative
description of~$\sim$, renaming~$a$ into~$b$ in $\langle a\rangle x$
is \emph{blocked} when $b\in\supp(x)$.

\subparagraph*{Nominal automata and register automata} The classical
notion of nondeterministic finite automaton can be transferred
canonically to the category of nominal sets, where finiteness
corresponds to orbit-finiteness; this gives rise to the notion of
\emph{nondeterministic orbite-finite automaton
  (NOFA)}~\cite{BojanczykEA14}. For simplicity, we use the
set~$\Names$ of names as the alphabet (more generally, one can work
with any orbit-finite alphabet). Then, a NOFA consists of an
orbit-finite set~$Q$ of states; an equivariant transition relation
$\Delta\subseteq Q\times\Names\times Q$; an equivariant initial
state~$q_0$ (or more generally a set of initial states); and an
equivariant set $F\subseteq Q$ of final states. NOFAs accept finite
words over~$\Names$, with the notions of run and acceptance defined
exactly as in the classical case.

NOFAs are equivalent to a flavour of \emph{nondeterministic register
  automata with nondeterministic reassignment}, roughly described as
follows (see~\cite{BojanczykEA14} for details). A register automaton
has a finite set~$Q$ of control states; a fixed finite number of
registers, which at any point in time can be either empty or contain a
letter (i.e.\ a name from~$\Names$); an initial control state~$q_0$; a
set~$F$ of final control states; and a transition relation~$\delta$
consisting of triples $(q,\phi,q')$ where $q,q'\in Q$ and where~$\phi$
is a boolean combination of equality constraints concerning register
contents before and after the transition and the current input
letter. For simplicity, we require that all registers are initially
empty. A \emph{configuration} of the automaton consists of a control
state and an assignment of letters to some of the registers (the
others are empty). A \emph{run} of the automaton is a sequence of
configurations starting in the initial configuration (consisting
of~$q_0$ and all registers empty) such that every next configuration
is justified by some transition $(q,\phi,q')$ in the sense that the
boolean combination~$\phi$ of equality constraints is satisfied by the
register contents in the configurations before and after the
transition and by the current input letter. A run is accepting if it ends
in a configuration over a final control state. The language of the
automaton is the set of accepted words. What this means is that the
automaton can, in each transition step, perform any combination of the
following actions as specified by the transition constraint~$\phi$:
store the current letter in a register; copy content among registers;
delete content from a register; nondeterministically store any name in
a register (nondeterministic reassigment). These actions are
conditioned on tests for equality or inequality among the registers
and the input letter. In the correspondence between NOFAs and register
automata, a configuration~$c$ of a register automaton becomes a state
in the corresponding NOFA, whose support contains precisely the
letters stored in the registers in~$c$.

Both nondeterminism and nondeterministic reassignment strictly
increase expressive power. For instance, the language~$L_2$ ``some
letter occurs twice'' can be accepted by a nondeterministic register
automaton but not by a deterministic one; and the language ``the
last letter has not been seen before'' can only be accepted using
nondeterministic reassignment. Also, the nondeterministic model is not
closed under complement: The complement of the above-mentioned
language~$L_2$ is the language ``all letters are distinct'', which
cannot be accepted by a nondeterministic register automaton.

\subparagraph*{Tree automata} Throughout, we fix a finite \emph{ranked
  alphabet} (or \emph{signature})~$\Sigma$, consisting of finitely
many \emph{(function) symbols}, each equipped with an assigned finite
\emph{arity}; we write $f/n\in\Sigma$ to indicate that~$f$ is a symbol
of arity~$n$ in~$\Sigma$. We assume that~$\Sigma$ contains at least
one \emph{constant}, i.e.\ a symbol of arity~$0$. The set
$\Terms(\Sigma)$ of (ground) \emph{$\Sigma$-Terms} is defined
inductively by stipulating that whenever $f/n\in\Sigma$ and
$t_1,\dots,t_n\in\Terms(\Sigma)$, then $f(t_1,\dots,t_n)$
in~$\Sigma$. Terms are regarded as a representation of trees, with
each node of the tree labelled with a symbol from~$\Sigma$ whose arity
determines the number of child nodes.

A (classical, finite-alphabet) \emph{nondeterministic top-down tree
  automaton (NFTA)} (e.g.~\cite{ComonEA08}) over~$\Sigma$ is a tuple
$A = (Q, q_0, \Delta)$ (we elide the fixed signature~$\Sigma$) where
$Q$ is a finite set of \emph{states}, $q_0 \in Q$ is the \emph{initial
  state}, and $\Delta$ is a set of \emph{rewrite rules} or
\emph{transitions} of the form
\begin{equation*}
  q(f(x_1,\dots,x_n)) \to f(q_1(x_1),\dots,q_n(x_n))
\end{equation*}
where $f/n \in \Sigma$, $q, q_1,\dots,q_n \in Q$, and the~$x_i$ are
variables; these rules thus manipulate \emph{extended terms}
containing automata states as unary symbols (with at most one
occurrence of such a symbol per tree branch). Much as usual
(e.g.~\cite{BaaderNipkow98,ComonEA08}), such rewrite rules are applied
within a surrounding context, and with the variables~$x_i$ substituted
with ground terms; we continue to write~$\to$ for the arising rewrite
relation on extended terms. The NFTA $A$ \emph{accepts} a
term $t$ if $q_0(t) \to^* t$. The \emph{language} $L(A)$ is
the set of terms (trees) accepted by~$A$. NFTAs have the
same expressiveness as \emph{bottom-up tree automata}, in which the
rewrite rules propagate automata states from the leaves to the
root. Deterministic top-down tree automata, on the other hand, are
strictly less expressive than NFTA. In the present work, our focus is
on nondeterministic top-down models. The \emph{inclusion problem} for
NFTA, i.e.\ to decide whether $L(A)\subseteq L(\mathcal{B})$
for given NFTAs~$A,\mathcal{B}$, is
\ExpTime-complete~\cite{Seidl90}.

\section{A Nominal View on Data Tree Languages}\label{sec:tree-lang}

\noindent We proceed to introduce the relevant notion of nominal tree
language, viewed as representing a data tree language as outlined in
\autoref{sec:prelims}. Trees are represented as a form of algebraic
terms, and carry data values at inner nodes.  Throughout the technical
development, we fix a finite algebraic \emph{signature}~$\Sigma$,
i.e.\ a finite set of operation symbols~$f,g,\dots$, each equipped
with a finite \emph{arity}. We write $f/n\in\Sigma$ to indicate
that~$f$ is an operation symbol of arity~$n$ in~$\Sigma$.

\begin{defn}[name=Terms, label=def:terms]
  We define the set $\Terms_\Names(\Sigma)$ of \emph{nominal
    $\Sigma$-terms}, or just \emph{($\Sigma$-)terms},~$t$ by the
  grammar
  \begin{equation}\label{eq:terms}
    t::= a.f(t_1,\dots,t_n)\mid \nu a.f(t_1,\dots,t_n)\qquad\qquad (t_1,\dots,t_n\in\Terms_\Names(\Sigma), f/n\in\Sigma, a\in\Names)
  \end{equation}
  For uniformity of notation, we occasionally write $\barNames$ for
  the set $\Names\cup\{\nu a\mid a\in\Names\}$; in particular, all
  terms then have the form $\gamma. f(t_1,\dots,t_n)$ with
  $\gamma\in\barNames$.
\end{defn}
\begin{remark}
  Like in the finite-alphabet case as recalled in
  \autoref{sec:prelims}, the base case of the above definition is the
  one where~$f$ is a constant (i.e.\ has arity~$0$). The case of words
  is recovered by taking~$\Sigma$ to consist of unary operations
  and an end-of-word constant. When viewing terms as trees, we see
  the operations of~$\Sigma$ as spanning the tree structure, with
  every node being labelled with an element of~$\barNames$.  We
  understand terms of the form $a.f(t_1,\dots,t_n)$ as being labelled
  with a free name~$a$, and terms of the form $\nu a.f(t_1,\dots,t_n)$
  as allocating a new name~$a$ with scope $f(t_1,\dots,t_n)$. In the
  latter case, the name~$a$ is bound by the $\nu$-operator, in the
  same style as in the $\pi$-calculus or in nominal Kleene
  algebra~\cite{GabbayCiancia11}, with formal definitions to be
  provided presently. (In work on nominal word automata with name
  allocation~\cite{SchroderEA17}, the notation $\newletter{a}$ has
  been used in place of~$\nu a$.) % The standard case of data words,
  % which are sequences of pairs each consisting of one letter from a
  % finite control alphabet and one letter from an infinite data
  % alphabet (e.g~\cite{Colcombet15}) is recovered as a special case
  % by
  % taking~$\Sigma$ to consist of one unary operation for each control
  % letter and an end-of-word constant.
\end{remark}

\begin{remark}[label=rem:binding-sig]
  For brevity of presentation, we have opted for a setup where every
  inner node either binds a name or carries a free name. A
  generalization that allows inner nodes not labelled with any name is
  encoded into the present setup by means of a fixed free dummy name
  that appears in place of the absent name. We thus do use this
  generalization in the examples. In particular, it allows for trees
  not containing any name, while terms according to
  \autoref{def:terms} always contain at least one name.
\end{remark}
The notion of free name informally used above is formally defined in
the expected way:
\begin{defn}
  The set $\FN(t)$ of \emph{free names} of a term~$t$ is defined
  recursively by the clauses
  \begin{align*}
    \FN(a.f(t_1,\dots,t_n))&=\{a\}\cup\FN(t_1)\cup\dots\cup\FN(t_n)\\
    \FN(\nu a.f(t_1,\dots,t_n))&=(\FN(t_1)\cup\dots\cup\FN(t_n))\setminus\{a\}.
  \end{align*}
\end{defn}
Contrastingly, we refer to the name~$a$ in a term
$\nu a.f(t_1,\dots,t_n)$ as a \emph{bound name}. A term~$t$ is
\emph{closed} if $\FN(t)=\emptyset$. The sets $\barNames$ and
$\Terms_\Names(\Sigma)$ become nominal sets under the expected actions
of~$G$, defined on $\barNames$ by $\pi\cdot a=\pi(a)$ and
$\pi\cdot\nu a=\nu\pi(a)$, and on~$\Terms_\Names(\Sigma)$ recursively
by
\begin{equation*}
  \pi \cdot (\delta. f(t_1,\dots,t_n)) = \pi(\delta). f(\pi \cdot t_1,\dots,\pi \cdot t_n).
\end{equation*}
\noindent From this view, we obtain the expected notion of
$\alpha$-equivalence of terms: % For purposes of the treatment using the
% abstraction functor $[\Names](-)$ (\autoref{sec:prelims}), it is
% worth noting two equivalent formulations of $\alpha$-equivalence. By
% the usual inductive definition,
\emph{$\alpha$-equivalence}~$\equiv_\alpha$ is the least congruence on
terms such that
$\nu a.f(t_1,\dots,t_n)\equiv_\alpha \nu b.f(t'_1,\dots,t'_n)$
whenever
$\langle a\rangle (t_1,\dots,t_n)=\langle b\rangle (t'_1,\dots,t'_n)$,
which means that in $\nu a.f(t_1,\dots,t_n)$,~$a$ can be renamed
into~$b$ provided that~$b$ does not occur in $t_1,\dots,t_n$ (by
temporary renaming of inner bound names, one can then also rename~$a$
into~$b$ if~$b$ is not free in $t_1,\dots,t_n$). We write $[t]_\alpha$
for the equivalence class of a term~$t$ under $\alpha$-equivalence. A
term~$t$ is \emph{clean} if all its bound names are mutually distinct
and not free in~$t$, and \emph{non-shadowing} if on every branch
of~$t$, all bound names are mutually distinct and not free in~$t$ (in
particular, no bound name is ever shadowed in~$t$). Clearly, every
term is $\alpha$-equivalent to a clean one (hence, a forteriori, to a
non-shadowing one).  % We give an equivalent
% recursive characterization in \autoref{def:recursiveAlpha} and prove
% the equivalence of both definitions in \autoref{lem:alphaEquiv}.

% begin{rem}
%  For the sake of readability, we refrain from working more generally
%  with \emph{binding signatures}~\cite{FioreEA99} in place of the
%  explicit combination of $\barNames$ and~$\Sigma$.
% \
% end{rem}
\begin{example}
  Under the extension where names are made optional
  (\autoref{rem:binding-sig}), we can represent $\lambda$-terms as
  terms for the signature
  $\{\mathsf{app}/2,\lambda/1,\mathsf{var}/0\}$. For instance, the
  $\lambda$-term $\lambda a.aa$ is represented as the $\Sigma$-term
  $\nu a.\lambda(\mathsf{app}(a.\mathsf{var},a.\mathsf{var}))$. (Of
  course, there are $\Sigma$-terms not corresponding to any
  $\lambda$-term, such as $\nu a.\mathsf{var}$; in our automaton
  model, it will be no problem to exclude such terms by just letting
  them get stuck.) Similarly, we can represent $\pi$-calculus
  expressions as terms for a suitable signature; we will return to
  this in \autoref{expl:pi}. The notion of $\alpha$-equivalence
  defined above is exactly the standard one in these examples.
\end{example}
\noindent As indicated in the introduction, the notion of
$\alpha$-equivalence determines how we interpret allocating a new name
as `reading a name' in a paradigm where we see bound names as
placeholders for arbitrary free names. For this purpose, we
distinguish between \emph{literal tree languages}, which consist of
terms and hence are subsets of $\Terms_\Names(\Sigma)$, and
\emph{alphatic tree languages}, which consist of $\alpha$-equivalence
classes of terms and hence are subsets of
$\Terms_\Names(\Sigma)\,/\equiv_\alpha$. (The latter generalize the
\emph{bar languages} used in work on nominal word
automata~\cite{SchroderEA17} but in the absence of the bar notation
$\newletter a$, that term seems no longer appropriate.) In both cases,
we say that a language is \emph{closed} if it consists of closed
(equivalence classes of) words only. For brevity, we restrict the
main line of the technical treatment to closed languages. The
notion of bound names representing free names is captured by the
function $d\nu$, which removes all occurrences of~$\nu$ from a term,
and is recursively defined by
\begin{equation*}
  d\nu(\nu a.f(t_1,\dots,t_n)) =
  d\nu(a.f(t_1,\dots,t_n))=a.f(d\nu(t_1),\dots,d\nu(t_n)).
\end{equation*}
Thus, $d\nu$ returns terms without name allocation~$\nu a$. We refer
to such terms as \emph{data trees}; they correspond essentially to the
notion of data tree found in the literature (which is often restricted
to binary trees for simplicity,
e.g.~\cite{KaminskiTan08,JurdzinskiLazic07}), with~$\Names$ serving as
the infinite alphabet of data values. We capture notions of freshness
by imposing different disciplines on variable management in
$\alpha$-renaming and subsequently applying~$d\nu$. Specifically, a
\emph{global freshness semantics}, a \emph{branchwise freshness
  semantics}, and a \emph{local freshness semantics} for alphatic tree
languages~$L$ are embodied in the operators~$\gfresh$,~$\bfresh$,
and~$\lfresh$, respectively, defined by
\begin{align*}
  \gfresh(L) & = \{d\nu(t)\mid t\text{ clean}, [t]_\alpha \in 
         L\} & \bfresh(L) & = \{d\nu(t)\mid t\text{ non-shadowing}, [t]_\alpha \in   L\}
  \\
  \lfresh(L) & = \{d\nu(t)\mid [t]_\alpha \in L\}.
\end{align*}
That is,~$N$ and~$B$ insist on clean or non-shadowing
$\alpha$-renaming, respectively, while~$D$ allows unrestricted
$\alpha$-renaming. The languages $\gfresh(L)$, $\bfresh(L)$, and $\lfresh(L)$ are what
we term \emph{data tree languages}, i.e.\ languages consisting only of
data trees. This makes one additional semantics in comparison to the
case of words~\cite{SchroderEA17} where, of course,~$N$ and~$B$
coincide.
\begin{example}\label{expl:semantics}
  Consider $\Sigma=\{f/2,k/0\}$ and the term
  \begin{equation*}
    t=\nu a.f(\nu b.f(a.k,b.k),\nu b.f(b.k,b.k)).
  \end{equation*}
  Then
  \begin{align*}
    \gfresh(\{t\})&=\{a.f(b.f(a.k,b.k),c.f(c.k,c.k))\mid a,b,c\in\Names, a\neq b, a\neq c,b\neq c\}\\
    \bfresh(\{t\})&=\{a.f(b.f(a.k,b.k),c.f(c.k,c.k))\mid a,b,c\in\Names, a\neq b, a\neq c\}\\
    \lfresh(\{t\})&=\{a.f(b.f(a.k,b.k),c.f(c.k,c.k))\mid a,b,c\in\Names, a\neq b\}.
  \end{align*}
  Thus,~$N$ instantiates bound names by free names that are fresh
  w.r.t.\ the entire tree, while~$B$ only enforces freshness w.r.t.\
  the current branch, and allows siblings to instantiate bound names
  by the same free name (i.e.\ allows $b=c$ in $\bfresh(\{t\}$).
  Finally,~$D$, enforces freshness only where $\alpha$-renaming is
  blocked. In the case of~$t$, renaming~$b$ into~$a$ is blocked in the
  left-hand subterm $\nu b.f(a.k,b.k)$ because of the free occurrence
  of~$a$, while there is no such occurrence in the right-hand subterm
  $\nu b.f(b.k,b.k)$ so that~$b$ can be renamed into~$a$ in that
  subterm; this is why $\lfresh(\{t\})$ requires $a\neq b$ but not
  $a\neq c$. Thus,~$N$ and~$B$ are variants of global freshness as
  found, for instance, in session automata~\cite{BolligEA14},
  while~$D$ is indeed a notion of local freshness: We will see later
  (\autoref{lem:suppEdge}) that the presence of the free name~$a$ in
  $\nu b.f(a.k,b.k)$ implies that at the time of reading~$\nu b$ in a
  run of a regular nominal tree automaton, the name~$a$ is still kept
  in memory, so~$D$ essentially enforces freshness w.r.t.\ currently
  stored names in the spirit of register automata models. As
  indicated in the introduction, we trade some of the expressiveness
  of register automata models for computational tractability. In our
  example, this is apparent in the right-hand subterm
  $\nu b.f(b.k,b.k)$, in which freshness of~$b$ w.r.t.~$a$ cannot be
  enforced under~$D$ because there is no free occurrence of~$a$; as a
  slogan,~$D$ enforces freshness w.r.t.\ stored names only if these
  are still expected to be seen again. In work on nominal word
  automata~\cite{SchroderEA17}, it is shown that this phenomenon
  relates to a lossiness property stating that register contents may
  be non-deterministically lost at any time.
\end{example}
\noindent It turns out that both variants of global freshness remain
essentially equivalent to the original alphatic language, in the sense
that they do not affect language inclusion:

\begin{lemma}[label=lem:injectiveN]
  Both~$N$ and~$B$ preserve and reflect language inclusion: For
  alphatic languages $L_1,L_2$, we have $L_1\subseteq L_2$ iff
  $\gfresh(L_1)\subseteq \gfresh(L_2)$ iff $\bfresh(L_1)\subseteq \bfresh(L_2)$.
\end{lemma}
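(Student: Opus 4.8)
The two implications ``$L_1\subseteq L_2$ implies $\gfresh(L_1)\subseteq\gfresh(L_2)$'' and ``$L_1\subseteq L_2$ implies $\bfresh(L_1)\subseteq\bfresh(L_2)$'' are immediate from the definitions, since both operators are visibly monotone: a data tree in $\gfresh(L_1)$ arises as $d\nu(t)$ for a clean~$t$ with $[t]_\alpha\in L_1\subseteq L_2$, and hence lies in $\gfresh(L_2)$ (symmetrically for~$\bfresh$). The substance of the lemma is therefore the two reflection directions, and the plan is to reduce both of them to a single combinatorial fact: that $d\nu$ is \emph{injective} on closed non-shadowing terms (and hence, a fortiori, on the smaller class of closed clean terms).

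Concretely, I would show that from a data tree $D=d\nu(t)$ with~$t$ closed and non-shadowing one can reconstruct~$t$ uniquely, by the rule: a node of~$D$ carrying a name~$a$ is the image of a binder $\nu a$ if and only if no proper ancestor of that node also carries~$a$; otherwise it is the image of a free occurrence of~$a$. Granting this, reflection follows uniformly. Suppose $\gfresh(L_1)\subseteq\gfresh(L_2)$ and let $[t]_\alpha\in L_1$; since $L_1$ is closed,~$t$ is closed, and I choose a clean representative~$t'$ (which exists by the remark that every term is $\alpha$-equivalent to a clean one). Then $d\nu(t')\in\gfresh(L_1)\subseteq\gfresh(L_2)$, so $d\nu(t')=d\nu(s)$ for some clean~$s$ with $[s]_\alpha\in L_2$; injectivity forces $s=t'$, whence $[t]_\alpha=[t']_\alpha=[s]_\alpha\in L_2$. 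The argument for~$\bfresh$ is verbatim the same, using non-shadowing representatives (and clean representatives are non-shadowing, so the very same reconstruction applies).

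The core step is to justify the reconstruction rule, i.e.\ to prove that it faithfully recovers the binders of any closed non-shadowing~$t$; this is where both hypotheses of the statement do their work. First, if a node~$v$ labelled~$a$ is a binder $\nu a$ in~$t$, then no proper ancestor of~$v$ can carry~$a$: a binder ancestor labelled~$a$ would put two $\nu a$ on the branch through~$v$, contradicting non-shadowing, while a free-occurrence ancestor labelled~$a$ would itself (by closedness) require a $\nu a$ strictly above it, again yielding two $\nu a$ on that branch. Conversely, if~$v$ labelled~$a$ is a free occurrence, then closedness guarantees a $\nu a$ strictly above~$v$ on its branch, so some proper ancestor carries~$a$. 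Hence a node is a binder precisely when it has no $a$-labelled proper ancestor, a criterion that depends only on~$D$; consequently the positions and names of all binders are determined by~$D$, and two closed non-shadowing terms with equal $d\nu$-image coincide.

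The hard part is pinning down this reconstruction and verifying that both side conditions are genuinely needed, which also explains the scope of the lemma. Closedness is what forces the root --- and, inductively, the topmost occurrence of each name along a branch --- to be a binder rather than a dangling free name; without it the reconstruction fails outright, since for a constant~$k$ the terms $\nu a.k$ and $a.k$ have the same image $a.k$ under~$d\nu$. The shadowing discipline is what guarantees that this topmost occurrence is the \emph{only} binder for~$a$ on the branch, so that all lower occurrences are unambiguously free references; without any shadowing restriction the same data tree could be read back with binders in several positions along one branch, producing $\alpha$-inequivalent terms. I would make both points explicit, and contrast them with the running example, which shows that the analogous statement fails for the local-freshness operator~$\lfresh$.
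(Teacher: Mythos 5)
Your proposal is correct and takes essentially the same route as the paper: the forward directions are dismissed as immediate, and both reflection directions are reduced to the paper's own key lemma, namely that $d\nu$ is injective on closed non-shadowing (hence a fortiori clean) terms, followed by picking a non-shadowing resp.\ clean representative exactly as in the paper (which, like you, tacitly uses the standing restriction to closed languages). Your proof of the injectivity via the explicit reconstruction criterion (a node labelled~$a$ is a binder iff no proper ancestor carries~$a$, justified from closedness and non-shadowing) is just a direct reformulation of the paper's argument by contradiction at a maximal position where the two terms differ, so the two proofs coincide in substance.
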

That is, for purposes of checking language inclusion, it does not
matter whether we consider alphatic languages, their global freshness
semantics, or their branchwise freshness semantics.

\section{Regular Nominal Tree Automata}\label{sec:rnta}

We cast our model of \emph{regular nominal tree automata (RNTAs)} as
an extension of \emph{regular nondeterministic nominal automata}
(RNNAs)~\cite{SchroderEA17}, which differ from NOFAs
(\autoref{sec:prelims}) in essentially two ways: Branching is
restricted to be finite (in NOFAs, the set of successors of a state
only needs to be finitely supported, as implied by equivariance of the
transition relation); this is partially made up for by including
\emph{bound transitions} which read bound names. RNTAs natively accept
alphatic tree languages, which as discussed in \autoref{sec:tree-lang}
may be seen as representing languages of data trees in a number of
ways differing w.r.t.\ notions of freshness: Under global or
branchwise freshness semantcs, RNTAs may be seen as a generalization
of session automata~\cite{BolligEA14}, while under local freshness
semantics, they will be seen to correspond to a subclass of
nondeterministic register tree automata~\cite{KaminskiTan08}
characterized by a lossiness condition (\autoref{rem:lossiness}).  As
indicated in the introduction, we thus incur a decrease in
expressiveness in comparison to the register model, which however buys
elementary complexity of inclusion checking.

\begin{defn}[name=Regular nominal tree automata, label=RNTA]
  A \emph{regular nominal tree automaton (RNTA)} over our fixed
  signature $\Sigma$ is a tuple
  \[
    A = (Q, \Delta, q_0)
  \]
  where $Q$ is a orbit-finite nominal set of \emph{states},
  $q_0 \in Q$ is the \emph{initial state} and $\Delta$ is an
  equivariant set of \emph{rewrite rules} or \emph{transitions} of the
  form
  \[
   q(\gamma.f(x_1,\dots,x_n)) \to \gamma.f(q_1(x_1),\dots,q_n(x_n))
  \]
  where $q_1,\dots,q_n \in Q$, $\gamma \in \overline{\mathbb{A}}$,
  $f/n \in \Sigma$, and the~$x_i$ are variables. When no confusion is
  likely, we just write
  $q(\gamma.f(x_1,\dots,x_n)) \to \gamma.f(q_1(x_1),\dots,q_n(x_n))$
  to indicate that
  $(q(\gamma.f(x_1,\dots,x_n)) \to
  \gamma.f(q_1(x_1),\dots,q_n(x_n)))\in\Delta$.  We impose two
  properties on $\Delta$:
  \begin{itemize}
  \item \emph{$\alpha$-invariance}: For
    $q,q_1,\dots,q_n,q'_1,\dots,q'_n\in Q$, if
    $\langle a\rangle (q_1,\dots,q_n) = \langle b\rangle
    (q'_1.\dots,q'_n)$, then
    \begin{align*}
       q(\nu a.f(x_1,\dots,x_n)) & \to \nu a.f(q_1(x_1),\dots,q_n(x_n)) \quad\text{implies}\\
       q(\nu b.f(x_1,\dots,x_n)) & \to \nu b.f(q'_1(x_1),\dots,q'_n(x_n)) .
    \end{align*}
  \item \emph{Finite branching up to $\alpha$-equivalence}: For
    $q \in Q$ and $f/n\in\Sigma$, the sets
    $\{(a, (q_1,\dots,q_n))$ $\mid
     q(a.f(x_1,\dots,x_n))\to a.f(q_1(x_1),\dots,q_n(x_n))\}$ and
    $\{\langle a\rangle(q_1,\dots,q_n)\mid q(\nu a.f(x_1,\dots,x_n))
    \to \nu a.f(q_1(x_1),\dots,q_n(x_n))\}$ are finite.
  \end{itemize}
  Like in the classical case (\autoref{sec:prelims}), the rewrite
  rules in~$\Delta$ may be applied within a surrounding context and
  with variables substituted by (ground) terms.  A state $q \in Q$
  \emph{accepts} a term $t$ if there exists a sequence of applications
  of the rewrite rules in~$\Delta$, called a \emph{run}, that starts
  from~$q(t)$ and ends in~$t$, symbolized as
  $q(t) \xrightarrow{\smash{*}} t$.  We define the \emph{literal tree
    language} $L(q)$ and the \emph{alphatic tree language}
  $L_\alpha(q)$ \emph{accepted} by~$q$ by
  \begin{equation*}
    L(q) = \{t \in \Terms_\Names(\Sigma)\,\mid \text{~$q$ accepts~$t$}\}
    \qquad
    L_\alpha(q) = \{[t]_\alpha\,\mid t\in L(q)\}
  \end{equation*}
  We put $L(A) = L(q_0)$ and $L_\alpha(A) = L_\alpha(q_0)$, i.e.\ the
  RNTA~$A$ \emph{accepts}~$t$ if its initial state
  accepts~$t$. Moreover,~$A$ \emph{accepts} a data tree~$s$
  \emph{under global, branchwise,} or \emph{local freshness semantics}
  if $s\in \gfresh(L_\alpha(A))$, $s\in \bfresh(L_\alpha(A))$, or
  $s\in \lfresh(L_\alpha(A))$, respectively (cf.\ \autoref{sec:tree-lang}).

  The \emph{degree} of~$A$ is the maximal cardinality of supports of
  states in~$A$.
\end{defn}
\noindent We think of the support of an RNTA state as consisting of
finitely many stored names. In examples, we typically write states in
the form
\begin{equation*}
q(a_1,\dots,a_n)
\end{equation*}
where~$q$ indicates the orbit and $a_1,\dots,a_n$ are stored
names. Thus, the degree of an RNTA corresponds morally to the number
of registers. As an important consequence of finite branching, stored
names can come about only by either inheriting them from a predecessor
state or by reading (i.e.\ binding) them:
\pagebreak
\begin{lemma}[label=lem:suppEdge]
  In an RNTA, we have the following properties:
  \begin{enumerate}
  \item If $q(a.f(x_1,\dots,x_n)) \to a.f(q_1(x_1),\dots,q_n(x_n))$,
    then $\supp(q_i)\cup\{a\}\subseteq \supp(q)$ for $i=1,\dots,n$.
  \item If
    $q(\nu a.f(x_1,\dots,x_n)) \to \nu a.f(q_1(x_1),\dots,q_n(x_n))$,
    then $\supp(q_i)\subseteq \supp(q)\cup\{a\}$ for $i=1,\dots,n$.
  \end{enumerate}
  
\end{lemma}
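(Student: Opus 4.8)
The plan is to reduce both items to a single auxiliary fact about nominal sets: if $S\subseteq\Names$ is finite and the $\Fix(S)$-orbit $\{\pi\cdot x\mid \pi\in\Fix(S)\}$ of an element~$x$ of a nominal set is \emph{finite}, then $\supp(x)\subseteq S$. I would prove this by contraposition. Assuming some $b\in\supp(x)\setminus S$, I swap~$b$ with names~$c$ drawn from the infinite set $\Names\setminus(S\cup\supp(x))$. Each transposition $(b\,c)$ lies in $\Fix(S)$ (as $b,c\notin S$), and since $\supp((b\,c)\cdot x)=(b\,c)\cdot\supp(x)$ contains~$c$ but not~$b$, distinct choices of~$c$ produce elements $(b\,c)\cdot x$ with distinct supports, hence distinct. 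This yields infinitely many elements in the orbit, contradicting finiteness.

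With this fact in hand, both items follow by combining equivariance of~$\Delta$ with the respective finite-branching condition. For item~(1), I view $(a,q_1,\dots,q_n)$ as an element of the nominal set $\Names\times Q^n$, whose support is $\{a\}\cup\bigcup_i\supp(q_i)$. Applying any $\pi\in\Fix(\supp(q))$ to the given free transition and using $\pi\cdot q=q$ produces the transition $q(\pi(a).f(x_1,\dots,x_n))\to\pi(a).f((\pi\cdot q_1)(x_1),\dots,(\pi\cdot q_n)(x_n))$, so the entire $\Fix(\supp(q))$-orbit of $(a,q_1,\dots,q_n)$ is contained in the finite set of free $f$-successors of~$q$ guaranteed by the first finite-branching condition. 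The auxiliary fact then gives $\{a\}\cup\bigcup_i\supp(q_i)\subseteq\supp(q)$, which is exactly the claim.

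For item~(2), the right object to track is the abstraction $\langle a\rangle(q_1,\dots,q_n)\in[\Names](Q^n)$, whose support is $\big(\bigcup_i\supp(q_i)\big)\setminus\{a\}$. Since the action on abstractions satisfies $\pi\cdot\langle a\rangle(q_1,\dots,q_n)=\langle\pi(a)\rangle(\pi\cdot q_1,\dots,\pi\cdot q_n)$, equivariance of~$\Delta$ again shows that for $\pi\in\Fix(\supp(q))$ the element $\pi\cdot\langle a\rangle(q_1,\dots,q_n)$ arises from a bound $f$-transition out of~$q$, and hence lies in the finite set supplied by the second finite-branching condition. Thus the $\Fix(\supp(q))$-orbit of $\langle a\rangle(q_1,\dots,q_n)$ is finite, and the auxiliary fact yields $\big(\bigcup_i\supp(q_i)\big)\setminus\{a\}\subseteq\supp(q)$; rearranging gives $\supp(q_i)\subseteq\supp(q)\cup\{a\}$ for each~$i$.

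I expect the main subtlety to lie in the auxiliary nominal lemma and in pinning down the support of the abstraction $\langle a\rangle(q_1,\dots,q_n)$ correctly; here $\alpha$-invariance serves only to make the set of bound successors well-behaved, whereas the support bound itself follows purely from equivariance together with finite branching. Everything else is routine bookkeeping with group actions.
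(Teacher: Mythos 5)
Your proof is correct, and it reaches the conclusion by a somewhat different decomposition than the paper's. The paper reasons about the successor set as a whole: it observes that the finite set $Z$ of free successors $(a,(q_1,\dots,q_n))$ (respectively, of abstractions $\langle a\rangle(q_1,\dots,q_n)$ for bound transitions) is uniformly finitely supported, that $\supp(q_i)\cup\{a\}=\supp(z)\subseteq\supp(Z)$ for each $z\in Z$ (the support of a finite set being the union of its elements' supports), and that $\supp(Z)\subseteq\supp(q)$ because $Z$ depends equivariantly on $q$ and equivariant maps do not extend supports. You instead work elementwise: you prove the kernel fact directly --- a finite $\Fix(S)$-orbit forces $\supp(x)\subseteq S$, via the swapping argument with fresh names $c\in\Names\setminus(S\cup\supp(x))$ --- and then note that equivariance of $\Delta$ together with $\pi\cdot q=q$ traps the $\Fix(\supp(q))$-orbit of the tuple (resp.\ of the abstraction, whose support is indeed $\bigl(\bigcup_i\supp(q_i)\bigr)\setminus\{a\}$) inside the finite successor set supplied by finite branching. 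What your route buys is self-containedness: the swapping lemma you prove is exactly what underlies both standard nominal facts the paper cites without proof, so your argument makes explicit what the paper's terse appendix proof leaves implicit, at the cost of a little extra length; the paper's version is shorter but leans on imported lemmas about uniformly finitely supported sets and equivariant dependence. One small sharpening of your closing remark: $\alpha$-invariance plays no role at all in this lemma, under either proof, since the bound finite-branching condition is already phrased in terms of abstractions.
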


\begin{corollary}[label=cor:suppFN]
  If state~$q$ accepts term~$t$, then $\FN(t) \subseteq \supp(q)$.
\end{corollary}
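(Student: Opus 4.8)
The plan is to prove this by structural induction on the term~$t$ -- equivalently, by induction on the length of the accepting run $q(t)\to^{*}t$ -- feeding each step into \autoref{lem:suppEdge}. The starting observation is that whenever~$q$ accepts a term $t=\gamma.f(t_1,\dots,t_n)$ with $\gamma\in\barNames$, the very first rewrite step of any accepting run must act at the root: it replaces $q(\gamma.f(x_1,\dots,x_n))$ by $\gamma.f(q_1(x_1),\dots,q_n(x_n))$ via some transition in~$\Delta$, with the variables~$x_i$ instantiated by the subterms~$t_i$. Since the state annotation~$q$ has thereby been consumed and no state may reappear below it on a branch, the remainder of the run splits into independent sub-runs $q_i(t_i)\to^{*}t_i$. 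Hence each~$q_i$ accepts~$t_i$, and the induction hypothesis $\FN(t_i)\subseteq\supp(q_i)$ is available for every~$i$.

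Two cases then arise according to the shape of the root label~$\gamma$. If $\gamma=a$ is a free name, the first item of \autoref{lem:suppEdge} gives $\supp(q_i)\cup\{a\}\subseteq\supp(q)$ for every~$i$; combining this with the induction hypotheses and the defining clause $\FN(a.f(t_1,\dots,t_n))=\{a\}\cup\bigcup_i\FN(t_i)$ yields $\FN(t)\subseteq\{a\}\cup\bigcup_i\supp(q_i)\subseteq\supp(q)$. If $\gamma=\nu a$ binds~$a$, the second item gives $\supp(q_i)\subseteq\supp(q)\cup\{a\}$, and the defining clause $\FN(\nu a.f(t_1,\dots,t_n))=\bigl(\bigcup_i\FN(t_i)\bigr)\setminus\{a\}$ removes exactly the name whose membership in~$\supp(q)$ we cannot guarantee: $\FN(t)\subseteq\bigl(\bigcup_i\supp(q_i)\bigr)\setminus\{a\}\subseteq\bigl(\supp(q)\cup\{a\}\bigr)\setminus\{a\}\subseteq\supp(q)$. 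In both cases the desired inclusion $\FN(t)\subseteq\supp(q)$ follows, closing the inductive step.

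The base case is that of a constant $f/0$, where $t=a.f()$ or $t=\nu a.f()$; here there are no subterms, so the inductive step degenerates and the only thing to verify is that a leaf transition $q(a.f())\to a.f()$ forces $a\in\supp(q)$ (for $\nu a.f()$ the free-name set is empty and nothing is to show). I expect the only genuine point of care, rather than a real obstacle, to be this nullary instance: the quantifier ``for $i=1,\dots,n$'' in the first item of \autoref{lem:suppEdge} is vacuous when $n=0$, so one must read off the name component $\{a\}\subseteq\supp(q)$ separately -- either as an explicit part of the lemma statement or by the same equivariance-plus-finite-branching argument that proves it. Beyond that, the bound-name case merely requires checking that the $\setminus\{a\}$ in the definition of~$\FN$ aligns precisely with the $+\{a\}$ slack of the second item, so that the bound name is discharged and does not leak into the support bound. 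Everything else is routine union-chasing, and no $\alpha$-equivalence reasoning enters, since the statement concerns literal acceptance of the concrete term~$t$.
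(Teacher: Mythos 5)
Your proof is correct and is essentially the argument the paper intends: the corollary is stated as an immediate consequence of \autoref{lem:suppEdge}, obtained by exactly your induction on the accepting run, splitting the run at the root transition and combining the two items of the lemma with the defining clauses of $\FN$. Your side remark about the nullary case is a fair reading of the letter of \autoref{lem:suppEdge} (the quantifier over $i=1,\dots,n$ is vacuous for constants), and your resolution is the right one, since the paper's proof of that lemma in fact establishes $a\in\supp(q)$ for every free transition, including those for constants.
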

\begin{remark}\label{rem:closed}
  For brevity, we restrict the further technical treatment to the case
  where \emph{the initial state has empty support}, which by
  \autoref{cor:suppFN} implies that \emph{the accepted language is
    closed}. We also assume this without further mention in the
  examples, with the possible exception of the dummy name needed in
  examples with unlabelled nodes (cf.\ \autoref{rem:binding-sig}), in
  which the dummy name is assumed to be in the support of all states.
\end{remark}

\begin{example}\label{expl:rnta}
  Let $\Sigma=\{f/2,k/0\}$ (so $\Sigma$-terms are just
  $\barNames$-labelled binary trees).
  \begin{enumerate}[wide]
  \item \label{item:universal} The universal data tree language, i.e.\
    the language consisting of all (non-empty, cf.\
    \autoref{rem:binding-sig}) data trees, is accepted under local
    freshness semantics by the RNTA with only one state~$q$ and
    transitions $q(\nu a.f(x,y))\to \nu a.f(q(x),q(y))$,
    $q(\nu a.k)\to\nu a.k$. On the other hand, it is easy to see that
    the universal data tree language cannot be accepted by an RNTA
    under global or branchwise freshness semantics. Under the latter
    semantics, the above RNTA accepts the language of all data trees
    in which all names are distinct or in which all names found on the
    same branch are distinct, respectively.
  \item The data tree language `the letter at the root of the tree
    (which moreover is not a leaf) reappears in all leaves, but not in
    any other node of the tree' is accepted under local freshness
    semantics by the RNTA with states $q_0$, $q_1(a)$ ($a\in\Names$)
    and transitions
    \begin{gather*}
      q_0(\nu a.f(x,y))   \to \nu a. f(q_1(a)(x),q_1(a)(y))  \\
      q_1(a)(\nu b.f(x,y))  \to \nu b. f(q_1(a)(x),q_1(a)(y)) \qquad
      q_1(a)(a.k)   \to a.k
    \end{gather*}
    where we mean this and all further examples to be implicitly closed
    under equivariance and $\alpha$-invariance.  (Regarding notation,
    read $q_1(a)(y)$ as `state $q_1(a)$ processing term~$y$'.)
  \item The data tree language `some letter appears twice' (which, in
    analogy to the word case~\cite{BojanczykEA14,SchroderEA17}, cannot
    be accepted by a deterministic register tree automaton) is
    accepted by the RNTA with states $q_0,q_1(a),q_2$ ($a\in\Names$)
    and transitions
    \begin{align*}
      q_0(\nu a.f(x,y)) & \to f(q_0(x),q_0(y))
      &
        q_0(\nu a.f(x,y)) & \to f(q_1(a)(x),q_1(a)(y))\\
      q_1(a)(a.f(x,y))& \to a.f(q_2(x),q_2(y)) 
    \end{align*}
    and transitions ensuring that~$q_2$ accepts the universal data
    tree language as in item~\ref{item:universal}. It is easy to see
    that the complement of this languages, while acceptable under
    global freshness semantics (item~\ref{item:universal}) cannot be
    accepted by an RNTA under local freshness semantics.
\end{enumerate}
\end{example}
\begin{example}[Structured data]
  We use
  $\Sigma = \{\texttt{!elem}/2,\,\texttt{\#data}/1,\,\texttt{eof}/0
  \}$ to support an XML-like syntax for structured data. We want to
  recognize the language of $\Sigma$-trees where every occurrence of
  $!\texttt{elem}$ is properly closed by $\texttt{eof}$ in the subtree
  below it, at a leaf as far to the left in the tree as possible under
  the policy that later occurrences of $!\texttt{elem}$ are closed
  further to the left. Occurrences of $\texttt{eof}$ and
  $!\texttt{elem}$ are matched by binding a name at $!\texttt{elem}$
  and labelling the corresponding $\texttt{eof}$ with this name
  (moreover, one unlabelled $\texttt{eof}$ closes the entire term), as
  in the term
  \pagebreak
\begin{lstlisting}
      $\nu$a.!elem(
          $\nu$b.#data(
          $\nu$c.!elem(
            $\nu$d.#data(
            $\nu$d.#data(
            $\nu$d.#data(
            c.eof))),
          $\nu$b.#data(
          a.eof),
      eof)))
\end{lstlisting}
  Under local freshness semantics, the data elements~$b$
  in the $\texttt{\#data}$ fields of this term can be any names
  except~$a$, similarly for~$d$ and~$c$.  This language can accepted
  with the RNTA with states $q_0, q_1(a), q_1(c)$ ($a,c\in\Names$) and
  transitions
  \begin{align*}
  q_0(\nu a.\texttt{!elem})(x_1, x_2) &\to \nu a.\texttt{!elem}(q_1(a)x_1,q_0x_2 )\\
  q_0(\nu b.\texttt{\#data})(x_1) &\to \nu b.\texttt{\#data}(q_0x_1)&
  q_0(\texttt{eof}) &\to \texttt{eof}\\
  q_1(a)(\nu c.\texttt{!elem})(x_1, x_2) &\to \nu c.\texttt{!elem}(q_1(c)x_1,q_1(a)x_2 )\\
  q_1(a)(\nu d.\texttt{\#data})(x_1) &\to \nu d.\texttt{\#data}(q_1(a)x_1)&
  q_1(a)(a.\texttt{eof}) &\to a.\texttt{eof}.%\\
  %q_1(c)(\nu a.\texttt{!elem})(x_1, x_2) &\to \nu a.\texttt{!elem}(q_1(a)x_1,q_1(c)x_2 )%\\
  %q_1(c)(\nu d.\texttt{\#data})(x_1) &\to \nu d.\texttt{\#data}(q_1(c)x_1)%&
  %q_1(c)(c.\texttt{eof}) &\to c.\texttt{eof}\\
  \end{align*} % be closed under $\alpha$-invariance. Then the RNTA $A = (\Sigma, Q, \Delta, q_0)$ accepts all terms 
  % where each $\texttt{!elem}$ is properly closed with an corresponding $\texttt{eof}$ carrying the same (bound) name in the second subterm.
  Notice here that although every state stores at most one name, the
  automaton is able to track an unbounded number of $\mathtt{!elem}$
  markers as it effectively creates copies of itself when reading an
  input tree.
  \end{example}

\begin{example}[$\pi$-Calculus expressions]\label{expl:pi}
  % We use
  % $\Sigma =
  % \{\mathit{par}/2,\,\mathit{new}/1,\,\mathit{wr}/1,\,\w\mathit{ch}/1,\,0/0
  % \}$ to represent the \emph{syntax} (only!) of a small fragment of
  % the $\pi$-calculus, with $\mathit{par}$ standing for parallel
  % composition, $\mathit{new}$ for generation of new name, and
  % $a.\mathit{ch}(b.\mathit{wr}(x))$ indicating that output~$b$ is sent
  % on channel~$a$, then continuing with~$x$. The language of all
  % process expressions consisting of finitely many parallel copies of
  We use $\Sigma = \{\mathit{par}/2,\,\mathit{rw}/1,% \,\mathit{!}/1,
  \,\mathit{ch}/1,\,0/0 \}$, $\mathbb{A} = \{\mathit{\bot},a,b,c\dots\}$
  to represent the syntax (only!) of a small fragment of the
  $\pi$-calculus, with $\mathit{par}$ standing for parallel
  composition,
  % $\mathit{new}$ for generation of new name
  and with $\mathit{ch}$ and $\mathit{rw}$ working in combination to
  represent writing or reading on a channel. Here, we model reading
  and allocation of channel names natively as name binding:
  $a.\mathit{ch}$ communicates on an existing channel~$a$,
  and~$\nu a.\mathit{ch}$ on a newly allocated channel~$a$. Similarly,
  $a.\mathit{rw}$ writes~$a$, while $\nu a.\mathit{rw}$ reads a
  value~$a$. E.g., $\nu a.\mathit{ch}(\nu b.\mathit{wr}(x))$ reads~$b$
  from a newly allocated channel~$a$, and continues with~$x$. Let~$L$
  be the language of all $\Sigma$-terms that are parallel composites
  of $k\ge 1$ processes that each read a name~$b$ from a newly
  allocated channel~$a$ and then may, maybe repeatedly, read a new
  name from~$b$ and use that name as the input channel in the next
  round, before terminating ($0)$. This language is accepted by the
  RNTA with states $q_0$, $q_1$, $q_2(a)$ ($a\in\Names$) and
  transitions
  \begin{align*}
    q_0(\mathit{par}(x,y))&\to \mathit{par}(q_0(x),q_0(y))
    & q_0(\nu a.\mathit{ch}(x)) & \to \nu a.\mathit{ch}(q_1(x))\\
    q_1(\nu a.\mathit{rw}(x)) & \to \nu a.\mathit{rw}(q_2(a)(x))
     & q_2(a)(a.\mathit{ch}(x)) & \to a.\mathit{ch}(q_1(x))\\
     q_2(a)(0) & \to 0.
  \end{align*}
  (Notice that the right hand transitions forget the channel name once
  the channel command has been processed; the name is no longer
  needed, as every channel is used only once.)
%   $Q = \mathbb{A} \cup \mathbb{A}^2$, let further
%   \begin{align*}
% %  \begin{split}
%   \Delta = \{q_0(\mathit{par}(x_1, x_2)) &\to \mathit{par}(q_0(x_1),q_0(x_2) )\\
% %  q_0(0) &\to 0\\
% %  q_0(\mathit{!})(x_1) &\to \mathit{!}(q_0x_1)\\
%   q_0(\nu a.\mathit{ch})(x_1) &\to \nu a.\mathit{ch}(q_1(a)(x_1))\,\\
%   \\
%   a(\mathit{par})(x_1, x_2) &\to \mathit{par}(\mathit{a}x_1,\mathit{a}x_2 )\\
%   a(0) &\to 0\\
% %  a(\mathit{!})(x_1) &\to \mathit{!}(\mathit{a}x_1)\\
%   a(\nu b.\mathit{rw})(x_1) &\to \nu b.\mathit{rw}(\mathit{ab}x_1)\\
%   \\
%   \mathit{ab}(\mathit{par})(x_1, x_2) &\to \mathit{par}(\mathit{ab}x_1,\mathit{ab}x_2 )\\
%   \mathit{ab}(0) &\to 0\\
% %  \mathit{ab}(\mathit{!})(x_1) &\to \mathit{!}(\mathit{ab}x_1)\\
%   \mathit{ab}(b.\mathit{rw})(x_1) &\to b.\mathit{rw}(\mathit{ab}x_1)\\
%   \mathit{ab}(\nu b.\mathit{rw})(x_1) &\to \nu b.\mathit{rw}(\mathit{ab}x_1)\}
%   \end{align*} be closed under $\alpha$-invariance.
%   Then the RNTA $A = (\Sigma, Q, \Delta, \mathit{\bot})$ accepts all terms that possibly open a channel~$a$, write or read name $b$ on~$a$, and then stop~($0$).
\end{example}

\begin{remark}
  In the paradigm of universal coalgebra~\cite{Rutten00}, RNTAs may be
  viewed as coalgebras for the functor~$F$ given by
  \begin{equation}\textstyle
    \label{eq:functorRNTA}
    FX =  \powufs(\sum_{f/n\,\in\,\Sigma}(\Names\times X^n+[\Names] X^n))\text{.}
  \end{equation}
\end{remark}

\section{Name Dropping}\label{sec:name-dropping}

The key to the algorithmic tractability of name-allocating automata
models in general~\cite{SchroderEA17,UrbatEA21,HausmannEA21} is to
ensure that the literal language of an automaton is closed under
$\alpha$-equivalence, so that only boundedly many names need to be
considered in inclusion checking. The problem to be overcome here is
that this property does not hold in general, and needs to be enforced
in a modification of the automaton that preserves the alphatic
language. Specifically, the problem comes about by extraneous names
that do not occur in the remainder of a given word to be processed but
do still occur in the relevant successor state, thus blocking the
requisite $\alpha$-renaming. As a simple example, when an automaton
state~$q$ is processing $\nu a.f(t))$ for $f/1\in\Sigma$ and we have a
matching transition $q(\nu a.f(x))\to \nu a. f(q'(x))$, then it may
happen that $b\notin\FN(t)$, so that~$a$ may be $\alpha$-equivalently
renamed into~$b$ in $\nu a.f(t))$, but $a\in\supp(q')$ so that $a$
cannot be $\alpha$-equivalently renamed into~$b$ in $\nu
a. f(q'(x))$. The solution to this is to extend the automaton by
states that come about by dropping some of the names from the support
of previous states~\cite{SchroderEA17}; in the example, a state~$q''$
that has~$b$ removed from its support but otherwise behaves like~$q'$
will allow for the requisite $\alpha$-renaming of the transition into
$\nu a. f(q''(x))$, and will still be able to accept the remaining
term~$t$ since $b\notin\FN(t)$. We proceed to lay out the details of
this construction, which we dub the \emph{name-dropping modification}.

Following work on nominal Büchi automata~\cite{UrbatEA21}, we first
transform the automaton into one whose state set~$Q$ forms a
\emph{strong} nominal set; we do not need the original definition of
strong nominal set~\cite{TzevelekosThesis} but instead use the
equivalent description~\cite{PetrisanThesis} of strong nominal sets as
being those of the form $\sum_{i\in I}\Names^{\#X_i}$ where the~$X_i$
are finite sets and $\Names^{\#X_i}$ denotes the nominal set of total
injective maps $X_i\to\Names$. We generally write elements of sums
like the above as pairs $(i,r)$, in this case consisting of $i\in I$
and
$r\in\Names^{\#X_i}$. % i.e.\ one where for We first note that via an
% automaton whose states form a strong nominal set. As a matter of
% fact, for each RNTA we can show that there exists an RNTA whose
% states form a strong nominal set and that accepts the same language
% (see \autoref{autoStrong}).
Strong nominal sets thus materialize the intuition that the states of
a nominal automaton consist of a control state (the index~$i$ in the
above sum) and a store configuration assigning names to registers in a
duplicate-free manner.
% properties, they do even more allow for building a good intuition
% for how to interpret the states of an automaton: Strong nominal sets
% are indeed isomorphic to sums of the form
% $\sum_{i\in I}\Names^{\#X_i}$, where $\Names^{\#X_i}$ denotes the
% set of total injective maps from $I$-indexed sets $X_i$ to
% $\Names$. We can view this situation as a set of registers that hold
% distinct names from the alphabet. One state would then correspond to
% a single particular configuration of such a set of registers.

\begin{lemma}[label=autoStrong]
  For every RNTA~$A$, there exists an RNTA~$A'$ whose states form a
  strong nominal set such that $A$ and~$A'$ accept the same literal
  language.
\end{lemma}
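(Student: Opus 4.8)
The plan is to transform an arbitrary RNTA into one whose state set is a strong nominal set, while carefully preserving the literal language (not just the alphatic one). The key structural fact I would exploit is the characterization of strong nominal sets as sums $\sum_{i\in I}\Names^{\#X_i}$, i.e.\ as disjoint unions of sets of injective register assignments. So the goal is to replace each state $q\in Q$ by a control-state-plus-injective-assignment pair that carries exactly the same information.

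First I would recall that any orbit-finite nominal set $Q$ admits a surjective equivariant map $e\colon S\twoheadrightarrow Q$ from a strong nominal set $S$; this is a standard fact (strong nominal sets are projective, or: every orbit is a quotient of some $\Names^{\#X_i}$ by a finite stabilizer group, and one takes $S$ to be the disjoint union of the covering sets $\Names^{\#X_i}$, one per orbit). Concretely, for each orbit of $Q$ pick a representative $q$ with $\supp(q)=\{a_1,\dots,a_m\}$; the injective tuples $(b_1,\dots,b_m)\in\Names^{\#\{1,\dots,m\}}$ map onto the orbit via $(b_1,\dots,b_m)\mapsto \pi\cdot q$ where $\pi$ sends $a_j\mapsto b_j$. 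This $e$ is equivariant and surjective, and crucially $\supp(e(s))\subseteq\supp(s)$ for every $s\in S$.

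Next I would transport the automaton structure along $e$. Define $A'$ to have state set $S$, initial state any $s_0$ with $e(s_0)=q_0$, and a transition in $A'$ precisely when its $e$-image is a transition in $A$: for a free-name transition, put $s(a.f(\vec x))\to a.f(s_1(x_1),\dots,s_n(x_n))$ in $\Delta'$ whenever $e(s)(a.f(\vec x))\to a.f(e(s_1)(x_1),\dots,e(s_n)(x_n))$ lies in $\Delta$, and analogously for bound transitions $\nu a.f$. I would close $\Delta'$ under equivariance and under $\alpha$-invariance by fiat, which is harmless since $\Delta$ already has these properties and $e$ is equivariant. The main thing to check is that $A'$ is still a bona fide RNTA: orbit-finiteness of $S$ is immediate from the finite sum presentation, $\alpha$-invariance holds by construction, and finite branching up to $\alpha$-equivalence is inherited because each $s$ has only finitely many transitions over each image of the (finitely many) transitions available to $e(s)$ in $A$ --- here one uses that $e$ restricted to each $\alpha$-equivalence class of successor tuples is finite-to-one, which follows from orbit-finiteness.

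The heart of the argument is then showing $L(A')=L(A)$, which I would prove by establishing $L(s)=L(e(s))$ for all $s\in S$ by induction on the term $t$ being accepted. The ``$\subseteq$'' direction is routine: any run of $A'$ projects along $e$ to a run of $A$, using that $e$ is equivariant so that it respects the $\alpha$-structure of bound transitions. The ``$\supseteq$'' direction is the main obstacle and is where surjectivity and support-nonincrease are used: given a run of $A$ on $t$ starting from $e(s)$, I need to lift each successor state $q_i\in Q$ back to some $s_i\in S$ with $e(s_i)=q_i$ such that the lifted transition is genuinely in $\Delta'$ and the induction hypothesis applies. Surjectivity of $e$ supplies \emph{some} preimage; the delicate point is choosing it \emph{coherently} with the bound-name structure so that an $\alpha$-invariant transition is matched --- concretely, when the $A$-transition reads $\nu a$ and $a\in\supp(q_i)$, the chosen $s_i$ must have $a$ in its register assignment in a position compatible with $\langle a\rangle$-equivalence. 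I expect to handle this by using $\alpha$-invariance of both $\Delta$ and $\Delta'$ to normalize the bound name and then invoke surjectivity of $e$ on the resulting tuple, noting that $\supp(e(s_i))\subseteq\supp(s_i)$ lets the support bounds of \autoref{lem:suppEdge} transfer. Once each step lifts, splicing the lifted steps yields an accepting $A'$-run on the same literal term $t$, completing the induction and hence the lemma.
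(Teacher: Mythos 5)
Your proposal is correct, and at its core it follows the same strategy as the paper: present the orbit-finite state set as the image of an equivariant surjection $e\colon S\twoheadrightarrow Q$ from a strong nominal set and lift the automaton structure along~$e$. The difference lies in how the lifting is carried out. The paper's proof is abstract: following the Büchi-RNNA argument of Urbat et al., it suffices to check that the RNTA type functor of \autoref{eq:functorRNTA} preserves supp-nondecreasing quotients (which holds because coproducts preserve them and the alphabet is strong); the coalgebra structure $\gamma\colon Q\to FQ$ then lifts to some $\beta\colon S\to FS$ making~$e$ a surjective coalgebra homomorphism, and literal-language equality follows since~$e$ acts as a functional bisimulation. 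You instead build the lifted automaton explicitly as the \emph{maximal} lifting: a rewrite rule is in~$\Delta'$ iff its $e$-image is in~$\Delta$. This choice actually dissolves the ``delicate point'' you worry about for bound transitions --- with the full preimage, \emph{any} choice of preimages $s_i\in e^{-1}(q_i)$ yields a rule of~$\Delta'$ with the bound name carried literally, so no $\alpha$-normalization is needed; likewise your closure under $\alpha$-invariance adds nothing, since the preimage relation is already $\alpha$-invariant (using $\supp(e(s))\subseteq\supp(s)$). It also sidesteps the equivariant-choice-of-preimage subtlety that the abstract lifting resolves via stabilizers of strong nominal sets. The price is that finite branching must be verified by hand, and here your justification (``finite-to-one \dots follows from orbit-finiteness'') is the one loose seam: the precise reason is that your covering is support-\emph{preserving} ($\supp(e(s))=\supp(s)$, not merely~$\subseteq$), so each fiber of~$e$ consists of elements of a single orbit sharing a fixed finite support, of which there are at most $m!$ --- the very fact the paper invokes in step~3 of the proof of \autoref{th:dec} --- and $\alpha$-classes of liftings collapse under swaps of fresh names because freshness for $q_i$ implies freshness for~$s_i$. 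In sum, the paper's route is shorter and modular (reusable for any functor with the preservation property, e.g.\ in the Büchi setting), while yours is self-contained, makes the control-state/register reading of strong nominal sets explicit, and proves the language equality by a direct induction on terms; both are sound.
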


% As for RNTA based on strong nominal sets, the set of states can be
% perceived as configurations of registers: While strong nominal sets
% are isomorphic to total injective maps - viewed as sets of registers
% that each duplicate-freely saves an element of the alphabet -
\noindent Our name-dropping modification will now come about by
dropping the requirement that each register is necessarily
occupied. This amounts to working with \emph{partial} injective maps
$r\colon X_i\pfun\Names$, with undefinedness (denoted as $r(x)=\bot$)
indicating that a register is currently empty. We first introduce
notation for restricting such partial maps by dropping some of the
names:

\begin{defn}
  Let~$X$ be a finite set. We write $\Names^{\$X}$ for the set of
  partial injective mas $X\pfun\Names$. Let $r\in\Names^{\$X}$, and
  let $N \subseteq \supp(r)=r[X]$.  Then the partial injective map
  $\restr{r}{N}\in\Names^{\$X}$ defined by $\restr{r}{N}(x)=r(x)$ if
  $r(x)\in N$ and $\restr{r}{N}(x)=\bot$ otherwise is the
  \emph{restriction} of $r$ to $N$ (and $r$ is an \emph{extension}
  of~$\restr{r}{N}$).
\end{defn}
%For the least finite support, we then have $\supp(\restr{r}{N}) = N$.

% \begin{remark}
% By definition, every $N \subseteq \Names$ induces a valid restriction $\restr{r}{N}$ of
% the partial injective map $r$.
% Thus, if we quantify over $\restr{r}{N}$, then we inherently quantify over $N$.
% \end{remark}

\begin{defn}[name=Name-dropping modification, label=def:nameDropMod]
  \sloppy Let $A = (Q, \Delta, q_0)$ be an RNTA such that
  $Q = \sum_{i\in I}\Names^{\#X_i}$ is a strong nominal set. For
  $q=(i,r)\in Q$, we write $\restr{q}{N}=(i,\restr{r}{N})$.  Then the
  \emph{name-dropping modification} of $A$ is the RNTA
  $A_\bot = (Q_\bot, \Delta_\bot, q_0)$ where
  \begin{enumerate}
      \item $Q_\bot = \sum_{i\in I}\Names^{\$X_i}$;
      \item for all $q,q'_1,\dots,q'_n\in Q$, $N\subseteq\supp(q)$,
        $N_i\subseteq\supp(q'_i)\cap N$ ($i=1,\dots,n$), and $a\in N$,
        whenever
        $q(a.f(x_1,\dots,x_n)\to a.f(q'_1(x_1),\dots,q'_n(x_n))$
        in~$A$,, then
        $\restr{q}{N}(a.f(x_1,\dots,x_n))\to
        a.f(\restr{q'_1}{N_1}(x_1),\dots,\restr{q'_n}{N_n}(x_n))$
        in~$A_\bot$; and

      \item for all $q,q'_1,\dots,q'_n\in Q$, $N\subseteq\supp(q)$,
        $a\in\Names$, and $N_i\subseteq\supp(q'_i)\cap (N\cup\{a\})$
        ($i=1,\dots,n$), whenever
        $q(\nu a.f(x_1,\dots,x_n))\to \nu
        a.f(q'_1(x_1),\dots,q'_n(x_n))$ in~$A$ and
        $\langle a\rangle \restr{q'_i}{N_i} = \langle b\rangle q_i''$,
        $i=1,\dots,n$, then
        $\restr{q}{N}(\nu b.f(x_1,\dots,x_n))\to \nu
        b.f(q''_1(x_1),\dots,q''_n(x_n))$ in $A_\bot$.
  \end{enumerate}
\end{defn}
\noindent Notice that clauses defining the transition relation
on~$A_\bot$ are only implications: $A_\bot$ inherits transitions
from~$A$ as long as these are consistent with \autoref{cor:suppFN},
and bound transitions in~$A_\bot$ are subsequently closed under
$\alpha$-invariance. \autoref{lem:bijectiveChar} gives
a full description of the arising transitions in the name-dropping
modification.

\begin{lemma}[label=lem:bijectiveChar]
  Let $A = (Q, \Delta, i)$ be an RNTA with $Q$ strong, and let
  $A_\bot = (Q_\bot, \Delta_\bot, i)$ be its name-dropping
  modification.
  \begin{enumerate}
  \item If
    $\restr{q}{N}(a.f(x_1,\dots,x_n))\to
    a.f(q_1(x_i),\dots,q_n(x_n))$ in $A_\bot$ for some
    $q,q_1,\dots,q_n\in Q$ and $N\subseteq\supp(q)$, then
    each~$q_i$ has the form $q_i=\restr{q_i'}{N_i}$ for some
    $q_i'\in Q$, $N_i\subseteq\supp(q)\cap N$ such that
    $q(a.f(x_1,\dots,x_n))\to a.f(q'_1(x_1),\dots,q'_n(x_n))$ in
    $A$.

  \item If
    $\restr{q}{N}(\nu a.f(x_1,\dots,x_n))\to \nu
    a.f(q_1(x_i),\dots,q_n(x_n))$ in $A_\bot$ for some
    $q,q_1,\dots,q_n\in Q$ and $N\subseteq\supp(q)$, then for each
    $q_i$ there is $q'_i$ and
    $N_i\subseteq\supp(q'_i)\cap(N\cup\{b\})$ such that
    $q(\nu a.f(x_1,\dots,x_n))\to \nu
    b.f(q'_1(x_i),\dots,q'_n(x_n))$ in $A$ and
    $\langle b\rangle\restr{q'_i}{N_i}=\langle a\rangle q_i$.
  \end{enumerate}
\end{lemma}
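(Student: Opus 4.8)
The plan is to prove both items by \emph{inverting} the generating clauses of \autoref{def:nameDropMod}. The key observation is that $\Delta_\bot$ is obtained by closing the transitions explicitly produced by clauses~(2) and~(3) under equivariance, with clause~(3) already folding in the closure under $\alpha$-invariance demanded of an RNTA. I would first check that neither closure operation creates transitions outside the syntactic shapes exhibited by clauses~(2) and~(3); in particular, since these closures preserve the free/bound distinction of the label, every free transition of $A_\bot$ can only stem from clause~(2) and every bound transition only from clause~(3). For equivariance this is routine: the $G$-action commutes with the restriction operation, i.e.\ $\pi\cdot\restr{q}{N}=\restr{\pi\cdot q}{\pi[N]}$, as well as with $\supp$ and with the abstraction $\langle-\rangle$, while $\Delta$ is itself equivariant; hence conjugating a clause~(2) instance by $\pi$ again matches clause~(2) with data $\pi\cdot q,\pi[N],\pi\cdot q_i',\pi[N_i],\pi(a)$, and symmetrically for clause~(3). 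For clause~(3) the $\alpha$-renaming of the bound name is already internalized in the side condition $\langle a\rangle\restr{q_i'}{N_i}=\langle b\rangle q_i''$, so no further bound transitions arise. Consequently every free transition of $A_\bot$ literally has the form produced by clause~(2) and every bound transition the form produced by clause~(3).

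Item~1 then follows by reading off clause~(2): a free transition $\restr{q}{N}(a.f(\dots))\to a.f(q_1(x_1),\dots,q_n(x_n))$ in $A_\bot$ arises from an $A$-transition $q(a.f(\dots))\to a.f(q_1'(x_1),\dots,q_n'(x_n))$ with $a\in N$ and targets $q_i=\restr{q_i'}{N_i}$ for some $N_i\subseteq\supp(q_i')\cap N$. To sharpen this to the asserted bound $N_i\subseteq\supp(q)\cap N$ I invoke \autoref{lem:suppEdge}, item~1, which gives $\supp(q_i')\subseteq\supp(q)$ and hence $\supp(q_i')\cap N\subseteq\supp(q)\cap N$. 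Item~2 is the analogous inversion of clause~(3), with the bookkeeping for the bound name: a bound transition $\restr{q}{N}(\nu a.f(\dots))\to\nu a.f(q_1(x_1),\dots,q_n(x_n))$ matches clause~(3) with the two bound names swapped relative to the statement, so it arises from an $A$-transition $q(\nu b.f(\dots))\to\nu b.f(q_1'(x_1),\dots,q_n'(x_n))$ together with sets $N_i\subseteq\supp(q_i')\cap(N\cup\{b\})$ realizing precisely the required abstraction equation $\langle b\rangle\restr{q_i'}{N_i}=\langle a\rangle q_i$.

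The one genuinely delicate point — and the step I expect to be the main obstacle — is that the source state $\restr{q}{N}\in Q_\bot$ does not determine the pair $(q,N)$: the underlying partial injection fixes $N=\supp(\restr{q}{N})$ but leaves the total extension $q$ undetermined on the registers that $\restr{q}{N}$ leaves empty. A generating instance of clause~(2) or~(3) may therefore have used a different extension $\tilde q$ of the same partial injection, and I must transfer the witnessing $A$-transition from $\tilde q$ to the given~$q$. Since $\tilde q$ and $q$ agree wherever the partial injection is defined — with common values lying in~$N$ — and differ only by an injective renaming of names outside~$N$, there is a finite permutation $\rho$ fixing $N$ pointwise with $\rho\cdot\tilde q=q$. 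Applying $\rho$ to the clause instance, via equivariance of $\Delta$ and of $\langle-\rangle$, produces a transition of the same shape emanating from~$q$; and because $\rho$ fixes $N$ pointwise it fixes each retained set $N_i$ (which is contained in $N$ in the free case, and in $N\cup\{b\}$ in the bound case, where one first uses $\alpha$-invariance of $A$ to choose the bound name $b$ outside the finite support of~$\rho$). Hence $\rho$ leaves the restricted targets unchanged, so the resulting instance realizes the very same $A_\bot$-transition, now decomposed through~$q$. Discharging this normalization, together with the support bounds supplied by \autoref{lem:suppEdge}, completes both items.
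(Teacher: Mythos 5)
Your proposal is correct and follows essentially the same route as the paper's proof: invert the generating clauses of \autoref{def:nameDropMod}, observe that the generating instance may use a different extension $\tilde q$ of the same partial injection, and transfer the witnessing $A$-transition to the given $q$ via a finite permutation fixing $N$ pointwise (which exists because extensions of a common restriction agree on $N$ and avoid it elsewhere, total injections forming a single orbit), invoking equivariance of $\Delta$ and \autoref{lem:suppEdge} for the support bounds. The only cosmetic difference is the bound case, where you pre-rename the bound name via $\alpha$-invariance to one fixed by the permutation, while the paper simply takes the new bound name to be the image $\pi(c)$ of the old one; both variants are sound.
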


\noindent Essentially, the degree of the name-dropping modification remains the same because the new states
arise by deleting names from the support of previous states; the
number of orbits increases only by a factor $2^d$, where $d$ is the degree, because there are only
$2^d$ ways to delete names from a support of size $d$. Considering this and
as per the intention of the construction, the name dropping
modification closes an RNTA under $\alpha$-equivalence:

\begin{theorem}[label=th:nameDropAlpha]
  For each RNTA $A$, the name-dropping modification~$A_\bot$ of~$A$ is
  an RNTA that accepts the closure of the literal tree language of $A$
  under $\alpha$-equivalence, and hence the same alphatic tree
  language as~$A$. Moreover, $A_\bot$ has the same degree~$d$ as~$A$,
  and the number of orbits of~$A_\bot$ exceeds that of~$A$ by at most
  a factor~$2^d$.
\end{theorem}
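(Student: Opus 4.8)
The plan is to first invoke \autoref{autoStrong} to assume without loss of generality that the state set of $A$ is strong, say $Q=\sum_{i\in I}\Names^{\#X_i}$, so that \autoref{def:nameDropMod} and \autoref{lem:bijectiveChar} apply and $A_\bot$ has state set $Q_\bot=\sum_{i\in I}\Names^{\$X_i}$. The heart of the argument is the following claim, proved by induction on the height of terms: for every state $q\in Q$, every $N\subseteq\supp(q)$, and every term $s$,
\[
  \restr{q}{N}\text{ accepts }s\text{ in }A_\bot
  \iff
  \text{there is }t\equiv_\alpha s\text{ with }q\text{ accepts }t\text{ in }A\text{ and }\FN(t)\subseteq N.
\]
Here $\restr{q}{N}$ has support exactly $N$, so the side condition $\FN(t)\subseteq N$ is in fact forced on the right-hand side by \autoref{cor:suppFN}; note also that $\alpha$-equivalent terms have the same free names and the same tree shape, so the induction is well-founded.

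The base case (a constant $f/0$) and the free-label case $s=a.f(s_1,\dots,s_n)$ are routine: a free label cannot be $\alpha$-renamed, so any $t\equiv_\alpha s$ has the same head $a.f$, and \autoref{lem:bijectiveChar}(1) together with \autoref{lem:suppEdge}(1) matches the first transitions of $A_\bot$ and $A$ step for step. In the forward direction one passes to the children via the induction hypothesis at the successor states $q_i'$; in the backward direction one chooses the restriction sets $N_i=\FN(t_i)$, which are legal by \autoref{cor:suppFN} and contained in $N$ because $a\in N$ and $\FN(s)=\{a\}\cup\bigcup_i\FN(s_i)$.

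The bound-label case $s=\nu a.f(s_1,\dots,s_n)$ is the \emph{main obstacle}, since here the $\alpha$-renaming built into clause~(3) of \autoref{def:nameDropMod} interacts with the $\alpha$-equivalence quantified in the claim. The strategy is to split along the two ways the abstraction equality $\langle c\rangle\restr{q_i'}{N_i}=\langle a\rangle q_i''$ supplied by \autoref{lem:bijectiveChar}(2) can hold, namely $c=a$ with no renaming, or $q_i''=(c\,a)\cdot\restr{q_i'}{N_i}$ with $a\fresh\restr{q_i'}{N_i}$. In each subcase one takes $N_i$ to be the free-name set of the relevant subterm (legal by \autoref{cor:suppFN}, and contained in $N\cup\{c\}$ since the free names of a subterm lie in $\FN(s)\cup\{c\}\subseteq N\cup\{c\}$); this choice drops exactly the extraneous names that would otherwise block the renaming, which is precisely what makes $a$ fresh for the restricted successor. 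Acceptance is then transported across the transposition $(c\,a)$ using equivariance of $A$, $A_\bot$, and the acceptance relation, and combined with the induction hypothesis at $q_i'$. The delicate point is to verify that the assembled whole terms are genuinely $\alpha$-equivalent, i.e.\ that $\langle c\rangle(t_1,\dots,t_n)=\langle a\rangle(t_1',\dots,t_n')$ in $[\Names](\Terms_\Names(\Sigma)^n)$; this follows because $a\fresh t_i$ (inherited from $a\fresh\restr{q_i'}{N_i}$ and $\FN(t_i)\subseteq N_i$) and $t_i'=(c\,a)\cdot t_i$ up to $\alpha$-equivalence.

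Instantiating the claim at $q=q_0$ with $N=\supp(q_0)=\emptyset$ (\autoref{rem:closed}) yields $L(A_\bot)=\alphacl{L(A)}$, and passing to $\alpha$-classes gives $L_\alpha(A_\bot)=L_\alpha(A)$. It then remains to check that $A_\bot$ is a legitimate RNTA and to bound its parameters. The $\alpha$-invariance of $\Delta_\bot$ holds by construction, since clause~(3) of \autoref{def:nameDropMod} closes the bound transitions under $\alpha$-renaming; finite branching up to $\alpha$-equivalence follows from \autoref{lem:bijectiveChar}, as each transition of $\restr{q}{N}$ projects to one of the finitely many transitions of $q$ in $A$ together with one of the finitely many admissible restriction tuples $(N_1,\dots,N_n)$. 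For the degree, a state $(i,r)$ with $r$ partial injective has support $r[X_i]$ of size at most $|X_i|$, so the degree of $A_\bot$ is at most $d$; since the totally defined maps (the states of $A$) still lie in $Q_\bot$, it equals $d$. Finally, the orbits of $\Names^{\$X_i}$ are classified by the domain of definition of the partial map, of which there are $2^{|X_i|}\le 2^d$, so the number of orbits grows by at most the factor~$2^d$.
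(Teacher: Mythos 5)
Your proposal is correct, and it runs on the same underlying machinery as the paper (\autoref{autoStrong}, \autoref{lem:bijectiveChar}, \autoref{lem:suppEdge}/\autoref{cor:suppFN}, equivariance of acceptance as in \autoref{lem:equivAcc}), but it organizes the argument differently. The paper splits the theorem into five separate statements: $A$ is a subautomaton of $A_\bot$ (\autoref{cor:preserveAllRR}), $A_\bot$ is an RNTA (\autoref{lem:nameDropModRNTA}), a one-directional transfer lemma saying that $\restr{q}{N}$ accepts every $t\in L(q)$ with $\FN(t)\subseteq\supp(\restr{q}{N})$ (\autoref{lem:dropStateAcc}), closure of $L(A_\bot)$ under $\alpha$-equivalence (\autoref{lem:alphaClosure}), and the converse inclusion $L_\alpha(A_\bot)\subseteq L_\alpha(A)$ (\autoref{lem:sameBar}), the last three each proved by its own induction. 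You instead prove a single bidirectional state-level invariant, $L(\restr{q}{N})=\{s\mid\exists\,t\equiv_\alpha s.\ t\in L(q),\ \FN(t)\subseteq N\}$, in one induction on term height; the ``$\supseteq$'' half subsumes \autoref{lem:dropStateAcc} and \autoref{lem:alphaClosure} simultaneously (your choice $N_i=\FN(t_i)$ is exactly the paper's choice in \autoref{lem:dropStateAcc}), and the ``$\subseteq$'' half is the paper's \autoref{lem:sameBar} argument via \autoref{lem:bijectiveChar} with the transposition transport made explicit. What your route buys is a sharper statement — an exact description of the literal language of every state of $A_\bot$, from which both conclusions of the theorem fall out of one instantiation at $q_0$ with $N=\emptyset$ (\autoref{rem:closed}) — at the cost of a heavier induction hypothesis in the bound-name case; the paper's modular route isolates reusable one-directional facts and keeps each induction lighter. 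Your counting of degree and orbits (domains of partial injective maps classify the orbits of $\Names^{\$X_i}$, giving the factor $2^d$) matches the paper's.

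Two small glosses are worth tightening. First, RNTA-hood of $A_\bot$ also requires \emph{equivariance} of $\Delta_\bot$, which you do not mention; it follows from equivariance of $\Delta$ and the equivariant form of the clauses in \autoref{def:nameDropMod}, but the paper verifies it explicitly. Second, ``$\alpha$-invariance holds by construction'' is slightly too quick: clause~(3) only closes transitions obtained from restrictions of $A$-transitions, so one must observe that a tuple abstraction equality $\langle b\rangle(q''_1,\dots,q''_n)=\langle b'\rangle(q'''_1,\dots,q'''_n)$ implies the componentwise equalities, which compose transitively with the premise $\langle a\rangle\restr{q'_i}{N_i}=\langle b\rangle q''_i$ so that clause~(3) applies again; this is routine (the paper proves it via \autoref{lem:bijectiveChar}), but it is a step, not a tautology.
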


% First, we are going to show that the accepted literal tree language of a name-dropping
% modification is closed under $\alpha$-equivalence. In view of this lemma, we intuitively
% proceed as follows: Since $\alpha$-equivalent terms share their free names, in each state
% we can drop all names from the support that do not appear free anymore without losing
% acceptance of terms. Instead, no name in the support of the state does now block
% $\alpha$-renaming of a rewrite rule in order to accept a proper $\alpha$-equivalent term
% anymore. Formally this means:

% With this construction, we have completed our elaborations for incorporating the crucial
% concept of $\alpha$-equivalence into our novel automaton model. Closure under
% $\alpha$-equivalence will play a significant role in the upcoming chapter, where we will
% discuss language inclusion of RNTA as a prominent decision problem for automata.

\begin{remark}[name=Lossiness,label=rem:lossiness] It is apparent from
  the construction of the name-dropping modification that, in the
  usual correspondence between nominal automata models and
  register-based models~\cite{BojanczykEA14,SchroderEA17}, it
  establishes a lossiness property saying that during any transition,
  letters may nondeterministically be lost from the
  registers. Intuitively speaking, the effect of losing a letter from
  a register is on the one hand that one escapes freshness
  requirements against that letter in successor states, but on the
  other hand progress may later be blocked when the lost name is
  required to be seen in the word; the overall consequence of this
  phenomenon is that distinctness of the current letter~$b$ from a
  letter~$a$ seen previously in the word can only be enforced if~$a$
  is expected to be seen again, as already illustrated in
  \autorefexpls{expl:semantics} and~\ref{expl:rnta}.
\end{remark}

\section{Inclusion Checking}\label{sec:inclusion}

\noindent We conclude by showing that language inclusion of RNTAs is
decidable in elementary complexity, in sharp contrast to the typical
situation in register-based models as discussion in
\autoref{sec:intro}. The algorithm is based on reducing the problem to
language inclusion of classical NFTAs over finite signatures
(\autoref{sec:prelims}), using the name-dropping modification to
ensure closure of literal tree languages under $\alpha$-equivalence
(\autoref{sec:name-dropping}): Using closure under
$\alpha$-equivalence, we can choose a finite set of names such that we
can recognize at least one term from each class of $\alpha$-equivalent
terms and then be sure that we also capture all others.  Using this
set of names, we cut out a finite part of the RNTA in which only the
specified names appear. For these restricted automata, which are just
NFTAs, we can decide language inclusion in \ExpTime. A key step in
this programme is thus the following lemma:
\begin{definition}
  Given a finite set~$S\subseteq\Names$, we write
  $\Terms_S(\Sigma)=\{t\in\Terms_\Names(\Sigma)\mid\supp(t)\subseteq
  S\}$.
\end{definition}
(That is, a term is in~$\Terms_S(\Sigma)$ if all its free and bound
names are in~$S$.)
\begin{lemma}[label=lem:restrictNames]
  Let $n_{\mathit{ar}}$ be the maximal arity of symbols in
  $\Sigma$. Suppose that an RNTA~$A$ of degree~$d_A$ accepts a
  term~$t$, and pick $S \subseteq \Names$ such that
  $|S| = d_A \cdot n_{\mathit{ar}} + 1$. Then~$A$ also accepts some
  term $t'\in \Terms_S(\Sigma)$ such that $t' \equiv_\alpha t$.
\end{lemma}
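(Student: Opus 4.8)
The plan is to establish a stronger, inductive claim that threads the fixed set~$S$ through the accepting run: \emph{for every state $q$ of $A$ with $\supp(q)\subseteq S$ and every term $t$ accepted by~$q$, there is a term $t'\equiv_\alpha t$ with $\supp(t')\subseteq S$ (equivalently $t'\in\Terms_S(\Sigma)$) that is again accepted by~$q$.} The lemma then follows by instantiating $q\coloneqq q_0$, whose support is empty in the closed setting (\autoref{rem:closed}) and hence trivially contained in~$S$. I would run the induction on the height of the accepting run $q(t)\to^* t$, which has the same tree shape as~$t$.

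For the induction step, consider the first transition applied. If it is a free transition $q(a.f(x_1,\dots,x_n))\to a.f(q_1(x_1),\dots,q_n(x_n))$ with each $q_i$ accepting~$t_i$, then \autoref{lem:suppEdge}(1) gives $a\in\supp(q)\subseteq S$ and $\supp(q_i)\subseteq\supp(q)\subseteq S$, so no renaming is needed at this node: I apply the induction hypothesis to each pair $(q_i,t_i)$ and reassemble, keeping the label $a\in S$. The interesting case is a bound transition $q(\nu a.f(x_1,\dots,x_n))\to\nu a.f(q_1(x_1),\dots,q_n(x_n))$ with each $q_i$ accepting~$t_i$. Here I would pick a name $a'\in S$ that is fresh for all successor states, i.e.\ $a'\notin N$ where $N\coloneqq\bigcup_{i=1}^n\supp(q_i)$, and rename $a$ to $a'$ by applying the transposition $(a\,a')$ to the whole subtree.

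The single freshness condition $a'\fresh(q_1,\dots,q_n)$ does double duty. On the one hand, it yields $\langle a\rangle(q_1,\dots,q_n)=\langle a'\rangle\bigl((a\,a')\cdot(q_1,\dots,q_n)\bigr)$, so by $\alpha$-invariance of~$\Delta$ the renamed transition $q(\nu a'.f(x_1,\dots,x_n))\to\nu a'.f(q''_1(x_1),\dots,q''_n(x_n))$ with $q''_i=(a\,a')\cdot q_i$ also lies in~$\Delta$. On the other hand, since $\FN(t_i)\subseteq\supp(q_i)\subseteq N$ by \autoref{cor:suppFN}, it guarantees $a'\fresh(t_1,\dots,t_n)$, so that $\nu a'.f((a\,a')\cdot t_1,\dots,(a\,a')\cdot t_n)\equiv_\alpha\nu a.f(t_1,\dots,t_n)$ is a legitimate $\alpha$-renaming of the term. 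By equivariance, $q''_i$ accepts $(a\,a')\cdot t_i$; and because $a'\notin N$ the swap fixes every element of $\supp(q_i)\setminus\{a\}$, which by \autoref{lem:suppEdge}(2) lies in $\supp(q)\subseteq S$, while $a$ is sent to $a'\in S$. Hence $\supp(q''_i)=(a\,a')[\supp(q_i)]\subseteq S$, and the induction hypothesis applies to each~$q''_i$. Assembling the resulting terms under $\nu a'.f(-)$ produces the desired~$t'$.

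It remains to see that a suitable $a'$ always exists, and this is exactly where the size of~$S$ enters. Writing $n\le n_{\mathit{ar}}$ for the arity of~$f$, the forbidden set satisfies $|N|\le\sum_{i=1}^n|\supp(q_i)|\le n\cdot d_A\le n_{\mathit{ar}}\cdot d_A$, since every state of~$A$ has support of size at most~$d_A$. As $|S|=d_A\cdot n_{\mathit{ar}}+1>|N|$, the set $S\setminus N$ is nonempty and we may choose $a'$ from it (the degenerate leaf case $n=0$, where $N=\emptyset$, only uses $|S|\ge 1$). The main point to get right is the bookkeeping of the invariant $\supp(q)\subseteq S$ under the transpositions: the crucial observation is that choosing $a'$ outside the \emph{combined} support~$N$ of the successors---rather than merely outside $\FN(t_i)$---simultaneously validates the $\alpha$-invariance step and preserves the invariant for the children, so that the renamings performed in disjoint sibling subtrees never interfere.
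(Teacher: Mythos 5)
Your proof is correct, and it shares the paper's overall skeleton: both arguments strengthen the statement to arbitrary states whose support lies in~$S$, induct on the accepting run, use \autoref{lem:suppEdge} and \autoref{cor:suppFN}, and exploit the counting bound $|S| = d_A\cdot n_{\mathit{ar}}+1 > \bigl|\bigcup_i \supp(q_i)\bigr|$ together with $\alpha$-invariance to find a replacement name inside~$S$. Where you genuinely diverge is the order of operations in the bound case, and your order is the more robust one. The paper recurses first, keeping the original successor states~$q_i$, and only then repairs the root: in the subcase $a\notin S$ it asserts $a\fresh q_i$ and renames the transition to some $b\in S$ \emph{without changing the successor states}, via $\langle a\rangle q_i = \langle b\rangle q_i$. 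That assertion is not justified as written---successors of a $\nu a$-transition may well store~$a$ (e.g.\ $q_0(\nu a.f(x,y))\to\nu a.f(q_1(a)(x),q_1(a)(y))$ in \autoref{expl:rnta}), and $\supp(q_i)\subseteq S$ is moreover what the induction hypothesis already needs at the children. Your rename-first discipline---pick $a'\in S$ fresh for $(q_1,\dots,q_n)$, pass to the transposed successor states $(a\,a')\cdot q_i$ using $\alpha$-invariance and equivariance of acceptance (\autoref{lem:equivAcc}), and only then recurse, with the invariant $\supp(q)\subseteq S$ restored---handles exactly this case explicitly, and it also lets you dispense with the paper's auxiliary ``equivalent runs'' bookkeeping. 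One small wording issue: from $\FN(t_i)\subseteq\supp(q_i)$ you obtain $a'\notin\FN(t_i)$, not literal freshness $a'\fresh t_i$, since $a'$ may still occur \emph{bound} inside~$t_i$; but the standard swapping fact---$b\notin\FN(s)$ implies $\nu a.s \alphaeq \nu b.(a\,b)\cdot s$, which the paper itself notes follows by temporary renaming of inner binders---gives exactly the $\alpha$-equivalence you need, so the argument stands.
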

(Recall that initial states have empty support by our running
assumption; otherwise,~$S$ would need to contain the support of the
initial state.)

% We are further going to recall some terminology from complexity theory and introduce a
% new kind of complexity. A decision problem is called \textit{doubly exponential} if there
% exists a deterministic Turing machine that, given an input of size n, can decide the problem
% in $O(2^{2^n})$ time. We further define:

% \begin{definition}[Parametrised singly exponential]
%   We call a decision problem \emph{parametrised singly exponential} in $n$ if the problem can
%   be solved in \(O(2^{p(n,k_0,\cdots,k_m)})\) time on a deterministic Turing machine where
%   the input size \(p\) grows arbitrarily in $k_0, \cdots, k_m$, but at most polynomially in
%   \(n\).
% \end{definition}

\begin{theorem}[label=th:dec]
  Alphatic tree language inclusion $L_\alpha(A)\subseteq L_\alpha(B)$
  of RNTAs~$A,B$ of degrees $d_A$, $d_B$, respectively, over the fixed
  signature~$\Sigma$ is decidable in doubly exponential time, and in
  fact in parametrized singly exponential time with the degree as the
  parameter, i.e.\ exponential in a function that depends
  exponentially on~$d_A+d_B$ and polynomially on the size of $A,B$.
\end{theorem}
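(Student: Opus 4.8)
The plan is to reduce alphatic language inclusion of RNTAs to the classical NFTA inclusion problem over a finite signature, exploiting the two main tools developed earlier: the name-dropping modification (\autoref{th:nameDropAlpha}) and the name-restriction lemma (\autoref{lem:restrictNames}). The overall strategy is as follows. First I would replace both $A$ and $B$ by their name-dropping modifications $A_\bot$ and $B_\bot$. By \autoref{th:nameDropAlpha} this does not change the alphatic languages, so $L_\alpha(A)\subseteq L_\alpha(B)$ iff $L_\alpha(A_\bot)\subseteq L_\alpha(B_\bot)$; crucially, the literal languages $L(A_\bot)$ and $L(B_\bot)$ are now closed under $\alpha$-equivalence, and the degrees are unchanged while the orbit counts grow by at most a factor $2^{d_A}$ and $2^{d_B}$, respectively.

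Next I would fix a finite set $S\subseteq\Names$ of names large enough to witness every $\alpha$-class that either automaton can accept. By \autoref{lem:restrictNames}, taking $|S| = \max(d_A,d_B)\cdot n_{\mathit{ar}}+1$ ensures that whenever either automaton accepts a term~$t$, it also accepts some $\alpha$-equivalent $t'$ with $\supp(t')\subseteq S$. The key point is that, because $L(A_\bot)$ and $L(B_\bot)$ are closed under $\alpha$-equivalence, the inclusion $L_\alpha(A_\bot)\subseteq L_\alpha(B_\bot)$ holds if and only if the restricted literal inclusion $L(A_\bot)\cap\Terms_S(\Sigma)\subseteq L(B_\bot)\cap\Terms_S(\Sigma)$ holds: every alphatic class accepted by $A_\bot$ has a representative in $\Terms_S(\Sigma)$, and membership of such a representative in $L(B_\bot)$ is equivalent to membership of the whole class in $L_\alpha(B_\bot)$ by $\alpha$-closure. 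This is the step I expect to require the most care, since one must argue that restricting attention to $S$ loses no $\alpha$-class on either side simultaneously and that $\alpha$-closure genuinely converts the alphatic question into a literal one.

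Having reduced to literal inclusion over $\Terms_S(\Sigma)$, I would observe that the restriction of an RNTA to states whose support lies in~$S$ and to transitions labelled by names (or bound names) drawn from~$S$ is a finite object: since the state set is orbit-finite and each support has at most $d$ names from the finite set~$S$, there are only finitely many states $q$ with $\supp(q)\subseteq S$, and finite branching up to $\alpha$-equivalence bounds the transitions. Treating the finitely many names in~$S$ and the bound-name markers $\nu a$ ($a\in S$) as ordinary finite-alphabet symbols extending~$\Sigma$, each restricted automaton becomes a classical NFTA over a finite signature, accepting exactly the restricted literal language. The inclusion of these NFTAs is then decidable in \ExpTime\ in their size by the cited result on NFTA inclusion.

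Finally I would carry out the complexity bookkeeping. The size of~$S$ is linear in $d_A+d_B$ (with $n_{\mathit{ar}}$ a constant of the signature), so the number of states with support in~$S$ is roughly the number of orbits times $|S|^{d}$, i.e.\ exponential in $d_A+d_B$ after accounting for the $2^{d}$ blow-up from name dropping; the number of transitions is similarly bounded. The NFTA inclusion test runs in time exponential in this automaton size, yielding an overall bound that is exponential in a quantity depending exponentially on $d_A+d_B$ and polynomially on the sizes of $A$ and $B$. Since $d_A,d_B$ are themselves at most the sizes of the automata, this is doubly exponential in the worst case but singly exponential when the degree is treated as a fixed parameter, as claimed.
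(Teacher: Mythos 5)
Your proposal is correct and takes essentially the same route as the paper's proof: enforce $\alpha$-closure of the literal language via the name-dropping modification (\autoref{th:nameDropAlpha}), use \autoref{lem:restrictNames} to restrict attention to a name set~$S$ of size linear in the degree, extract finite NFTAs over the finite signature $\overline{S}\times\Sigma$ (finiteness via orbit-finiteness plus the support bound, with \autoref{lem:suppEdge} ensuring accepting runs on $S$-supported terms stay inside the restricted state set), and invoke the \ExpTime{} NFTA inclusion test, with matching complexity bookkeeping. The only, harmless, deviations are that the paper applies name dropping to~$B$ alone—\autoref{lem:restrictNames} applied to~$A$ already supplies $S$-supported representatives of accepted classes, so $\alpha$-closure is needed only for $L(B_\bot)$—and accordingly takes $|S| = d_A\cdot n_{\mathit{ar}}+1$ rather than your slightly larger $\max(d_A,d_B)\cdot n_{\mathit{ar}}+1$.
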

\noindent Here, we understand the size of~$A$ and~$B$ in terms of
standard finitary representations of orbit-finite nominal sets, which
essentially just enumerate the support sizes and symmetry groups of
the orbits (e.g.~\cite{UrbatEA21}). In the complexity analysis, we
assume the sigature to be fixed; if the signature is made part of the
input, then parameter includes also the maximal
arity~$n_{\mathit{ar}}$ of symbols in~$\Sigma$ as in
\autoref{lem:restrictNames}.
\begin{proof}
  The proof uses reduction to a
  finite-alphabet~\cite{SchroderEA17,UrbatEA21}. Again, let
  $n_{\mathit{ar}}$ be the maximal arity of symbols in $\Sigma$, and
  pick $S \subseteq \Names$ such that
  $|S| = d_A \cdot n_{\mathit{ar}} + 1$ % and $\supp(i_A) \subseteq S$
  as required in \autoref{lem:restrictNames}. Put
  $\overline{S} = S \cup \{\nu a \mid a \in S\}$, and let~$B_\bot$ be
  the name-dropping modification of~$B$ as per
  \autoref{th:nameDropAlpha}. Let $(Q_A,\Delta_A,q_0^A)=A$ and
  $(Q_B,\Delta_B,q_0^B)=B$. 
  \begin{enumerate}[wide]
  \item\label{item:restrict-S} Show that
    $L_\alpha(A) \subseteq L_\alpha(B)$ iff
    $L(A) \cap \Terms_S(\Sigma) \subseteq L(B_\bot) \cap
    \Terms_S(\Sigma)$.
    \begin{itemize}
    \item `$\Rightarrow$': By \autoref{th:nameDropAlpha},
      $L(B_\bot)$ is the closure of $L(B)$ under
      $\alpha$-equivalence. Thus, $L(A)\subseteq L(B_\bot)$, and
      hence
      $L(A) \cap \Terms_S(\Sigma) \subseteq L(B_\bot) \cap
      \Terms_S(\Sigma)$.
      
    \item `$\Leftarrow$': Let $[t]_\alpha \in L_\alpha(A)$; we have
      to show that $[t]_\alpha \in L_\alpha(B)$. By definition of
      $L_\alpha(A)$, we have $t' \in L(A)$ such that
      $t' \equiv_\alpha t$, so by \autoref{lem:restrictNames} there
      exists $t'' \in L(A)\cap\Terms_S(\Sigma)$ such that
      $t'' \equiv_\alpha t$. Then $t'' \in L(B_\bot)$ by hypothesis,
      and hence $[t]_\alpha\in L_\alpha(B_\bot)$. By
      \autoref{th:nameDropAlpha}, we obtain
      $[t]_\alpha \in L_\alpha(B)$ as required.
    \end{itemize}
    
  \item\label{item:nfta} By \textbf{\ref{item:restrict-S}}, we are
    left to decide whether
    $L(A) \cap \Terms_S(\Sigma) \subseteq L(B_\bot) \cap
    \Terms_S(\Sigma)$.  Observe that $L(A) \cap \Terms_S(\Sigma)$ and
    $L(B_\bot) \cap \Terms_S(\Sigma)$ are effectively just tree
    languages over the finite signature $\overline{S}\times\Sigma$.
    We construct top-down NFTAs~$A_S$ and~$B_S$ over
    $\overline{S}\times\Sigma$ that restrict~$A$ and~$B_\bot$,
    respectively, to $S$ and accept $L(A) \cap \Terms_S(\Sigma)$ and
    $L(B_\bot) \cap \Terms_S(\Sigma)$, respectively: Put
    $A_S = (Q_{A,S}, \Delta_{A,S}, q_0^A)$ where
    $Q_{A,S} = \{q\in Q_A\mid \supp(q)\subseteq S\}$ and
    $\Delta_{A,S} = \{(q (\gamma.f(x_1,\dots,x_n) \to \gamma.
    f(q_1(x_1),\dots,q_n(x_n))) \in \Delta_A\mid q,q_1,\dots,qn\in
    Q_{A,S}, \gamma \in \overline{S}\}$. The construction of
    $B_S = (Q_{B_\bot,S}, \Sigma\times\overline{S}, \Delta_{B_\bot,S},
    i_{B_\bot})$ is analogous. The automata~$A_S$ and $B_S$ are finite
    because~$A$ and~$B_\bot$ are orbit-finite and each orbit of a
    nominal set contains only finitely many elements with a given
    finite support.
    
    % Let us verify that $A_S, B_S$ are indeed top-down NFTA,
    % i.e. $Q_{A,S},Q_{B,S}$ and $\Delta_{A,S},\Delta_{B,S}$ are
    % finite. This is the case since $Q_A, Q_{B,\bot}$ are
    % orbit-finite, thus only finitely many states differ by more than
    % their occurring names, i.e their support. There are only
    % finitely many states with identical support, hence
    % $Q_{A,S},Q_{B,S}$ are finite. This holds also for
    % $\Delta_{A,S},\Delta_{B,S}$, as for finitely many states, names,
    % signature symbols, and their arity also only finitely many
    % unique rewrite rules exist.
  
    We verify that $A_S$ accepts $L(A) \cap \Terms_S(\Sigma)$,
    i.e. $L(A_S) = L(A) \cap \Terms_S(\Sigma)$; the corresponding
    claim for~$B_\bot$ is analogous. Since all states and rewrite
    rules of $A_S$ are inherited from~$A$, it is immediate that
    $L(A_S) \subseteq L(A) \cap \Terms_S(\Sigma)$; we show the revers
    inclusion. So let $t \in L(A) \cap \Terms_S(\Sigma)$; we have to
    show that $A_S$ accepts $t$. We show more generally that every
    state $q$ of $A_S$ accepts all terms $t\in \Terms_S(\Sigma)$
    that~$q$ accepts in~$A$, and proceed via induction on the length
    of an accepting run.
    
    For the base case, let $q$ accept $t = \gamma.c$ in $A$, where
    $\gamma \in \overline{S}$ because $t \in \Terms_S(\Sigma)$. That
    is, we have $\delta = (q(\gamma.c) \to \gamma.c) \in
    \Delta_A$. The $\delta \in \Delta_{A,S}$ by construction, so~$q$
    accepts~$t$ in~$A_S$.
    
    For the inductive step, let~$q$ accept
    $t = \gamma.f(t_1,\dots,t_n)$ in~$A$. Again,
    $\gamma \in \overline{S}$ because $t \in
    \Terms_S(\Sigma)$. Thus, we have
    $\delta = (q(\gamma.f(x_1,\dots,x_n)) \to
    \gamma.f(q_1(x_1),\dots,q_n(x_n))) \in \Delta_A$ such that
    $q_i$ accepts $t_i$ for $i=1,\dots,n$. We distinguish between
    bound and free transitions:
    \begin{itemize}
    \item $\gamma = a$: By \autoref{lem:suppEdge},
      $\supp(q_i) \subseteq \supp(q) \subseteq S$, so $q_i\in Q_S$,
      and by induction,~$q_i$ accepts~$t_i$ in~$A_S$ for
      $i=1,\dots,n$. Since, $\gamma\in S$, we thus have
      $\delta \in \Delta_{A,S}$; it follows that~$q$ accepts~$t$
      in~$A_S$.
    \item $\gamma = \nu a$: By
      \autoref{lem:suppEdge},
      $\supp(q_i) \subseteq \supp(q)\cup \{a\}$. Since $a \in S$ and
      $\supp(q) \subseteq S$, this implies $\supp(q_i) \subseteq S$,
      i.e.\ $q_i\in Q_S$ for $i=1,\dots,n$. By induction,~$q_i$
      accepts~$t_i$ in~$A_S$, and again, $\delta\in \Delta_{A,S}$ by
      construction because $\gamma\in S$, implying that~$q$ accepts~$t$
      in~$A_S$.
    \end{itemize}
      
      % Since the statement holds for every state of $A_S$, it holds in
      % particular for the initial state $q_0^A$ and thus $A_S$ accepts
      % $L(A) \cap \Terms_S(\Sigma)$. For $B_S$ the proof is analogous.
      
  \item\label{item:size} So far, we have reduced the problem to
    deciding language inclusion of NFTAa, which is in \ExpTime
    \cite{ComonEA08}; it remains to analyse the size of the NFTAs
    $A_S$, $B_S$ constructed in step~\textbf{\ref{item:nfta}.}, where
    we have first constructed the name-dropping modification~$B_\bot$
    of the RNTA~$B$ and have then extracted~$A_S$ and~$B_S$ from~$A$
    and~$B_\bot$, respectively, by restricting to the finite set~$S$
    of names. We assume for simplicity that the state spaces of~$A$
    and~$B$ are given as strong nominal sets, so that the size of~$A$
    and~$B$ is essentially the respective number of orbits. When
    estimating the size of~$A_S$ and~$B_S$, it suffices to consider
    the number of states, since the size of the
    signature~$\overline{S}\times\Sigma$ is linear in~$d_A$
    (as~$\Sigma$ is assumed to be fixed) so that the number of
    transitions of NFTAs over~$\overline{S}\times\Sigma$ is polynomial in
    their number of states and~$d_A$. It thus suffices to show that
    the number of states in~$A_S$ and~$B_S$, respectively, is the
    number of orbits of~$A$ or~$B$, respectively, multiplied by a
    factor that is singly exponential in the degree. Now by
    \autoref{th:nameDropAlpha}, the name-dropping modification step
    for~$B$ increases the number of orbits by an exponential factor in
    the degree~$d_B$ but leaves the degree itself unchanged. Moreover,
    we generally have that every orbit of a given nominal set with
    support size~$m$ has at most~$m!$ elements with a given fixed
    support, so the step from $A,B$ to $A_S,B_S$ indeed incurs only an
    exponential factor in the degree, which proves the claim.
    \qedhere
  \end{enumerate}
\end{proof}
From \autoref{lem:injectiveN}, it is immediate that the same
complexity bound as in \autoref{th:dec} holds also for inclusion
checking of RNTAs under global and branchwise freshness semantics,
respectively (i.e.\ for checking whether
$\gfresh(L_\alpha(A))\subseteq \gfresh(L_\alpha(B))$ or
$\bfresh(L_\alpha(A))\subseteq \bfresh(L_\alpha(B))$,
respectively). We conclude by showing that this remains true under
local freshness semantics. The following observation is key:

\begin{defn}
  We define an ordering $\le$ on~$\barNames$ by $a\le\nu a$ for all
  $a\in\Names$. We then define the ordering~$\sqsubseteq$ on
  $\Terms_\Names(\Sigma)$ recursively by $t\sqsubseteq s$ iff $t,s$
  have the form $t=\gamma.f(t_1,\dots,t_n)$ and
  $s=\delta.f(s_1,\dots,s_n)$ where $\gamma\le\delta$ and
  $t_i\sqsubseteq s_i$ for
  $i=1,\dots,n$. % the least preorder such that
  % $a.f(t_1,\dots,t_n)\sqsubseteq\nu a.f(t_1,\dots,t_n)$ for all
  % $a\in\Names$, $f/n\in\Sigma$,
  % $t_1,\dots,t_n\in\Terms_\Names(\Sigma)$ and such that term formation
  % is monotone w.r.t.~$\sqsubseteq$, i.e.\ $t_i\sqsubseteq t'_i$ for
  % $i=1,\dots,n$ implies
  % $\gamma.f(t_1,\dots,t_n)\sqsubseteq\gamma.f(t'_1,\dots,t'_n)$ for
  % all~$\gamma\in\Names$, $f/n\in\Sigma$. Moreover, 
  For a literal language $L$,
  ${\downarrow}L \, = \{t \in \Terms_\Names(\Sigma) \mid \exists t'
  \in L.\, t \sqsubseteq t'\}$ denotes the downward closure of~$L$
  with respect to $\sqsubseteq$.

\end{defn}
That is, $t \sqsubseteq t'$ if $t$ arises from~$t$' by removing zero
or more occurrences of~$\nu$; e.g.\
$\nu a.f(a.f(a.f(a.k))\sqsubseteq \nu a.f(\nu a.(\nu a.f( a.k))$.
\begin{lemma}\label{lem:data-inclusion}
  For closed alphatic languages $L_1,L_2$, we have
  $D(L_1)\subseteq D(L_2)$ iff for all $[t]\in L_1$ there exists
  $t\sqsubseteq t'$ such that $[t']\in L_2$.
\end{lemma}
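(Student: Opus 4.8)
The plan is to prove the biconditional directly by unfolding the definitions of $D$, $\sqsubseteq$, and $d\nu$, and observing that $d\nu$ factors through $\sqsubseteq$ in a precise sense. Let me sketch the key relationship.

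=== LEMMA STATEMENT ===

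\begin{lemma}
  For closed alphatic languages $L_1,L_2$, we have
  $D(L_1)\subseteq D(L_2)$ iff for all $[t]\in L_1$ there exists
  $t\sqsubseteq t'$ such that $[t']\in L_2$.
\end{lemma}

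Wait — I need to re-read. The statement says "for all $[t]\in L_1$ there exists $t\sqsubseteq t'$". Let me think about what this means and produce the proof plan.
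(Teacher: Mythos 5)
Your proposal breaks off before any argument begins: after announcing that you will unfold the definitions, you restate the lemma and stop, so there is no proof to assess beyond the announced strategy itself --- and that strategy has a genuine gap. The `if' direction is indeed a one-line unfolding (the observation you allude to, namely that $t\sqsubseteq t'$ implies $d\nu(t)=d\nu(t')$, yields $d\nu(t)\in D(L_2)$ immediately). But the `only if' direction is \emph{not} obtainable by unfolding alone: from $d\nu(t)\in D(L_2)$ the definition of $D$ only produces some $t'$ with $[t']_\alpha\in L_2$ and $d\nu(t')=d\nu(t)$, and equality of $d\nu$-images does not imply $t\sqsubseteq t'$. Two closed terms with the same data tree can disagree in both directions about which occurrences are bound: e.g.\ $t=\nu a.f(\nu a.f(a.k))$ and $t'=\nu a.f(a.f(\nu a.k))$ satisfy $d\nu(t)=d\nu(t')$ but are $\sqsubseteq$-incomparable. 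So the witness handed to you by the definition need not be the witness the lemma demands.

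The paper closes exactly this gap with two ingredients absent from your sketch. First, a replacement lemma (\autoref{lem:alpha-increase}): if $t\equiv_\alpha s$ and $t\sqsubseteq t'$, then there is $s'$ with $t'\equiv_\alpha s'$ and $s\sqsubseteq s'$; this makes the right-hand condition independent of the chosen representative and licenses assuming w.l.o.g.\ that $t$ is \emph{clean}. Second, a positionwise argument that uses closedness of the languages essentially: if the clean, closed $t$ carries $\nu a$ at a position $p_1$ where the closed $t'$ carries a free $a$, then $a$ must be bound at some position $p_2$ strictly above $p_1$ in $t'$, so $d\nu(t)=d\nu(t')$ carries $a$ at both $p_1$ and $p_2$; whichever way $p_2$ is labelled in $t$ (free $a$ forcing a binder further up, or a binder $\nu a$ itself) contradicts cleanness of $t$. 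Only after this reduction does the witness $t'$ obtained from $d\nu(t)\in D(L_2)$ satisfy $t\sqsubseteq t'$. Note that both the closedness hypothesis in the lemma statement and the cleanness normalization are load-bearing; a proof plan that does not mention either cannot succeed.
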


% We recursively define the relation $\sqsubseteq \,= \{(t = \gamma.f(t_1,\dots,t_n),t' = \delta.f(t'_1,\dots,t'_n))
% \in \Terms_\Names(\Sigma)^2\,|\, (\gamma = a \Rightarrow \delta = \nu a \vee \delta = a) \wedge (\gamma = \nu a \Rightarrow \delta = \nu a) \wedge \forall i \in I. t \sqsubseteq t'\}$,
% 

% \begin{lemma}[label=lem:dataliteral]
%   Let $A$ and $B$ be two RNTA, then
%   $\lfresh(L_\alpha(A)) \subseteq \lfresh(L_\alpha(B))$ iff
%   $L(A) \cap \Terms_S(\Sigma) \subseteq (L(B_\bot) \cap
%   \Terms_S(\Sigma))$.
% \end{lemma}

\begin{theorem}[label=th:litDec]
  Language inclusion $D(L_\alpha(A))\subseteq D(L_\alpha(B))$ under
  local freshness semantics of RNTAs~$A,B$ of degrees $d_A$, $d_B$,
  respectively, over the fixed signature~$\Sigma$ is decidable in
  doubly exponential time, and in fact in parametrized singly
  exponential time with the degree as the parameter, i.e.\ exponential
  in a function that depends exponentially on~$d_A+d_B$ and
  polynomially on the size of $A,B$.
\end{theorem}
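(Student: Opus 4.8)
The plan is to reduce the local-freshness inclusion to an alphatic inclusion of the kind already handled by \autoref{th:dec}, using \autoref{lem:data-inclusion} to replace $D$ by a downward closure under~$\sqsubseteq$. By \autoref{lem:data-inclusion}, $D(L_\alpha(A))\subseteq D(L_\alpha(B))$ holds iff every class $[t]\in L_\alpha(A)$ admits a representative~$t$ and a term~$t'$ with $t\sqsubseteq t'$ and $[t']\in L_\alpha(B)$; equivalently, $L_\alpha(A)\subseteq{\downarrow_\alpha}L_\alpha(B)$, where ${\downarrow_\alpha}L_\alpha(B)=\{[t]\mid\exists t'.\,t\sqsubseteq t',\ [t']\in L_\alpha(B)\}$. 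The task is therefore to decide this inclusion within the same resource bounds as \autoref{th:dec}.

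First I would mirror the finite reduction used in the proof of \autoref{th:dec}. Fix $S\subseteq\Names$ with $|S|=d_A\cdot n_{\mathit{ar}}+1$ as in \autoref{lem:restrictNames}, form the name-dropping modification $B_\bot$ (\autoref{th:nameDropAlpha}), and extract the NFTAs $A_S$ and $B_S$ over the finite signature $\overline{S}\times\Sigma$ accepting $L(A)\cap\Terms_S(\Sigma)$ and $L(B_\bot)\cap\Terms_S(\Sigma)$, respectively, exactly as there. The central observation is that $t\sqsubseteq t'$ only toggles occurrences of~$\nu$ and hence leaves the underlying set of names unchanged, so $\supp(t)=\supp(t')$ and the downward closure under~$\sqsubseteq$ never leaves $\Terms_S(\Sigma)$. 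Using this together with \autoref{lem:restrictNames}, I would show $L_\alpha(A)\subseteq{\downarrow_\alpha}L_\alpha(B)$ iff $L(A)\cap\Terms_S(\Sigma)\subseteq{\downarrow}\bigl(L(B_\bot)\cap\Terms_S(\Sigma)\bigr)$, with the downward closure now taken inside $\Terms_S(\Sigma)$. The easy ``$\Leftarrow$'' direction picks, for $[t]\in L_\alpha(A)$, an $S$-representative $t_1\equiv_\alpha t$ via \autoref{lem:restrictNames}, applies the finite hypothesis to obtain $t_1\sqsubseteq t_1'\in L(B_\bot)\cap\Terms_S(\Sigma)$, and notes $[t_1']\in L_\alpha(B_\bot)=L_\alpha(B)$.

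The implementation of the downward closure is cheap at the finite level: I would build $B_S^{\downarrow}$ from $B_S$ by adding, for every rule reading $\nu a$, a parallel rule reading~$a$ into the same successor states, leaving the state set untouched and at most doubling the number of rules. A straightforward run-correspondence argument then yields $L(B_S^{\downarrow})={\downarrow}L(B_S)$: a run of $B_S^{\downarrow}$ on~$t$ that uses a freed rule at a position records a~$\nu$ at that position to reconstruct a $\sqsubseteq$-larger $t'\in L(B_S)$, and conversely. Deciding $L(A_S)\subseteq L(B_S^{\downarrow})$ is then an NFTA inclusion, solvable in \ExpTime. Since $B_S^{\downarrow}$ has the same states as $B_S$ and $\overline{S}\times\Sigma$ is still linear in~$d_A$, the size analysis is identical to that of \autoref{th:dec}: the name-dropping step for~$B$ costs a factor~$2^{d_B}$ in orbits while preserving the degree, restriction to~$S$ costs a further factor exponential in the degree, and the final inclusion is exponential in these, giving the stated doubly exponential, parametrized singly exponential bound.

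The main obstacle I anticipate is the ``$\Rightarrow$'' direction of the $S$-restriction, which requires transferring the $\sqsubseteq$-relation across $\alpha$-equivalence: given a representative $u\equiv_\alpha t_1$ with $u\sqsubseteq u'$ supplied by membership in ${\downarrow_\alpha}L_\alpha(B)$, one must produce $t_1'\equiv_\alpha u'$ with $t_1\sqsubseteq t_1'$. This rests on a \emph{compatibility lemma}: if $u\equiv_\alpha t_1$ and $u\sqsubseteq u'$, then some $t_1'\sqsupseteq t_1$ satisfies $t_1'\equiv_\alpha u'$. It is proved by induction on term structure, binding in~$t_1$ exactly those (free) names that were bound in passing from~$u$ to~$u'$ at the matching positions, using that free names at corresponding positions agree under $\alpha$-equivalence and that $\equiv_\alpha$ is a congruence. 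This compatibility also shows that the condition in \autoref{lem:data-inclusion} is independent of the chosen representative, which is what makes the reduction to an alphatic inclusion legitimate.
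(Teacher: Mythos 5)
Your proposal is correct and follows essentially the same route as the paper's proof: reduce via \autoref{lem:data-inclusion} and \autoref{lem:restrictNames} to the finite inclusion $L(A)\cap\Terms_S(\Sigma)\subseteq{\downarrow}(L(B_\bot)\cap\Terms_S(\Sigma))$, implement the downward closure on the finite automaton by adding a free-name rule $q(a.f(x_1,\dots,x_n))\to a.f(q_1(x_1),\dots,q_n(x_n))$ for each bound rule, and reuse the size analysis of \autoref{th:dec} (the closure at most doubles the rule count). The ``compatibility lemma'' you identify as the main obstacle is exactly the paper's \autoref{lem:alpha-increase}, which it likewise uses to make the $\sqsubseteq$-relation interact correctly with $\alpha$-equivalence.
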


\begin{proof}
  The proof is largely analogous to that of \autoref{th:dec}.  In
  step~\textbf{\ref{item:restrict-S}.}, one shows using
  \autoref{lem:data-inclusion} (and recalling \autoref{rem:closed})
  that $D(L_\alpha(A))\subseteq D(L_\alpha(B))$ iff
  $L(A)\cap\Terms_S(\Sigma)\subseteq{\downarrow}(L(B_\bot)\cap\Terms_S(\Sigma))$. In
  step~\textbf{\ref{item:nfta}.}, the NFTA accepting
  ${\downarrow}(L(B_\bot)\cap\Terms_S(\Sigma))$ is constructed as in
  \autoref{th:dec} and then closed downwards under~$\sqsubseteq$ by
  adding a transition
  $q(a.f(x_1,\dots,x_n))\to a.f(q_1(x_1).\dots,q_n(x_n))$ for every
  transition
  $q(\nu a.f(x_1,\dots,x_n))\to \nu a.f(q_1(x_1).\dots,q_n(x_n))$.
\end{proof}

\section{Conclusions}

We have introduced the model of \emph{regular nominal tree automata
  (RNTA)}, a species of non-deterministic top-town nominal tree
automata. RNTAs can be equipped with different data tree semantics
ranging from global freshness as found in session
automata~\cite{BolligEA14} to local freshness. Under the latter, RNTAs
correspond, via the usual equivalence of nominal automata and
register-based automata~\cite{BojanczykEA14,SchroderEA17}, to a
subclass of register tree automata~\cite{KaminskiTan08}. As such, they
are less expressive than the full register model, but in return admit
inclusion checking in elementary complexity (parametrized exponential
time); this in a model that allows unboundedly many registers and
unrestricted nondeterminism (cf.\ \autoref{sec:intro}). RNTAs feature
a native notion of name allocation, allowing them to process terms in
languages with name binding such as the $\lambda$- and the
$\pi$-calculus.

% Upon reflection, our investigation indicates that the transition to
% nominal tree automata with name allocation represents a valid
% generalisation of tree automata and nominal word automata. This
% conceptual extension appears to offer a pragmatic means of
% specifying languages tailored to particular use cases (e.g. XML, the
% $\lambda$- or $\pi$-calculus), without introducing an overly
% escalation in the complexity of relevant decision problems.

Future research will aim in particular at working towards a notion of
nominal automata with name allocation for infinite trees, in
particular with a view to applications in reasoning over
name-allocating fragments of the nominal
$\mu$-calculus~\cite{KlinLelyk19}. Also, it will be of interest to
develop the theory in coalgebraic generality~\cite{Rutten00}, aiming
to support automata with effects such as probabilistic or weighted
branching.

% Acknowledging the potential utility of this approach, it is recognised that there is room for
% further exploration. The general applicability of the method prompts consideration not
% only for generalisation
% of tree automata but also for alternative branching structures featuring name abstraction.
% The prospect of extending the conceptual framework to encompass infinite trees, trees with
% infinite branching, or graph-like branching structures, such as directed acyclic graphs,
% is presented as an avenue for potential exploration. Although further studies and
% exploration is clearly required with an awareness of the inherent complexities, the
% results of our inquiry are presented with a tempered optimism regarding their theoretical
% and practical implications.

%
% ---- Bibliography ----
%
% BibTeX users should specify bibliography style 'splncs04'.
% References will then be sorted and formatted in the correct style.
%
\newpage
 \bibliographystyle{plainurl}
 \bibliography{coalgml}
%
%\bibliographystyle{splncs04}
%\bibliography{gsq}
 \newpage \appendix

 \section{Appendix: Additional Details and Omitted Proofs}
 We give details and proofs omitted in the main body.

\subsection{Details for \autoref{sec:tree-lang}}

\begin{lemma}\label{lem:dnu-inj}
  $d\nu$ acts injectively on closed non-shadowing terms.
\end{lemma}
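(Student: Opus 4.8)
The plan is to prove injectivity by reconstructing the original term from its image under $d\nu$. Since $d\nu$ only erases the $\nu$-markers while retaining the tree shape and all names, the entire content of the statement is that, for a closed non-shadowing term, one can unambiguously decide at each node whether that node carried a binder $\nu a$ or a free name $a$. The observation I would isolate is this: in a closed non-shadowing term, a node labelled with name $a$ (after applying $d\nu$) was a binder $\nu a$ if and only if no strict ancestor of that node carries the name $a$. Closedness forces every free occurrence of $a$ to have a binding $\nu a$ somewhere above it, so the topmost occurrence of $a$ on any root-to-node path must be the binder; and non-shadowing, i.e.\ distinctness of bound names along each branch, forbids a second $\nu a$ lower on the same branch, so every lower occurrence of $a$ on that branch must be free.

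Because subterms of a closed term are in general not closed, I would not induct on the statement as given but on a strengthened version carrying a context. Concretely, for a finite set $N\subseteq\Names$, call $t$ \emph{good for $N$} if $\FN(t)\subseteq N$ and on every branch of~$t$ the bound names are mutually distinct and disjoint from~$N$. The original lemma is the case $N=\emptyset$, since being good for $\emptyset$ means exactly closed (because $\FN(t)=\emptyset$) together with branchwise distinctness of bound names, and for closed terms the non-shadowing condition reduces precisely to the latter. I would then prove: if $t,s$ are both good for~$N$ and $d\nu(t)=d\nu(s)$, then $t=s$, by induction on the structure of~$t$.

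For the inductive step, write $t=\gamma.f(t_1,\dots,t_n)$ and $s=\delta.g(s_1,\dots,s_m)$. Equality of $d\nu(t)$ and $d\nu(s)$ immediately yields $f=g$, $n=m$, the common head name~$a$, and $d\nu(t_i)=d\nu(s_i)$ for each~$i$. The crucial point is that goodness for~$N$ pins down the marker at the root: if $a\in N$ the root cannot be a binder (a binder name is disjoint from~$N$), so $\gamma=a$; and if $a\notin N$ the root cannot be free (a free name lies in $\FN(t)\subseteq N$), so $\gamma=\nu a$. As~$s$ is good for the same~$N$ and has the same head name, $\delta$ is determined the same way, whence $\gamma=\delta$. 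I would then check that each~$t_i$ and~$s_i$ is good for the updated context, namely~$N$ when $\gamma=a$ and $N\cup\{a\}$ when $\gamma=\nu a$: the free-name inclusions follow from the defining clauses of $\FN$, while branchwise distinctness and disjointness from the enlarged context follow because every branch of~$t_i$ extends a branch of~$t$, on which~$a$ already appears as a bound name in the binder case. The induction hypothesis then gives $t_i=s_i$, and combined with $\gamma=\delta$ and $f=g$ we conclude $t=s$; the base case (constants) is just the $n=0$ instance.

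The main obstacle, and the only place that genuinely uses the hypotheses, is getting this strengthened invariant exactly right, so that it is simultaneously inheritable by subterms and strong enough to force the binder-versus-free decision at the root. Once the notion of being good for~$N$ is fixed, every remaining step is a routine structural verification.
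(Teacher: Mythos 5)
Your proof is correct, but it takes a genuinely different route from the paper. The paper argues by contradiction on tree positions: given two distinct closed non-shadowing terms with the same $d\nu$-image, it picks a \emph{maximal} position at which the binder/free markers disagree, uses closedness to find a binding occurrence strictly above it in one term, transfers that binder to the other term via maximality, and then derives a shadowed binder, contradicting non-shadowing. You instead prove injectivity directly by structural induction with a strengthened invariant (being ``good for $N$'', where $N$ tracks the names bound above the current subterm), showing that the invariant deterministically pins down the marker at each root: $a\in N$ forces a free occurrence, $a\notin N$ forces a binder. Your inheritance checks are right: in the binder case, branchwise distinctness of bound names in $t$ is exactly what makes the subterms' bound names avoid the enlarged context $N\cup\{a\}$, and the generalization to open subterms via $N$ is precisely what the paper's positional argument sidesteps by never descending into subterms. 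What each approach buys: the paper's argument is shorter and avoids setting up an auxiliary notion, but is non-constructive in flavor; yours effectively exhibits an explicit inverse on the image of $d\nu$ (the marker at a node is a binder iff the name is absent from the ancestor-binding context), which makes the ``reading bound names from data trees'' intuition concrete and would be the more convenient formulation for a mechanized or algorithmic treatment.
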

\begin{proof}
  Assume that there are closed non-shadowing terms $t_1,t_2$ such that
  $d\nu(t_1) = d\nu(t_2)$ but $t_1 \neq t_2$. From the definition of
  $d\nu$, we see that~$t_1,t_2$ have the same tree structure and carry
  the same operations and names in every node of the tree, differing
  only in whether the respective name is free or bound. We consider
  positions of tree nodes, and order these positions by stipulating
  that parents are larger than children (so the root becomes the top
  element). Since $t_1 \neq t_2$, there is a \emph{maximal}
  position~$p_1$ at which the respective attached names in~$t_1$
  and~$t_2$ differ w.r.t.\ being free or bound; suppose w.l.o.g.\ that
  this name, say~$a$, is free in~$t_1$ and bound in~$t_2$. Since~$t_1$
  is closed, there must be a position~$p_2$ strictly above~$p_1$ at
  which~$a$ is bound in~$t_1$. By maximality of~$p_1$, $a$ must then
  also be bound at~$p_2$ in~$t_2$; however, this bound occurrence
  of~$a$ is then shadowed at~$p_1$ in~$t_2$, in contradiction to the
  assumption that~$t_2$ is non-shadowing.
\end{proof}

\begin{proof}[Proof of \autoref{lem:injectiveN}]
  We first prove the claim for~$\bfresh$. The `only if' direction (`if
  $L_1\subseteq L_2$, then $\bfresh(L_1)\subseteq \bfresh(L_2)$') is clear. For
  the `if' direction, let~$L_1,L_2$ be alphatic languages such that
  $\bfresh(L_1)\subseteq \bfresh(L_2)$, and let $[t]_\alpha\in L_1$; we have to
  show that $[t]_\alpha\in L_2$. Since every term is
  $\alpha$-equivalent to a non-shadowing one, we can assume that~$t$
  is non-shadowing. Then, $d\nu(t)\in \bfresh(L_1)$, and therefore
  $d\nu(t)\in \bfresh(L_2)$; that is, we have a non-shadowing term~$t'$ such
  that $[t']_\alpha\in L_2$ and $d\nu(t')=d\nu(t)$. But then $t'=t$ by
  \autoref{lem:dnu-inj}, so $[t]_\alpha\in L_2$ as required.

  The claim for~$\gfresh$ is shown completely analogously, using that
  \autoref{lem:dnu-inj} applies in particular also to clean terms and
  that every term is $\alpha$-equivalent to a clean one.
\end{proof}

\subsection{Details for \autoref{sec:rnta}}

\begin{lemma}[name=Equivariance of acceptance, label=lem:equivAcc]
  Let $A = (Q, \Delta, i)$ be an RNTA.
  If $q \in Q$ accepts $t \in \Terms_\Names(\Sigma)$, then for all $\pi \in Perm(\Names)$ $\pi \cdot q$ accepts $\pi \cdot t$.
\end{lemma}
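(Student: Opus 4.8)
The plan is to prove \autoref{lem:equivAcc} by induction on the length of the accepting run witnessing that~$q$ accepts~$t$, or equivalently by structural induction on~$t$ together with the chosen run. The central tool is the \emph{equivariance of the transition relation}~$\Delta$: since~$\Delta$ is an equivariant subset of the relevant nominal set, any transition can be transported along an arbitrary permutation~$\pi\in G$. Concretely, I will first observe that a single rewrite step is stable under the action of~$\pi$: if $q(\gamma.f(x_1,\dots,x_n)) \to \gamma.f(q_1(x_1),\dots,q_n(x_n))$ lies in~$\Delta$, then applying~$\pi$ yields $\pi\cdot q(\pi(\gamma).f(x_1,\dots,x_n)) \to \pi(\gamma).f(\pi\cdot q_1(x_1),\dots,\pi\cdot q_n(x_n))$, which is again in~$\Delta$ by equivariance. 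Here I use that the action on~$\barNames$ satisfies $\pi\cdot a = \pi(a)$ and $\pi\cdot\nu a = \nu\pi(a)$, and that the action on terms commutes with the term formers as defined in \autoref{sec:tree-lang}.

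For the base case, suppose~$q$ accepts a term $t=\gamma.c$ for a constant~$c$, so that the single transition $(q(\gamma.c)\to\gamma.c)\in\Delta$. Applying~$\pi$ and invoking equivariance of~$\Delta$ gives $(\pi\cdot q(\pi(\gamma).c)\to\pi(\gamma).c)\in\Delta$; since $\pi\cdot t = \pi(\gamma).c$, this witnesses that $\pi\cdot q$ accepts $\pi\cdot t$. For the inductive step, suppose~$q$ accepts $t=\gamma.f(t_1,\dots,t_n)$ via a first transition $q(\gamma.f(x_1,\dots,x_n))\to\gamma.f(q_1(x_1),\dots,q_n(x_n))$ in~$\Delta$ followed by accepting runs of each~$q_i$ on~$t_i$. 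By the induction hypothesis applied to each subrun, $\pi\cdot q_i$ accepts $\pi\cdot t_i$ for $i=1,\dots,n$. By the single-step observation above, the image transition $\pi\cdot q(\pi(\gamma).f(x_1,\dots,x_n))\to\pi(\gamma).f(\pi\cdot q_1(x_1),\dots,\pi\cdot q_n(x_n))$ is again in~$\Delta$. Composing this first step with the~$n$ accepting subruns, and using that $\pi\cdot t = \pi(\gamma).f(\pi\cdot t_1,\dots,\pi\cdot t_n)$, we conclude that $\pi\cdot q$ accepts $\pi\cdot t$, completing the induction.

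The only subtlety I anticipate concerns the bound case $\gamma=\nu a$, where one must be careful that the permuted transition still respects $\alpha$-invariance and that the action $\pi\cdot\nu a = \nu\pi(a)$ interacts correctly with the binding structure; however, since equivariance of~$\Delta$ is postulated directly (and $\Delta$ is closed under $\alpha$-invariance by \autoref{RNTA}), no renaming beyond the literal application of~$\pi$ is needed, and the argument goes through uniformly for both free and bound transitions. Thus I expect no genuine obstacle: the whole proof is a routine transport of an accepting run along~$\pi$, with the single nontrivial ingredient being the equivariance of~$\Delta$, which is part of the definition of an RNTA.
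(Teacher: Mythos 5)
Your proposal is correct and matches the paper's own proof essentially step for step: both proceed by induction over the term (the paper inducts on term height, you on the run, which amounts to the same decomposition here), transport the first rewrite rule along $\pi$ using equivariance of $\Delta$, and apply the induction hypothesis to the subruns on $t_1,\dots,t_n$. Your closing remark that the bound case $\gamma=\nu a$ needs no special treatment beyond $\pi\cdot\nu a=\nu\pi(a)$ is also consistent with the paper, which handles free and bound labels uniformly.
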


\begin{proof}
    We proceed via induction on the height of terms. Let $A = (Q, \Delta, i)$ be an RNTA, $q \in Q$ a state and $t \in \Terms_\Names(\Sigma)$ a term such that $q$ accepts $t$. 

    For the base case, let $t = (\gamma, c)$ with a constant $c/0 \in \Sigma$ and $\pi \in Perm(\Names)$. Since $q$ accepts $t$, $q(\gamma.c) \to (\gamma, c)\in \Delta$, hence by equivariance $(\pi \cdot q)(\pi(\gamma),c) \to (\pi(\gamma), c)\in \Delta$ and $\pi \cdot q$ accepts $\pi \cdot t = (\pi(\gamma),c)$ by the acceptance condition.

    For the inductive step, let $t = \gamma.f(t_1,\dots,t_n)$ and $\pi \in Perm(\Names)$. Let $\delta = q(\gamma.f(t_1,\dots,t_n)) \to \gamma.f(q_1(t_1),\dots,q_n(t_n)) \in \Delta$ be the first rewrite rule applied of an accepting run. By the induction hypothesis, $\pi \cdot q_i$ accepts $\pi \cdot t_i$ for all $i\in I$, hence it is left to show that $\delta' = (\pi\cdot q)(\pi\gamma).f(t_1,\dots,t_n) \to (\pi\gamma).f((\pi \cdot q_i)(t_1,\dots,t_n) \in \Delta$, which is the case by equivariance of $\Delta$.
\end{proof}

\subsubsection*{Proof of \autoref{lem:suppEdge}}

\begin{proof}
  The proof relies heavily on the finite branching of the automaton.
  \begin{enumerate}
      \item The set $Z = \{(a, (q_1,\dots,q_n))\mid q(a.f(x_1,\dots,x_n)) \to a.f(q_1(x_1),\dots,q_n(x_n))\in \Delta\}$ of free rewrite rules is finite, and for each $z = (a,(q_1,\dots,q_n))\in Z$ $\supp(z) = \{a\}\cup \bigcup_{i \in I} \supp(q_i)$ is finite, thus $S = \bigcup_{z \in Z} \supp(z)$ is a finite support for each $z \in Z$, making $Z$ uniformly finitely supported. For all $z \in Z$ and $i \in I$, it then obviously holds that $\supp(q_i) \cup \{a\} \subseteq \supp(Z)$ and further $\supp(Z) \subseteq \supp(q)$, since $\Delta$ is an equivariant set, $Z$ depends equivariantly on $\Delta$ and equivariant maps do not extend the support.
      \mycomment{
      in particular for each state there are only finitely many free names for which a rewrite rule exists and for each name the number of non-deterministic rewrite rules are finite. Thus necessarily $a \in \supp(q)$, otherwise we would have by equivarince of $\Delta$ infinite free transition between two states: If we have at least one transition, by applying $(ab) \in Perm(\mathbb{})$ for some existing $a$ and some $b$ such that $(ab)$ fixes $\supp(q)$, we would receive one new transition for each name that is not in $\supp(q)$, which are infinite, contradicting finite branching. The same holds for $\supp(q_i) \subseteq \supp(q)$, since otherwise we would have $x \in \supp(q_i)$ such that infinitely many $\pi \in Perm(\Names)$ would fix $\supp(q)$ but not $x$, resulting in infinitely many non-deterministic successor states, again contradicting finite branching.}

      \item \sloppy For bound rewrite rules the argument is analogous: The set $Z = \{\langle a\rangle(q_1,\dots,q_n))\mid q(a.f(x_1,\dots,x_n)) \to a.f(q_1(x_1),\dots,q_n(x_n))\in \Delta\}$ of bound rewrite rules modulo $\alpha$-invariance is finite, and for each $z = \langle a\rangle(q_1,\dots,q_n)\in Z$ $\supp(z) = \bigcup_{i \in I} \supp(q_i) - \{a\}$ is finite, thus $S = \bigcup_{z \in Z} \supp(z)$ is a finite support for each $z \in Z$, making $Z$ uniformly finitely supported. For all $z \in Z$ and $i \in I$, it then also holds that $\supp(q_i) \subseteq \supp(Z) \cup \{a\}$ and further $\supp(Z) \cup \{a\} \subseteq \supp(q) \cup \{a\}$, again since $\Delta$ is an equivariant set, $Z$ depends equivariantly on $\Delta$ and equivariant maps do not extend the support.
      \mycomment{
      For bound transitions the set $\{\langle a\rangle(q_1,\dots,q_n)|q(\nu a.f(x_1,\dots,x_n)) \to \nu a.f(q_1(x_1),\dots,q_n(x_n))\in \Delta\}$ is finite, but we require $\alpha$-invariance so if $q(\nu a.f(x_1,\dots,x_n)) \to \nu a.f(q_1(x_1),\dots,q_n(x_n)) \in \Delta$ and $\langle a\rangle q_i = \langle b\rangle q_i$ then $q(\nu b.f(x_1,\dots,x_n)) \to \nu b.f(q_1(x_1),\dots,q_n(x_n)) \in \Delta$. Thus the argument is the same: by equivariance the only name which $\supp(q_i)$ may contain in addition to ones included in $\supp(q)$ is the one bound in the transition, since it is the only name in which successor states may differ for a state to have infinite transitions.}
  \end{enumerate}
\end{proof}

\subsection{Details for \autoref{sec:name-dropping}}

\subsubsection*{Proof of \autoref{autoStrong}}

\begin{proof}
  Proof as in \cite{UrbatEA21} for Büchi-RNNA with the necessary modifications.
  
  We are left to verify whether the type functor of RNTA, given in \autoref{eq:functorRNTA}, preserves supp-nondecreasing quotients such that the following diagram commutes:
  
  \begin{figure}[H]

      \begin{center}
          \begin{tikzpicture}
              \node (p) at (-3, 1.5) {$P$};
              \node (Fp) at (3, 1.5) {$\mathcal{P}_{\mathit{ufs}}(\mathlarger{\sum}_{f/n \in \Sigma}P^n\times\Names+\mathlarger{\sum}_{f/n\in\Sigma}\lbrack\Names\rbrack P^n)$};
              \draw[->, dotted] (p) -- node [midway, above]{\small$\beta$} (Fp) ;
      
              \node (q) at (-3, -1.5) {$Q$};
              \node (Fq) at (3, -1.5) {$\mathcal{P}_{\mathit{ufs}}(\mathlarger{\sum}_{f/n \in \Sigma}Q^n\times\Names+\mathlarger{\sum}_{f/n\in\Sigma}\lbrack\Names\rbrack Q^n)$};
              \draw[->] (q) -- node [midway, above]{\small$\gamma$} (Fq) ;
      
              \draw[->>] (p) -- node [midway,right]{\small$e$} (q) ;
              \draw[->>] (Fp) -- node [midway,right]{\small$\mathcal{P}_{\mathit{ufs}}(\mathlarger{\sum}_{f/n \in \Sigma}e^n\times\Names+\mathlarger{\sum}_{f/n\in\Sigma}\lbrack\Names\rbrack e^n)$} (Fq) ;
              
          \end{tikzpicture}
      \end{center}
      
  \end{figure}
  
  This is the case since coproducts preserve supp-nondecreasing quotients and the alphabet is a strong nominal set.
  
  \end{proof}
  
  \subsubsection*{Proof of \autoref{lem:bijectiveChar}}

\begin{proof}
  Let $A = (Q, \Delta, i)$ be an RNTA and $A_\bot = (Q_\bot, \Delta_\bot, i)$ its 
  name-dropping modification. States in $A$ have the form $(j, r)$, states in $A_\bot$ instead form $(j, \restr{r}{N})$. In view of \autoref{def:nameDropMod}, there exists a 
  rewrite rule for a specific $r$ extending $\restr{r}{N}$ and we are left to show there 
  exists a rewrite rule for each state of $A$ extending $\restr{r}{N}$:
 \begin{enumerate}
     \item
     Let $r$ extend $\restr{r}{N}$ such that
     $(j,\restr{r}{N})a.f(x_1,\dots,x_n)\to a.f((k_1,\restr{s_1}{N})x_1,\dots,(k_n,\restr{s_n}{N})x_n)
     \in\Delta_\bot$ for some $\restr{s_i}{N}$. We show that 
     $\delta = (j,r)a.f(x_1,\dots,x_n)\to a.f((k_1,s_1)x_1,\dots,(k_n,s_n)x_n)\in \Delta$ for some 
     $s_i$ extending $\restr{s_i}{N}$.
     By construction, there exist some $r',s'_i$ such that 
     $\delta' = (j,r')a.f(x_1,\dots,x_n)\to a.f((k_1,s'_1)x_1,\dots,(k_n,s'_n)x_n)\in \Delta$. By 
     equivariance of $\Delta$, it suffices to find some $\pi \in Perm(\Names)$ such that 
     $\pi \cdot \delta' = \delta$. Choose $\pi$ such that $\pi \cdot r' = r$. Since $Q$ is 
     equivariant and total injective maps have only one orbit, $\pi$ exists, 
     $s = \pi \cdot s'$ also exists and $\pi \cdot \delta' = \delta$. This is the 
     case since by \autoref{lem:suppEdge} $\supp(s)\cup\{a\} \subseteq \supp(r)$, 
     thus $\pi$ fixes $a$.

     \item
     Let $r$ extend $\restr{r}{N}$ such that $(j,\restr{r}{N})\nu b.f(x_1,\dots,x_n)\to
     \nu b.f((k_1,\restr{s_1}{N})x_1,\dots,(k_n,\restr{s_n}{N})x_n)\in\Delta_\bot$ for some $\restr{s_i}{N}$, 
     we show that $\delta = (j,r)\nu a.f(x_1,\dots,x_n)\to \nu a.f((k_1,s_1)x_1,\dots,(k_n,s_n)x_n)
     \in \Delta$ for some $s_i$ such that $\langle b\rangle \restr{(k_i,s_i)}{N} = 
     \langle a\rangle(k_i, \restr{s_i}{N})$ and $a \in \Names$.
     By construction, for some name $c$ there exist some $r', s'_i$ such that $r'$ extends 
     $\restr{r}{N}$, $\langle c\rangle \restr{s_i}{N} = 
     \langle a\rangle\restr{s'_i}{N}$ and $\delta' = 
     (j,r')\nu c.f(x_1,\dots,x_n)\to \nu c.f((k_1,s'_1)x_1,\dots,(k_n,s'_n)x_n)\in \Delta$. Again as for 
     rewrite rules with free names, choose $\pi$ such that $\pi \cdot r' = r$, existing by 
     equivariance. Here, since $\pi$ does not necessarily fix $c$, let $a = \pi(c)$, 
     $s_i = \pi \cdot s'_i$ and eventually $\delta = \pi \cdot \delta' = 
     (\pi \cdot (j,r'))\nu(\pi c).f(x_1,\dots,x_n)\to \nu(\pi c).f((\pi \cdot 
     (k_1,s'_1))(x_1), \dots, (\pi \cdot 
     (k_n,s'_n))(x_n)) = (j,r)\nu a.f(x_1,\dots,x_n)\to 
     \nu a.f((k_1,s_1)x_1,\dots,(k_n,s_n)x_n) \in \Delta$ by equivariance.
     
 \end{enumerate}
\end{proof}

\subsubsection*{Proof of \autoref{th:nameDropAlpha}}
  
We split the proof into a sequence of lemmas:

\begin{corollary}[label=cor:preserveAllRR]
  The name-dropping modification of an RNTA contains all rewrite rules of the unmodified RNTA.
\end{corollary}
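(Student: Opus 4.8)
The plan is to show directly that $\Delta\subseteq\Delta_\bot$ by instantiating the closure clauses of \autoref{def:nameDropMod} with the parameters that drop \emph{no} names at all. The first observation is that $Q\subseteq Q_\bot$: every total injective map $r\in\Names^{\#X_i}$ is in particular a partial injective map in $\Names^{\$X_i}$, so each state $(i,r)$ of $A$ is literally a state of $A_\bot$. It therefore suffices to reproduce each transition of $A$ verbatim, treating its source and target states as states of $A_\bot$.

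For a free transition $q(a.f(x_1,\dots,x_n))\to a.f(q'_1(x_1),\dots,q'_n(x_n))$ in $A$, I would apply clause~(2) of \autoref{def:nameDropMod} with $N=\supp(q)$ and $N_i=\supp(q'_i)$. The side conditions demand $a\in N$ and $N_i\subseteq\supp(q'_i)\cap N$; both follow from the first item of \autoref{lem:suppEdge}, which gives $\supp(q'_i)\cup\{a\}\subseteq\supp(q)$. Hence $a\in\supp(q)=N$ and $N_i=\supp(q'_i)\subseteq\supp(q)=N$. Since $N=\supp(q)$ yields $\restr{q}{N}=q$ and $N_i=\supp(q'_i)$ yields $\restr{q'_i}{N_i}=q'_i$, the transition produced in $A_\bot$ is exactly the original one.

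For a bound transition $q(\nu a.f(x_1,\dots,x_n))\to \nu a.f(q'_1(x_1),\dots,q'_n(x_n))$ in $A$, I would apply clause~(3) with $N=\supp(q)$ and $N_i=\supp(q'_i)$, and take the $\alpha$-renaming to be trivial, namely $b=a$ and $q''_i=q'_i$. Here the second item of \autoref{lem:suppEdge} gives $\supp(q'_i)\subseteq\supp(q)\cup\{a\}$, so $N_i=\supp(q'_i)\subseteq\supp(q'_i)\cap(N\cup\{a\})$ as required, and again $\restr{q}{N}=q$ and $\restr{q'_i}{N_i}=q'_i$. The condition $\langle a\rangle\restr{q'_i}{N_i}=\langle b\rangle q''_i$ then reduces to the tautology $\langle a\rangle q'_i=\langle a\rangle q'_i$, so clause~(3) yields precisely the original transition in $A_\bot$.

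There is essentially no obstacle here: the clauses of the name-dropping modification are phrased as closure conditions which, in the degenerate instance where nothing is dropped ($N=\supp(q)$, $N_i=\supp(q'_i)$, and no renaming), simply copy the transition. The only genuine content is checking that these maximal choices of $N$ and $N_i$ satisfy the support-theoretic side conditions of the definition, and this is exactly what \autoref{lem:suppEdge} supplies in both the free and the bound case.
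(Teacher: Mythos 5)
Your proof is correct and follows essentially the same route as the paper's: both instantiate the closure clauses of \autoref{def:nameDropMod} with the trivial restrictions ($N=\supp(q)$, $N_i=\supp(q'_i)$, and $b=a$ with $q''_i=q'_i$ in the bound case) and discharge the side conditions via the two items of \autoref{lem:suppEdge}. Your version is in fact slightly more explicit than the paper's, spelling out the inclusion $Q\subseteq Q_\bot$ and the verification that the maximal choices of $N,N_i$ satisfy the support conditions, but there is no substantive difference.
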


\begin{proof}
  Let $A = (Q, \Delta, i)$ be an RNTA and $A_\bot = (Q_\bot, \Delta, i)$ its name-dropping modification.
  
  For free transitions, let $\delta = (j,r)a.f(x_1,\dots,x_n) \to a.f((k_1,s_1)x_1,\dots,(k_n,s_n)x_n) \in \Delta$, we show that $\delta \in \Delta_\bot$. By definition of the name-dropping modification, we instantiate $\restr{r}{N}, \restr{s_i}{N}$ with $r,s_i$, since they extend themselves and by \autoref{lem:suppEdge} $\supp(s) \cup \{a\} \subseteq \supp(r)$.

  For bound transitions, let $\delta = (j,r)\nu a.f(x_1,\dots,x_n) \to \nu a.f((k_1,s_1)x_1,\dots,(k_n,s_n)x_n) \in \Delta$ and we show that $\delta \in \Delta_\bot$. Choose $\restr{r}{N} = r$, $\restr{s_i}{N} = s_i$ and $a = b$, then $\langle b\rangle \restr{s_i}{N} = \langle a\rangle s_i = \langle a\rangle(\restr{s_i}{dom(s_i)}) = \langle a\rangle(\restr{s_i}{dom(\restr{s_i}{N})})$ and $\supp(\restr{s_i}{N}) \subseteq \supp(\restr{r}{N}) \cup \{a\}$ by \autoref{lem:suppEdge}, hence by construction $(j,r)\nu a.f(x_1,\dots,x_n) \to \nu a.f((k_1,s_1)x_1,\dots,(k_n,s_n)x_n) \in \Delta_\bot$.
\end{proof}

\begin{lemma}[label=lem:nameDropModRNTA]
The name-dropping modification of an RNTA is an RNTA.
\end{lemma}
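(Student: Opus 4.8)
The plan is to verify, one by one, the defining requirements of an RNTA (\autoref{RNTA}) for the modification $A_\bot=(Q_\bot,\Delta_\bot,q_0)$: that $Q_\bot$ is an orbit-finite nominal set, that $\Delta_\bot$ is equivariant and $\alpha$-invariant, and that $\Delta_\bot$ is finitely branching up to $\alpha$-equivalence. Orbit-finiteness of $Q_\bot=\sum_{i\in I}\Names^{\$X_i}$ is immediate: for a finite set $X$, a partial injective map $X\pfun\Names$ is determined up to the $G$-action by its domain, so $\Names^{\$X}$ has exactly $2^{|X|}$ orbits; since $I$ is finite, $Q_\bot$ is orbit-finite (this is the source of the factor $2^d$ in \autoref{th:nameDropAlpha}).

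For equivariance of $\Delta_\bot$ I would simply transport a derived transition along a permutation. Restriction commutes with the action, $\pi\cdot\restr{r}{N}=\restr{(\pi\cdot r)}{\pi[N]}$, so applying $\pi$ to a transition obtained from $q(a.f(\dots))\to a.f(q'_1(\dots),\dots,q'_n(\dots))\in\Delta$ via clause~(2) of \autoref{def:nameDropMod} yields exactly the transition obtained from the $\pi$-image rule --- which lies in $\Delta$ by equivariance of $\Delta$ --- using the shifted data $\pi[N],\pi[N_i]$ and read name $\pi(a)$. The bound case (clause~(3)) is handled identically, using that abstraction is equivariant, $\pi\cdot\langle a\rangle x=\langle\pi(a)\rangle(\pi\cdot x)$. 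The $\alpha$-invariance requirement is essentially built into clause~(3), which closes bound transitions under the condition $\langle a\rangle\restr{q'_i}{N_i}=\langle b\rangle q''_i$; to confirm genuine $\alpha$-invariance of the resulting relation I would compose two such renamings, observing that if $(q''_1,\dots,q''_n)$ and a second tuple abstract equally over $a$ and $c$, then a transition reading $\nu a$ produces one reading $\nu c$.

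The main obstacle is finite branching, because a single state $p=\restr{q}{N}$ of $A_\bot$ has, in general, infinitely many total extensions $q$ in $Q$, each a potential source of transitions in $A$. The resolution is a support-boundedness argument. Fix $p=\restr{q}{N}$ and a symbol $f/n$, and note $\supp(p)=N$. By \autoref{lem:bijectiveChar}, every free transition out of $p$ has the form $(a,(\restr{q'_1}{N_1},\dots,\restr{q'_n}{N_n}))$ with $a\in N$ and $N_i\subseteq N$, while every bound transition has successors $\restr{q'_i}{N_i}$ with $N_i\subseteq N\cup\{a\}$ for the bound name $a$. Hence each free transition, regarded as an element of the relevant orbit-finite nominal set of transition data, has support contained in the fixed finite set $N$; and each bound transition does so after passing to its abstraction $\langle a\rangle(\restr{q'_1}{N_1},\dots,\restr{q'_n}{N_n})$, which deletes $a$. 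Thus the set of outgoing free transitions, respectively the set of $\alpha$-classes of outgoing bound transitions, is uniformly finitely supported inside an orbit-finite nominal set, and is therefore finite by the fact recalled in \autoref{sec:prelims}. This establishes finite branching up to $\alpha$-equivalence and completes the verification that $A_\bot$ is an RNTA.
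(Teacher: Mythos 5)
Your proof is correct, and on equivariance and $\alpha$-invariance it essentially matches the paper's argument --- if anything, your $\alpha$-invariance check is more direct: the paper re-derives the renamed rule by going back to an underlying rule in $\Delta$ via \autoref{lem:bijectiveChar}, whereas you just compose the abstraction equality built into clause~(3) of \autoref{def:nameDropMod} with the given one (which works because the side conditions of clause~(3) do not mention the bound name of the resulting $A_\bot$-rule, and componentwise abstraction equality over a fixed pair of names is equivalent to the tuple-level one in the RNTA definition). Where you genuinely diverge is finite branching. The paper counts concretely: all transitions out of $\restr{q}{N}$ arise from the finitely many transitions (modulo $\alpha$ in the bound case) out of a single representative extension $q$ in $A$, each of which spawns only finitely many restrictions since a finite support has only finitely many subsets, with the $\alpha$-renamed copies generated by clause~(3) collapsing into one abstraction class per restriction. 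You instead establish a uniform support bound --- every outgoing free datum lies in $\Names\times Q_\bot^n$ with support contained in $N=\supp(\restr{q}{N})$, and every bound datum has support contained in $N$ after abstracting the bound name --- and then invoke orbit-finiteness of the ambient nominal sets together with the fact from the preliminaries that uniformly finitely supported subsets of orbit-finite nominal sets are finite. Both routes are sound; yours is shorter and more conceptual, and it correctly avoids circularity by deriving the support bounds from \autoref{lem:bijectiveChar} rather than from \autoref{lem:suppEdge} applied to $A_\bot$ (whose proof presupposes finite branching), while the paper's explicit counting has the advantage of feeding directly into the quantitative orbit estimate (the factor $2^d$) used in \autoref{th:nameDropAlpha}. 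Your explicit verification that $Q_\bot$ is orbit-finite, with $2^{|X_i|}$ orbits per summand, is a point the paper's lemma proof leaves implicit, so including it is a small but genuine improvement in completeness.
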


\begin{proof}
Let $A_\bot = (Q_\bot, \Delta_\bot, i)$ be the name-dropping modification of an RNTA $A = (Q, \Delta, i)$. We verify the properties of an RNTA:

\begin{enumerate}
  \item\sloppy The set of rewrite rules $\Delta_\bot$ is equivariant: Let $\delta_\bot = (j, \restr{r}{N})\gamma.f(x_1,\dots,x_n) \to \gamma.f((k_1,\restr{s_1}{N})x_1,\dots,(k_n,\restr{s_n}{N})x_n) \in \Delta_\bot$ and $\pi \in Perm(\Names)$, we have to show that $\delta_\bot' = \pi \cdot \delta_\bot \in \Delta_\bot$. 
  \begin{itemize}
      \item $\gamma = a$: By definition of the name-dropping modification, there is some $\delta = (j, r)a.f(x_1,\dots,x_n) \to a.f((k_1,s_1)x_1,\dots,(k_n,s_n)x_n) \in \Delta$ such that $r,s_i$ extends $\restr{r}{N},\restr{s_i}{N}$, respectively, for all $i \in I$. By equivariance, we have $\delta' = \pi \cdot \delta \in \Delta$. Since the names of $\restr{r}{N}$ appear in $r$ at the same position and likewise for $\restr{s_i}{N}$ and $s_i$ for all $i \in I$, we can restrict the states that appear in $\delta'$ to the states in $\delta_\bot'$, hence by the definition of the name-dropping modification $\delta_\bot' \in \Delta_\bot$.
      \item $\gamma = \nu a$: The situation is analogous to the free case. By definition, it holds that $\delta = (j, r)\nu b.f(x_1,\dots,x_n) \to \nu b.f((k_1,s_1)x_1,\dots,(k_n,s_n)x_n) \in \Delta$ with some $s_i$ such that $\langle b\rangle \restr{s_i}{N} = \langle a\rangle(\restr{s_i}{dom(\restr{s_i}{N})})$ for some $b \in \Names$, $r$ extending $\restr{r}{N}$ and all $i \in I$. By equivariance, we have $\delta' = \pi \cdot \delta \in \Delta$. The names of $\restr{r}{N}$ appear in $r$ at the same position and the same holds for $s_i$ and $\restr{s_i}{N}$, disregarding $a$ and $b$. Since $\langle b\rangle \restr{s_i}{N} = \langle a\rangle(\restr{s_i}{dom(\restr{s_i}{N})})$, $a$ can appear in $s_i$ only at the positions of $\restr{s_i}{N}$ previously undefined for all $i \in I$, which will be discarded later nevertheless, hence we can proceed as for the free case. We can restrict the maps in the states that appear in $\delta'$ to the ones in $\delta'_\bot$ and eventually $\delta'_\bot \in \Delta_\bot$ by $3.$ of \autoref{def:nameDropMod}.     
  \end{itemize}
  
  \item The set of rewrite rules is $\alpha$-invariant: We prove that if $(j, \restr{r}{N})\nu a.f(x_1,\dots,x_n)\to \nu a.f((k_1,\restr{s_1}{N})x_1,\dots,(k_n,\restr{s_n}{N})x_n) \in \Delta_\bot$ and $\langle a\rangle(k_i,\restr{s_i}{N}) = \langle b\rangle(k_i,\restr{s'_i}{N})$ for all $i\in I$, then also $(j, \restr{r}{N})\nu b.f(x_1,\dots,x_n)\to \nu b.f((k_1,\restr{s'_1}{N})(x_1),\dots, (k_n,\restr{s'_n}{N})(x_n)) \in \Delta_\bot$. So let $\delta_\bot = (j, \restr{r}{N})\nu a.f(x_1,\dots,x_n)\to \nu a.f((k_1,\restr{s_1}{N})x_1,\dots,(k_n,\restr{s_n}{N})x_n) \in \Delta_\bot$ and $q'_i \in Q_\bot$ such that $\langle a\rangle(k_i,\restr{s_i}{N}) = \langle b\rangle(k_i,\restr{s'_i}{N})$ for all $i \in I$.
  Choose $r$ extending $\restr{r}{N}$.% such that $b \not\in \supp(r)$.
  Then $\delta = (j, r)\nu c.f(x_1,\dots,x_n)\to \nu c.f((k_1,s_1)x_1,\dots,(k_n,s_n)x_n) \in \Delta$ for some $s_i$ such that $\langle a\rangle \restr{s_i}{N} = \langle c\rangle(\restr{s_i}{dom(\restr{s_i}{N})})$ for $i \in I$ by \autoref{lem:bijectiveChar}. Since also $\langle a\rangle(k_i,\restr{s_i}{N}) = \langle b\rangle(k_i,\restr{s'_i}{N})$ for all $i\in I$, $a$ and $b$ appear in $s_i$ only in the previously undefined part of $\restr{s_i}{N}$ for all $i \in I$, thus further $ \langle c\rangle(\restr{s_i}{dom(\restr{s_i}{N})}) =  \langle a\rangle((ac)\cdot\restr{s_i}{dom(\restr{s_i}{N})}) = \langle a\rangle \restr{s_i}{N} = \langle b\rangle \restr{s'_i}{N}$ for all $i \in I$, hence finally $(j, \restr{r}{N})\nu b.f(x_1,\dots,x_n)\to \nu b.f((k_1,\restr{s'_1}{N})(x_1), \dots, (k_n,\restr{s'_n}{N})(x_n)) \in \Delta_\bot$ by \autoref{def:nameDropMod}.

  \item The automaton is finitely branching: We have to verify that for each state $(j,\restr{r}{N}) \in Q_\bot$ the number of rewrite rules where $(j,\restr{r}{N})$ appears on the left side ("outgoing transitions") is finite, with free names and bound names modulo $\alpha$-invariance respectively, i.e. for all $(j,\restr{r}{N}) \in Q_\bot$ the sets $\{(a, ((k_1,\restr{s_1}{N}),\dots, (k_n,\restr{s_n}{N})))\mid (j,\restr{r}{N})a.f(x_1,\dots,x_n) \to a.f((k_1,\restr{s_1}{N})x_1,\dots,(k_n,\restr{s_n}{N})x_n)\in \Delta_\bot\}$ and $\{\langle a\rangle((k_1,\restr{s_1}{N}),\dots,(k_n,\restr{s_n}{N}))\mid (j,\restr{r}{N})\nu a.f(x_1,\dots,x_n) \to \nu a.f((k_1,\restr{s_1}{N})x_1,\dots,(k_n,\restr{s_n}{N})x_n)\in \Delta_\bot\}$ are finite (we w.l.o.g. assume $A$ to be based on a strong nominal set).
  \begin{itemize}
      \item The set of free rewrite rules $\{(a, ((k_1,\restr{s_1}{N}),\dots, (k_n,\restr{s_n}{N})))\mid (j,\restr{r}{N})a.f(x_1,\dots,x_n) \to a.f((k_1,\restr{s_1}{N})(x_1),\dots, (k_n,\restr{s_n}{N})(x_n))\in \Delta_\bot\}$ is finite: Since $A$ is an RNTA, $\{(a, ((k_1,s_1), \dots, (k_n,s_n)))\mid (j,r)a.f(x_1,\dots,x_n) \to a.f((k_1,s_1)x_1,\dots,(k_n,s_n)x_n)\in \Delta\}$ is finite. We verify that $2.$ from \autoref{def:nameDropMod}, which generates $\Delta_\bot$ from $\Delta$, preserves finiteness of the number of free outgoing transitions also in $\Delta_\bot$. Since for an individual $a \in \Names$ it holds that $\supp((k_i,s_i))$ is finite for each $(k_i,s_i)$ appearing in some $(j,r)a.f(x_1,\dots,x_n) \to a.f((k_1,s_1)x_1,\dots,(k_n,s_n)x_n)\in \Delta$, the set $\{(k_i,\restr{s_i}{N})\}$ of possible states that are extended by $(k_i,s_i)$ is also finite because we only drop names from the support. Furthermore, since by $2.$ of \autoref{def:nameDropMod} $a \in \supp(j,r)$ and $\supp(j,r)$ is finite, also $\{(a, ((k_1,\restr{s_1}{N}),\dots,(k_1,\restr{s_1}{N})))\mid (j,\restr{r}{N})a.f(x_1,\dots,x_n) \to a.f((k_1,\restr{s_1}{N})(x_1), \dots, (k_n,\restr{s_n}{N})(x_n))\in \Delta_\bot\}$ is finite.
      \item The set of bound rewrite rules modulo $\alpha$-invariance $\{\langle a\rangle((k_1,\restr{s_1}{N}),\dots, (k_n,\restr{s_n}{N}))\mid (j,\restr{r}{N})\nu a.f(x_1,\dots,x_n) \to \nu a.f((k_1,\restr{s_1}{N})x_1,\dots,(k_n,\restr{s_n}{N})x_n)\in \Delta_\bot\}$ is finite: Again, $\{\langle a\rangle((k_1,s_1),\dots, (k_n,s_n))\mid (j,r)\nu a.f(x_1,\dots,x_n) \to \nu a.f((k_1,s_1)x_1,\dots,(k_n,s_n)x_n)\in \Delta\}$ is finite since $A$ is an RNTA. We verify that $3.$ from \autoref{def:nameDropMod} does generate $\Delta_\bot$ such that the number of outgoing bound transitions is finite if we abstract from the bound name. The argument is similar to the free case: For each state $(k_i,s_i)$ that appears in some rule in $\Delta$, the number of states $(k_i,\restr{s_i}{N})$ that are extended by $(k_i,\restr{s_i}{N})$ is finite. Eventually, as \autoref{def:nameDropMod} requires $\langle b\rangle \restr{s_i}{N} = \langle a\rangle(\restr{s}{dom(\restr{s_i}{N})})$, also the set $\{\langle a\rangle((k_1,\restr{s_1}{N}),\dots, (k_n,\restr{s_n}{N}))\mid (j,\restr{r}{N})\nu a.f(x_1,\dots,x_n) \to \nu a.f((k_1,\restr{s_1}{N})x_1,\dots,(k_n,\restr{s_n}{N})x_n)\in \Delta_\bot\}$ is finite since $\restr{s_i}{N}$ is injective and thus abstracting from a name results in one equivalence class for each possible restriction of $s_i$ (which are finitely many as mentioned above).  
  \end{itemize}
   
\end{enumerate}

\end{proof}

\begin{lemma}[label=lem:dropStateAcc]
  Let $(j,r)$ and $(j,\restr{r}{N})$ be states of an RNTA $A$ and its name-dropping
  modification $A_\bot$, respectively, such that $r$ extends $\restr{r}{N}$, i.e.
  $(j,\restr{r}{N})$ is generated by dropping names from $(j,r)$, then $\{ t  \in L(j,r)
  \mid  \FN(t) \subseteq codom(\restr{r}{N}) = \supp(j,\restr{r}{N}) \} \subseteq
  L(j,\restr{r}{N})$.
\end{lemma}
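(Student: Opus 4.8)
The plan is to prove the inclusion by induction on the height of the accepted term~$t$, establishing the sharper statement that whenever $(j,r)$ accepts~$t$ in~$A$ and $\FN(t)\subseteq N$ (where $N=codom(\restr{r}{N})=\supp(j,\restr{r}{N})$), the restricted state $(j,\restr{r}{N})$ accepts~$t$ in~$A_\bot$. The fact I would lean on throughout is \autoref{cor:suppFN}: the support of any accepting state contains all free names of the accepted term. The transitions of~$A_\bot$ I exhibit come directly from clauses~(2) and~(3) of \autoref{def:nameDropMod}.

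For the base case $t=\gamma.c$ with $c/0\in\Sigma$ there are no successors, so the side conditions of \autoref{def:nameDropMod} on successor states are vacuous. If $\gamma=a$ then $a\in\FN(t)\subseteq N\subseteq\supp(j,r)$, and clause~(2) yields $\restr{(j,r)}{N}(a.c)\to a.c$ in~$A_\bot$; if $\gamma=\nu a$ then $\FN(t)=\emptyset$, and clause~(3) with $b=a$ yields $\restr{(j,r)}{N}(\nu a.c)\to\nu a.c$ in~$A_\bot$. In either case $(j,\restr{r}{N})$ accepts~$t$.

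For the inductive step write $t=\gamma.f(t_1,\dots,t_n)$; acceptance of~$t$ by $(j,r)$ gives a transition $(j,r)(\gamma.f(x_1,\dots,x_n))\to\gamma.f(q_1(x_1),\dots,q_n(x_n))$ in~$A$ with each~$q_i$ accepting~$t_i$. The crucial move is to set $N_i=\FN(t_i)$: by \autoref{cor:suppFN} this gives $N_i\subseteq\supp(q_i)$, so $\restr{q_i}{N_i}$ is a well-defined state of~$A_\bot$ whose support is exactly $\FN(t_i)$, which is precisely what the induction hypothesis needs to conclude that $\restr{q_i}{N_i}$ accepts~$t_i$ in~$A_\bot$. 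In the free case $\gamma=a$ we have $\FN(t_i)\subseteq\FN(t)\subseteq N$, hence $N_i\subseteq\supp(q_i)\cap N$ and $a\in N$, so clause~(2) supplies the transition $\restr{(j,r)}{N}(a.f(x_1,\dots,x_n))\to a.f(\restr{q_1}{N_1}(x_1),\dots,\restr{q_n}{N_n}(x_n))$. In the bound case $\gamma=\nu a$ we have $\FN(t)=\bigl(\bigcup_i\FN(t_i)\bigr)\setminus\{a\}\subseteq N$, hence $\FN(t_i)\subseteq N\cup\{a\}$ and so $N_i\subseteq\supp(q_i)\cap(N\cup\{a\})$; choosing $b=a$ and $q_i''=\restr{q_i}{N_i}$ makes the $\alpha$-equivalence side condition of clause~(3) trivially hold, supplying $\restr{(j,r)}{N}(\nu a.f(x_1,\dots,x_n))\to\nu a.f(\restr{q_1}{N_1}(x_1),\dots,\restr{q_n}{N_n}(x_n))$. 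Prepending this transition to the accepting runs obtained from the induction hypothesis shows that $(j,\restr{r}{N})$ accepts~$t$.

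I expect the bound case to demand the most care: one must verify that the head name~$a$ may be retained in a successor support even though it is dropped from $\FN(t)$ (which is exactly the purpose of the slack $N\cup\{a\}$ in clause~(3)), and that taking $b=a$ aligns the bound name of the synthesised transition with that of~$t$ so the rule really rewrites~$t$. The single uniform choice $N_i=\FN(t_i)$ is what simultaneously meets the support-size side conditions of \autoref{def:nameDropMod} and matches the support demanded by the induction hypothesis, so confirming its admissibility via \autoref{cor:suppFN} is the heart of the argument.
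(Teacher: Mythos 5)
Your proof is correct and follows essentially the same route as the paper's: induction on the height of~$t$, with the pivotal choice of restricting each successor to exactly the free names of its subterm ($N_i=\FN(t_i)$), justified by \autoref{cor:suppFN}, and with the bound name retained ($b=a$) so that clause~(3) of \autoref{def:nameDropMod} applies with a trivial $\alpha$-equivalence side condition. If anything, your treatment of the bound case ($\FN(t_i)\subseteq\FN(t)\cup\{a\}\subseteq N\cup\{a\}$) is stated more cleanly than the paper's corresponding step.
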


\begin{proof}
  We proceed via induction on the height of $t$. The base case is trivial. Let $(j,r) \in Q$, $(j,\restr{r}{N}) \in Q_\bot$ such that $r$ extends $\restr{r}{N}$, $(j, r)$ accepts $t = \gamma.f(t_1,\dots,t_n)$ and $\FN(t) \subseteq codom(\restr{r}{N}) = \supp(\restr{r}{N})$; we have to show that $(j,\restr{r}{N})$ accepts $t$. Since $(j, r)$ accepts $t$, there is $(j,r)\gamma.f(x_1,\dots,x_n) \to \gamma.f((k_1,s_1)x_1,\dots,(k_n,s_n)x_n) \in \Delta$ such that each $(k_i,s_i)$ accepts $t_i$ for all $i \in I$. We therefore have $\FN(t_i) \subseteq \supp((k_i,s_i))$ by \autoref{cor:suppFN}, thus choose $(k_i,\restr{s_i}{N})$ such that $\FN(t_i) = \supp(\restr{s_i}{N})$, which accepts $t_i$ by the induction hypothesis, for all $i \in I$.
  \begin{itemize}
      \item\sloppy $\gamma = a$: Since $a \in \FN(t)$ and, by \autoref{lem:suppEdge} and choice of the injective maps, $\supp(\restr{r}{N}) \supseteq \FN(t) \supseteq \FN(t_i) = \supp(\restr{s_i}{N})$, it holds that $\supp(\restr{r}{N}) \supseteq \{a\}\, \cup\, \supp(\restr{s_i}{N})$ for all $i \in I$, hence by the construction of the name-dropping modification $(j,\restr{r}{N})a.f(x_1,\dots,x_n) \to a.f((k_1,\restr{s_1}{N})x_1,\dots,(k_n,\restr{s_n}{N})x_n) \in \Delta_\bot$ and eventually $(j,\restr{r}{N})$ accepts $t$.
      \mycomment{
       So choose $(k_i,\restr{s_i}{N})$ with the largest support such that $\supp((k_i,\restr{s_i}{N})) \subseteq \supp((j,\restr{r}{N}))$ and $\supp((k_i,\restr{s_i}{N})) \subseteq \supp((k_i,s_i))$, i.e. $\supp((k_i,\restr{s_i}{N})) = \supp((j,\restr{r}{N})) \cap \supp((k_i,s_i))$. Since both $\FN(t) \subseteq \supp(\restr{r}{N})$ and $\FN(t_i) \subseteq \supp((k_i,s_i))$, holds $\FN(t_i) \subseteq \supp(\restr{s_i}{N})$ and thus $(k_i,\restr{s_i}{N})$ accept $t_i$. Finally $r,s_i,\restr{r}{N}$ and $\restr{s_i}{N}$ fulfill all requirements such that  by construction of the name-dropping modification.}
      \item $\gamma = \nu a$: We verify that $(j,\restr{r}{N})\nu a.f(x_1,\dots,x_n) \to \nu a.f((k_1,\restr{s_1}{N})x_1,\dots,(k_n,\restr{s_n}{N})x_n) \in \Delta_\bot$. Since we have $(j,\restr{r}{N})a.f(x_1,\dots,x_n) \to a.f((k_1,\restr{s_1}{N})x_1,\dots,(k_n,\restr{s_n}{N})x_n) \in \Delta$ for the same bound name $a$, we are left to verify that $\supp(\restr{r}{N}) \cup \{a\} \supseteq \supp(\restr{s_i}{N})$ for all $i \in I$. We have chosen $(j,\restr{r}{N})$ and $(k_i,\restr{s_i}{N})$ such that $\supp(\restr{r}{N})\supseteq \FN(t)$ and $\FN(t_i) = \supp(\restr{s_i}{N})$ for all $i \in I$. Suppose w.l.o.g. $a$ is free in some $t_i$, then $\supp(\restr{r}{N}) \cup \{a\} \supseteq \FN(t) \cup \{a\} = \FN(t_i) = \supp(\restr{s_i}{N})$ for all $i \in I$. The required rewrite rule exists in the name-dropping modification by construction and $(j, \restr{r}{N})$ accepts $t$.
      \mycomment{ 
      The case for bound transitions is almost identical to rewrite rules with free names. In fact by \autoref{lem:suppEdge} for bound transitions we may only suppose $\supp(q_i)\subseteq \supp(q)\cup\{a\}$ if
      $q(\nu a.f(x_1,\dots,x_n)) \to \nu a.f(q_1(x_1),\dots,q_n(x_n)) \in \Delta$, but then $a$ does not appear free anyway.}

      \mycomment{By construction exists a transition $(j,\restr{r}{N})a.f(x_1,\dots,x_n) \to a.f((k_1,s_1)x_1,\dots,(k_n,s_n)x_n) \in \Delta_\bot$, left to show that $(j,s)\alpha.f(x_1,\dots,x_n) \to \alpha.f((k_1,\restr{r_1}{N})(x_1),\dots,(k_n,\restr{r_n}{N})(x_n)) \in \Delta_\bot$ such that $\FN(t_i) \subseteq codom(\restr{s'_i}{N})$. By equivariance of $\Delta$ we can find $\pi \in Perm(\Names)$ such that $\pi \cdot s'_i$
      fixes free names, 
      and so - independent from the actual choice of $\alpha$ since it is the same name also in the bound case - by \autoref{def:nameDropMod} $(j,s)\alpha.f(x_1,\dots,x_n) \to \alpha.f((k_1,\restr{r_1}{N})(x_1),\dots,(k_n,\restr{r_n}{N})(x_n)) \in \Delta_\bot$.}
  \end{itemize}
\end{proof}

%\subsubsection*{Proof of Proposition \ref{lem:alphaClosure}}

\begin{lemma}[label=lem:alphaClosure]
  Given an RNTA $A$, the literal tree language of the name-dropping modification
  $A_\bot$ is closed under $\alpha$-equivalence.
\end{lemma}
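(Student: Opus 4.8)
The plan is to prove the following stronger statement by induction on the height of terms, from which the lemma follows by instantiating $q$ with the initial state: for every state $q$ of $A_\bot$ and all terms $t\equiv_\alpha s$, if $q$ accepts $t$ in $A_\bot$ then $q$ accepts $s$ in $A_\bot$. Since $\alpha$-equivalence preserves the tree shape, the root symbol $f$, its arity, and---being generated only by renaming of bound names---the free/bound status of the root label, $t$ and $s$ always agree on their root, so the induction follows the grammar of terms. Throughout I use that $A_\bot$ is itself an RNTA (\autoref{lem:nameDropModRNTA}), so that $\Delta_\bot$ is $\alpha$-invariant and \autoref{cor:suppFN} applies to it, together with equivariance of acceptance (\autoref{lem:equivAcc}) and the description of the transitions of $A_\bot$ in \autoref{lem:bijectiveChar}.

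The base case ($t=\gamma.c$ for a constant $c$) and the free-root case ($t=a.f(t_1,\dots,t_n)$) are routine: in the latter, $s=a.f(s_1,\dots,s_n)$ with $t_i\equiv_\alpha s_i$, so the very transition witnessing acceptance of $t$ witnesses acceptance of $s$ once each recursive acceptance of $t_i$ is replaced by that of $s_i$ via the induction hypothesis; a constant under a binder, $\nu a.c\equiv_\alpha\nu b.c$, is handled directly by $\alpha$-invariance of $\Delta_\bot$. The bound-root case with \emph{equal} name, $\nu a.f(\vec t)\equiv_\alpha\nu a.f(\vec s)$, reduces by the standard characterization of binder $\alpha$-equivalence (choosing a common fresh name) to $t_i\equiv_\alpha s_i$ and is then dispatched as in the free-root case.

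The crux is the bound-root case with distinct names, $t=\nu a.f(t_1,\dots,t_n)\equiv_\alpha\nu b.f(s_1,\dots,s_n)=s$ with $a\neq b$. Let $q$ accept $t$ via $q(\nu a.f(\vec x))\to\nu a.f(q_1(x_1),\dots,q_n(x_n))$ in $\Delta_\bot$, each $q_i$ accepting $t_i$. First I observe $b\notin\FN(t_i)$ for all $i$: indeed $\FN(s)=\FN(t)=(\bigcup_i\FN(t_i))\setminus\{a\}$ does not contain $b$ (as $b$ is bound at the root of $s$), so with $a\neq b$ no $t_i$ has $b$ free. Hence I may drop $b$ from the successors: for $N_i=\supp(q_i)\setminus\{b\}$ we have $\FN(t_i)\subseteq\supp(\restr{q_i}{N_i})$ by \autoref{cor:suppFN}, so $\restr{q_i}{N_i}$ still accepts $t_i$ (by the dropping-preserves-acceptance property, which holds inside $A_\bot$ exactly as in \autoref{lem:dropStateAcc} since $A_\bot$ is closed under further name dropping), and the dropped transition $q(\nu a.f(\vec x))\to\nu a.f(\restr{q_1}{N_1}(x_1),\dots,\restr{q_n}{N_n}(x_n))$ is again in $\Delta_\bot$ by clause~(3) of \autoref{def:nameDropMod} (cf.\ \autoref{lem:bijectiveChar}). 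Now $b$ occurs in none of the supports $\supp(\restr{q_i}{N_i})$, so $\langle a\rangle(\restr{q_1}{N_1},\dots,\restr{q_n}{N_n})=\langle b\rangle((a\,b)\cdot\restr{q_1}{N_1},\dots,(a\,b)\cdot\restr{q_n}{N_n})$, and $\alpha$-invariance of $\Delta_\bot$ yields $q(\nu b.f(\vec x))\to\nu b.f(q_1'(x_1),\dots,q_n'(x_n))$ with $q_i'=(a\,b)\cdot\restr{q_i}{N_i}$. By equivariance of acceptance, $q_i'$ accepts $(a\,b)\cdot t_i$; and since $a\neq b$ and $b\notin\FN(t_i)$, the binder $\alpha$-equivalence $t\equiv_\alpha s$ specializes to $(a\,b)\cdot t_i\equiv_\alpha s_i$. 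As $(a\,b)\cdot t_i$ has smaller height, the induction hypothesis applied to $q_i'$ gives that $q_i'$ accepts $s_i$, and feeding these through the $\nu b$-transition shows that $q$ accepts $s$.

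I expect the main obstacle to be precisely the potential name capture that motivates the construction: a priori $b$ may lie in $\supp(q_i)$, which \emph{blocks} the $\alpha$-renaming of the binder at the level of the transition, since $\langle a\rangle(q_1,\dots,q_n)=\langle b\rangle(\dots)$ holds only when $b$ is absent from the successor supports. The resolution is the name-dropping step, and the delicate point is justifying that dropping $b$ from the successors preserves acceptance of $t_i$ within $A_\bot$; this rests on $b\notin\FN(t_i)$ together with $A_\bot$ being closed under further dropping, so that \autoref{lem:dropStateAcc} (or its verbatim re-proof for $A_\bot$) applies. The remaining technical input, the characterization $\nu a.f(\vec t)\equiv_\alpha\nu b.f(\vec s)\iff(c\,a)\cdot t_i\equiv_\alpha(c\,b)\cdot s_i$ for fresh $c$, is standard and furnishes the specialization $(a\,b)\cdot t_i\equiv_\alpha s_i$ needed to invoke the induction hypothesis.
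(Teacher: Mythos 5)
Your proof is correct and follows essentially the same route as the paper's: induction on the height of terms, with the crucial root-binder renaming $\nu a \mapsto \nu b$ handled by first dropping the blocking name~$b$ from the successor states (justified by $b\notin\FN(t_i)$, \autoref{cor:suppFN}, and the analogue of \autoref{lem:dropStateAcc} within~$A_\bot$), then applying $\alpha$-invariance of~$\Delta_\bot$ via the swap $(a\,b)$ together with equivariance of acceptance, and finally invoking the induction hypothesis on the subterms. Your explicit observation that the dropping-preserves-acceptance property must be re-justified inside~$A_\bot$ (the paper cites \autoref{lem:dropStateAcc} directly, although its statement formally relates a state of~$A$ to its restriction in~$A_\bot$) is a point the paper glosses over, and your patch---closure of~$A_\bot$ under further name dropping---is exactly the right one.
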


\begin{proof}

  We proceed via induction on the height of terms. By induction, it suffices to show acceptance of terms that are generated by $\alpha$-equivalent renaming of the upmost bound name. We reinforce the induction hypothesis, so let $(j,r)$ accept $t = \nu a.f(t_1,\dots,t_n)$, $b\#t_i$ and $t'_i = (ab)\cdot t_i$. We have to show that $(j,r)$ then also accepts $t' = \nu b.f(t'_1,\dots,t'_n)$. Since $(j,r)$ accepts $t$, some $(k_i,s_i)$ accepts $t_i$ for all $i\in I$ such that $(j,r)\nu a.f(x_1,\dots,x_n) \to \nu a.f((k_1,s_1)x_1,\dots,(k_n,s_n)x_n) \in \Delta_\bot$. Let $(k_i,\restr{s_i}{N})$ restrict $(k_i,s_i)$ such that $\supp((k_i,\restr{s_i}{N})) = \supp((k_i,s_i))/\{b\}$ for all $i \in I$. Since $(k_i,s_i)$ accepts $t_i$, $\FN(t_i) \subseteq \supp((k_i,s_i))$ and $b \not\in \FN(t_i)$, hence by \autoref{lem:dropStateAcc} $(k_i,\restr{s_i}{N})$ accepts $t_i$ and thus $[t_i]_\alpha \in L_\alpha((k_i,\restr{s_i}{N}))$ for all $i\in I$. By \autoref{lem:equivAcc}, it holds that $[t'_i]_\alpha \in L_\alpha((ab)(k_i,\restr{s_i}{N}))$ and by the induction hypothesis, $t'_i \in L((ab)(k_i,\restr{s_i}{N}))$. We have $(j,r)\nu a.f(x_1,\dots,x_n) \to \nu a.f((k_1,s_1)x_1,\dots,(k_n,s_n)x_n) \in \Delta_\bot$, hence by the construction of the name-dropping modification we necessarily also have $(j,r)\nu a.f(x_1,\dots,x_n) \to \nu a.f((k_1,\restr{s_1}{N})x_1,\dots,(k_n,\restr{s_n}{N})x_n) \in \Delta_\bot$ and by $\alpha$-invariance of the name-dropping modification (see \autoref{lem:nameDropModRNTA}) and freshness of $b$, also $(j,r)\nu b.f(x_1,\dots,x_n) \to \nu b.f((ab)\cdot(k_1,\restr{s_1}{N})(x_1), \dots, (ab)\cdot(k_n,\restr{s_n}{N})(x_n)) \in \Delta_\bot$ for all $i \in I$. As a result, $t' \in L((j,r))$.
  
  \mycomment{
  Induction over the definition of $\alpha$-equivalence (\autoref{def:inductiveAlpha}).
      To facilitate the following induction, we reinforce the induction hypothesis, in particular: If $(j,\restr{r}{N})\in Q_\bot$ accepts $t_1$ and $t_1 \equiv_\alpha t_2$, then some $(j,r'_\bot)$, $\restr{r}{N}$ extending $r'_\bot$, accepts $t_2$.
      \begin{enumerate}
          \item

          Let $\langle a \rangle(x_1,\dots,t_n) = \langle b \rangle(t'_1,\dots,t'_n)$ and $(j,\restr{r}{N})$ accept $t_1 = \nu \alpha.f(t_1,\dots,t_n)$, show $(j,\restr{r}{N})$ accepts $t_2 = \nu b.f(t'_1,\dots,t'_n)$.
  
          Since $(j,\restr{r}{N})$ accepts $t_1$, there is $(j,\restr{r}{N})\nu a.f(x_1,\dots,x_n) \to \nu a.f((k_1,\restr{s_1}{N})x_1,\dots,(k_n,\restr{s_n}{N})x_n) \in \Delta_\bot$ such that $(k_i,\restr{s_i}{N})$ accept $t_i$. By IH $(k_i,\restr{s_i}{N})$

          By construction of $\Delta_\bot$ some $(j,r)\nu c.f(x_1,\dots,x_n) \to \nu c.f((k_1,s_1)x_1,\dots,(k_n,s_n)x_n) \in \Delta$ such that for all $i\in I$ $\supp(\restr{s_i}{N}) \subseteq \supp(\restr{r}{N}) \cup \{a\}$ and $\langle a\rangle \restr{s_i}{N} = \langle c\rangle(\restr{s_i}{dom(\restr{s_i}{N})})$.
  
          Now let for all $i \in I$
          \[\restr{s'_i}{N}(x) = 
          \begin{cases}
              b & \text{if } \restr{s_i}{N}(x) = a \\
              \bot & \text{if $\restr{s_i}{N}(x) = b$} \\
              \restr{s_i}{N}(x) & \text{otherwise}
          \end{cases}
          .
          \]
  
          KORREKTUR Einbauen mit \autoref{lem:dropStateAcc}
          
          Obviously $\langle a \rangle \restr{s_i}{N} = \langle b \rangle \restr{s'_i}{N}$, thus $\langle b\rangle \restr{s'_i}{N} = \langle c\rangle(\restr{s_i}{dom(\restr{s_i}{N})})$. Since $\supp(\restr{s'_i}{N}) = codom(\restr{s'_i}{N}) = codom(\restr{s_i}{N})\cup \{b\} - \{a\}$ and thus, since $\supp(\restr{s_i}{N}) \subseteq \supp(\restr{r}{N}) \cup \{a\}$, $\supp(\restr{s'_i}{N}) \subseteq \supp(\restr{r}{N}) \cup \{b\}$, again by construction of $\Delta_\bot$ $(j,\restr{r}{N})\nu b.f(x_1,\dots,x_n) \to \nu b.f((k_1,\restr{s'_1}{N})x_1,\dots,(k_n,\restr{s'_n}{N})x_n) \in \Delta_\bot$.
  
          By means of symmetry, since $\langle a \rangle(t_i) = \langle b \rangle(t'_i)$, $\langle a \rangle (k_i,\restr{s_i}{N}) = \langle b \rangle (k,\restr{s'}{N})_i$ and $(k_i,\restr{s_i}{N})$ accepts $t_i$, $(k,\restr{s'}{N})_i$ accepts $t'_i$, thus $(j,\restr{r}{N})$ accepts $t_2$.
          \item (nothing to show)
          \item Let $(t_1,\dots,t_n) \equiv_\alpha (t'_1,\dots,t'_n)$ and $(j,r)$ accept $\delta.f(t_1,\dots,t_n)$, show $(j,r)$ accepts $\delta.f(t'_1,\dots,t'_n)$.
          
      \end{enumerate}
      }
      
  \end{proof}

%\subsubsection*{Proof of Proposition \ref{lem:sameBar}}

  \begin{lemma}[label=lem:sameBar]
  Let $A_\bot$ be the name-dropping modification of an RNTA $A$, then both automata
  accept the same alphatic tree language, i.e. $L_\alpha(A) = L_\alpha(A_\bot)$.
\end{lemma}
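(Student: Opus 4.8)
The plan is to establish the two inclusions separately. One direction is immediate: by \autoref{cor:preserveAllRR} every rewrite rule of $A$ is also a rewrite rule of $A_\bot$, so every accepting run of $A$ is an accepting run of $A_\bot$; hence $L(A)\subseteq L(A_\bot)$ and therefore $L_\alpha(A)\subseteq L_\alpha(A_\bot)$. The content of the lemma lies in the converse inclusion $L_\alpha(A_\bot)\subseteq L_\alpha(A)$, which amounts to showing that every term accepted by $A_\bot$ is $\alpha$-equivalent to a term accepted by $A$.

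I would isolate this as a lifting claim proved by induction: for every state $q\in Q$ of $A$, every $N\subseteq\supp(q)$, and every term $t$ accepted by $\restr{q}{N}$ in $A_\bot$, there is a term $s\equiv_\alpha t$ accepted by $q$ in $A$. Granting the claim, instantiating $q:=q_0$ and $N:=\supp(q_0)$ (so that $\restr{q_0}{N}=q_0$ is the shared initial state of $A$ and $A_\bot$) shows that any $t\in L(A_\bot)$ is $\alpha$-equivalent to some $s\in L(A)$, whence $[t]_\alpha=[s]_\alpha\in L_\alpha(A)$, as required.

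The claim is proved by induction on the height of $t=\gamma.f(t_1,\dots,t_n)$, decomposing the first rule of the accepting run in $A_\bot$ via \autoref{lem:bijectiveChar}. In the free case $\gamma=a$, the lemma produces states $q_i'\in Q$ and sets $N_i$ with $\restr{q_i'}{N_i}$ accepting $t_i$ and a matching rule $q(a.f(x_1,\dots,x_n))\to a.f(q_1'(x_1),\dots,q_n'(x_n))$ in $A$; the induction hypothesis gives $s_i\equiv_\alpha t_i$ accepted by $q_i'$, and then $q$ accepts $s:=a.f(s_1,\dots,s_n)\equiv_\alpha t$ by congruence of $\equiv_\alpha$.

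The bound case $\gamma=\nu a$ is the main obstacle, since the bound name has to be realigned. Here \autoref{lem:bijectiveChar} supplies a name $b$, states $q_i'$ and sets $N_i$ with a rule $q(\nu b.f(x_1,\dots,x_n))\to\nu b.f(q_1'(x_1),\dots,q_n'(x_n))$ in $A$ and $\langle b\rangle\restr{q_i'}{N_i}=\langle a\rangle q_i$, where $q_i$ is the $A_\bot$-successor accepting $t_i$. Unfolding this abstraction equality gives $q_i=(ab)\cdot\restr{q_i'}{N_i}$ with $a\fresh\restr{q_i'}{N_i}$, so by equivariance of acceptance (\autoref{lem:equivAcc}) the state $\restr{q_i'}{N_i}$ accepts $(ab)\cdot t_i$ in $A_\bot$. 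The induction hypothesis then yields $s_i\equiv_\alpha(ab)\cdot t_i$ accepted by $q_i'$, and the rule above lets $q$ accept $s:=\nu b.f(s_1,\dots,s_n)$ in $A$. It remains to verify $s\equiv_\alpha t$: from $q_i$ accepting $t_i$ and \autoref{cor:suppFN} we get $\FN(t_i)\subseteq\supp(q_i)=(ab)\cdot\supp(\restr{q_i'}{N_i})$, and since $a\fresh\restr{q_i'}{N_i}$ this forces $b\notin\FN(t_i)$. As $b$ is free in no $t_i$, the defining clause of $\equiv_\alpha$ (with temporary renaming of inner bound names, using that $\equiv_\alpha$ is a congruence) yields $\nu a.f(t_1,\dots,t_n)\equiv_\alpha\nu b.f((ab)\cdot t_1,\dots,(ab)\cdot t_n)\equiv_\alpha\nu b.f(s_1,\dots,s_n)=s$, closing the induction. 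I expect the delicate point to be precisely this bookkeeping of the freshness $a\fresh\restr{q_i'}{N_i}$ and its consequence $b\notin\FN(t_i)$, as it is exactly what licenses the final $\alpha$-renaming.
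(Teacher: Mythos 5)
Your proof is correct and takes essentially the same route as the paper's: the easy inclusion via \autoref{cor:preserveAllRR}, and the converse via an inductive lifting claim quantified over all extensions of a name-dropped state, decomposed through \autoref{lem:bijectiveChar}, with \autoref{lem:equivAcc} supplying equivariance and the freshness bookkeeping $b\notin\FN(t_i)$ licensing the final $\alpha$-renaming, exactly as in the paper. The only (immaterial) difference is that you apply the swap $(ab)$ to the input terms before invoking the induction hypothesis, whereas the paper invokes the hypothesis first and then swaps the resulting output terms.
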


\begin{proof}
  \sloppy Let $A = (Q, i, \Delta)$ be an RNTA and $A_\bot = (Q_\bot, i, \Delta_\bot)$ its name-dropping modification. We show mutual language inclusion:
  
  "$\subseteq$": Since name-dropping preserves all rewrite rules (\autoref{cor:preserveAllRR}),
  $A$ is an subautomaton of $A_\bot$, which therefore accepts all terms it previously did.

  "$\supseteq$": Show for each state $(j,\restr{r}{N}) \in Q_\bot$ accepting $t = \gamma.f(t_1,\dots,t_n)$ that each state $(j, r) \in Q$ with $r$ extending $\restr{r}{N}$ accepts some $t' = \beta.f(t'_i) \equiv_\alpha t$. We proceed via induction on the number of applications of rewrite rules in an accepting run.
  
  For the base case let some $(j,\restr{r}{N}) \in Q_\bot$ accept $t = (\gamma, c)$, we show that each $(j,r) \in Q$ with $r$ extending $\restr{r}{N}$ accepts some $t' = (\beta, c) \equiv_\alpha t$. Since $(j,\restr{r}{N})$ accepts $t$, $(j,\restr{r}{N})(\gamma.c) \to (\gamma.c) \in \Delta_\bot$. We distinguish free and bound cases. Let $\gamma = a$, then by \autoref{lem:bijectiveChar}, it holds that $(j,r)(a,c) \to (a,c) \in \Delta$ for each choice of $r$, hence $(j,r)$ in $A$ accepts $t' = t \equiv_\alpha t$. For the bound case, let $\gamma = \nu a$, then by \autoref{lem:bijectiveChar} $(j,r)\nu b.c \to \nu b.c \in \Delta$ for some $b$ and each choice of $r$, hence $(j,r)$ in $A$ accepts $t' = \nu b.c \equiv_\alpha \nu a.c = t$.
  
  \sloppy For the inductive step, let $(j, \restr{r}{N}) \in Q_\bot$ accept $t$ and $\delta = (j,\restr{r}{N})\gamma.f(x_1,\dots,x_n) \to \gamma.f((k_1,\restr{s_1}{N})x_1,\dots,(k_n,\restr{s_n}{N})x_n) \in \Delta_\bot$ be the first rewrite rule applied in an accepting run. By the acceptance condition and the inductive hypothesis, for all $i \in I$ and all states $(k_i,s_i) \in Q$ extending $\restr{s_i}{N}$, $\forall i \in I. (k_i,s_i)$ accepts some $t''_i\equiv_\alpha t_i$. We distinguish between bound and free transitions:
  \begin{itemize}
      \item Case $\gamma = a$: Let $(j,r) \in Q$ with $r$ extending $\restr{r}{N}$.
      By \autoref{lem:bijectiveChar}, a rewrite rule $(j,r)a.f(x_1,\dots,x_n) \to
      a.f((k_1,s'_1)x_1,\dots,(k_n,s'_n)x_n) \in \Delta$ exists for some $s'_i$ extending
      $\restr{s_i}{N}$ for all $i \in I$. By the induction hypothesis, $(k_i,s'_i)$ accepts
      some $t'_i \equiv_\alpha t_i$ for each $i \in I$, thus $(j,r)$ accepts
      $t' = a.f(t'_1,\dots,t'_n)$, which is $\alpha$-equivalent to $t$.
      
      \mycomment{
      By \autoref{lem:equivAcc} $(k_i,s'_i) = \pi \cdot (k,s_i)$ accept $t'_i = \pi \cdot t''_i$, so $(j,r)$ accepts $t' = a.f(t'_1,\dots,t'_n)$.
      
      Indeed $t' \equiv_\alpha t$, which is essentially the case since all $s_i$ and $s'_i$ extend $\restr{s_i}{N}$: As $(k,\restr{s_i}{N})$ accept $t_i$, $\FN(t_i) \subseteq \supp((k,\restr{s_i}{N}))$. Alpha-equivalent terms carry the same free names, so all free names $\FN(t''_i)$ that are necessarily in $\supp((k,s_i))$ are inherited from $(k,\restr{s_i}{N})$. Since all $s_i,s'_i$ extend $s_{\bot_i}$ and thus all $\pi_i \in Perm(\Names)$ such that $s'_i = \pi_i \cdot s_i$ leave all inherited names untouched, $\pi_i$ fix $\FN(t''_i)$ making $t' = a.f(\pi \cdot t''_1,\dots,\pi \cdot t''_n) \equiv_\alpha t$.}
      
      \item Case $\gamma = \nu a$: It is left to show that all $(j,r) \in Q$ with $r$ extending
      $\restr{r}{N}$ accept some $t' = \beta.f(t'_i) \equiv_\alpha t$. By
      \autoref{lem:bijectiveChar}, $(j,r)\nu b.f(x_1,\dots,x_n) \in Delta$ for some
      $s_i$ and some $b \in \Names$ such that $\langle a \rangle \restr{s_i}{N} =
      \langle b\rangle (\restr{s_i}{dom(\restr{s_i}{N})})$ for all $i \in I$. Let further
      $s'_i$ extend $\restr{s_i}{N}$ such that $\langle a\rangle s'_i = \langle b\rangle s_i$,
      then by the induction hypothesis $s'_i$ accepts some $t'_i \equiv_\alpha t_i$ for all
      $i \in I$. Since $\langle a\rangle s'_i = \langle b\rangle s_i$, $a\not\in \supp(s_i)$
      and $b \not\in \supp(s'_i)$, hence $s_i = (ab)\cdot s'_i$ for all $i \in I$. By
      \autoref{lem:equivAcc}, $s_i$ then accepts $(ab)t'_i$ for all $i \in I$, thus $(j,r)$
      accepts $t' = \nu b.f((ab)\cdot t'_i)$.

      We verify that $t$ and $t'$ are in fact $\alpha$-equivalent. Since
      $a\not\in \supp(s_i)$, $b \not\in \supp(s'_i)$ and $s'_i$ accepts $t'_i$, it follows
      that $b \not\in \FN(t'_i)$, thus $a \not\in \FN((ab)\cdot t'_i)$ and further
      $\langle a \rangle t'_i = \langle b \rangle (ab)t'_i$ for all $i \in I$. Since
      $t'_i \equiv_\alpha t_i$, also $\langle a\rangle [t_i]_\alpha =
      \langle b\rangle[(ab)t'_i]_\alpha$ for all $i\in I$, hence eventually
      $t = \nu a.f(t_1,\dots,t_n) \equiv_\alpha \nu b.f((ab)\cdot t'_i) = t'$.
      
      \mycomment{
      
      Let $\beta = \nu a$ (As a matter of fact, we cannot choose $\beta$, but we know it exists and is a member of the alphabet so we just assume w.l.o.g. this name is $\nu a$). Choose $r$ extending $\restr{r}{N}$ such that by \autoref{lem:bijectiveChar} there is $\delta' = (j,r)\nu b.f(x_1,\dots,x_n)\to \nu b.f((k_1,s'_1)x_1,\dots,(k_n,s'_n)x_n)\in \Delta$ for some $s'_i$ such that $\langle a\rangle \restr{s_i}{N} = \langle b\rangle(\restr{s'_i}{dom(\restr{s_i}{N})})$. $(k_i,s'_i) = \pi \cdot (k_i,s_i)$ for some $\pi \in Perm(\Names)$ accept $t'_i = \pi \cdot t''_i$, since $s_i$ and $s'_i$ are in the same orbit and acceptance of terms is an equivariant property (\autoref{lem:equivAcc}), thus $(j,r)$ accepts $t' = \nu b.f(t'_1,\dots,t'_n)$.

      In order to verify that $t' \equiv_\alpha t$ we again have a look at the names appearing in the states. By \autoref{def:recursiveAlpha} $t = \nu a.f(t_1,\dots,t_n) \equiv_\alpha \nu b.f(t'_1,\dots,t'_n) = t'$ if $\langle a\rangle([t_1]_\alpha,\dots,[t_n]_\alpha) = \langle b\rangle([t'_1]_\alpha,\dots,[t'_n]_\alpha)$, so show that $(ac) \cdot t_i \equiv_\alpha (bc) \cdot t'_i$ for some fresh $c$. Since $(k_i,\restr{s_i}{N})$ accept $t_i \equiv_\alpha t''_i$, $\FN(t''_i) = \FN(t_i) \subseteq \supp((k_i,\restr{s_i}{N}))$. Again we know $\pi \in Perm(\Names)$ fixes the names in which $s_i$ and $s'_i$ match inherited from $\restr{s_i}{N}$, disregarding $a$ and $b$ since only $\langle a\rangle \restr{s_i}{N} = \langle b\rangle(\restr{s'_i}{dom(\restr{s_i}{N})})$ by \autoref{lem:bijectiveChar}. Since we have w.l.o.g. set $b$ to be the name bound in $\delta'$ and thus free in $t'_i$, whilst $a$ is the name bound in $\delta$ and free in $t_i$, both occur in their respective state in the same position so that $\pi(a) = b$ (or otherwise rename $\beta$ the way it fits $\pi(a)$). Here we have to pay attention that $\pi(a) = b$ does not imply the opposite. If and where $a$ occurs in $s'_i$ (and equivalently $t'_i$) is not connected to the optional occurrence of $b$ in $s_i$ (or $t''_i$) and vice versa. Nonetheless we can choose some fresh $c$, which exists since $I$ is finite and $\pi$ is a finite permutation, such that $\forall i \in I.\pi \cdot (ac) \cdot s_i = (bc) \cdot s'_i$, by equivariance accepting  $\forall i \in I.\pi \cdot (ac) \cdot t''_i = (bc) \cdot t'_i$ and since $t_i \equiv_\alpha t''_i$ holds $\langle a\rangle [t_i]_\alpha = \langle b\rangle [t'_i]_\alpha$ and eventually $t = \nu a.f(t_1,\dots,t_n) \equiv_\alpha \nu b.f(t'_1,\dots,t'_n) = t'$.}
  \end{itemize}

  As we have seen, for each state $(j,\restr{r}{N})$ of the name-dropping modification and each term $t$ that $(j,\restr{r}{N})$ accepts, each state $(j,r)$ from the original RNTA with $r$ extending $\restr{r}{N}$ accepts an $\alpha$-equivalent term $t'$. This holds in particular for the initial state, which is for both the RNTA and the name-dropping modification identical and is based on a total injective map, which extends itself since the extension of maps is a reflexive property, hence the RNTA accepts a term of the $\alpha$-equivalence class of each term accepted by its name-dropping modification, i.e. $L_\alpha(A) \supseteq L_\alpha(A_\bot)$.
\end{proof}

\noindent The proof of \autoref{th:nameDropAlpha} itself is now
immediate: By \autoref{lem:alphaClosure}, the literal language of the
name-dropping modification~$A_\bot$ of an RNTA~$A$ is closed under
$\alpha$-equivalence; on the other hand,~$A_\bot$ accepts, by
\autoref{lem:sameBar}, the same alphatic tree language as~$A$, which
is the set of $\alpha$-equivalence classes of terms literally accepted
by~$A$, so every term literally accepted by~$A_\bot$ is
$\alpha$-equivalent to one accepted by~$A$. \qed

\subsection{Details for \autoref{sec:inclusion}}

\subsubsection*{Proof of \autoref{lem:restrictNames}}

\begin{proof}

  We call two runs of equal structure equivalent if for each pair of two rewrite rules
  \[\delta = q\gamma.f(x_1,\dots,x_n)\to \gamma.f(q_1x_1,\dots,q_nx_n)  \]
  and 
  \[\delta' = q'\epsilon.f(x_1,\dots,x_n)\to \epsilon.f(q'_1x_1,\dots,q'_nx_n)\text{,}\]
  nested at the same position within the structure of their respective run, there exist $\pi, \tau$ such that $\pi q = q'$ and $\tau q_i = q'_i$ for each $i \in I$ (i.e. for each height of the run, there exists one single permutation that transforms the states).
  
  Choose $S \supseteq \supp(q)$ such that $|S| = m \cdot n_{\mathit{ar}} + 1$. We show that for each $q \in Q$ and $t$ such that $q$ accepts $t$, there exists $t'$ such that $q$ accepts $t'$ with an equivalent run and $\mathsf{N}(t') \subseteq S$. We proceed via induction over the height of trees.
  
  For the base case, let $q \in Q$ accept $t = (\gamma, c)$. If $\gamma = a$, then $a \in \supp(q)\subseteq S$ and $t' = t$. If $\gamma = \nu a$, then the required transition exists for all names, in particular for the ones in $S$, since by $\alpha$-invariance for constants all bound transitions exist nevertheless.
  
  For the inductive step, let $q$ accept $t = \gamma.f(t_1,\dots,t_n)$ and $\delta = q(\gamma.f(t_1,\dots,t_n)) \to \gamma.f(q_1t_1,\dots,q_nt_n)$ be the first rewrite rule applied in an accepting run. Then for $t_i$ there exists by induction hypothesis $t'_i \equiv_\alpha t_i$ accepted by $q_i$ with an equivalent run and $\mathsf{N}(t'_i) \subseteq S$ for each $i \in I$. We distinguish between the bound and free case:
  \begin{itemize}
      \item $\gamma = a$: We have $q(a.f(x_1,\dots,x_n)) \to a.f(q_1x_1,\dots,q_nx_n) \in \Delta$ so by \autoref{lem:suppEdge} $\supp(q_i)\cup \{a\} \subseteq \supp(q)$ for all $i \in I$. Let $t' = a.f(t'_1,\dots,t'_n)$. Since $a \in \supp(q) \subseteq S$ and $\mathsf{N}(t'_i) \subseteq S$ for all $i \in I$, $\mathsf{N}(t') \subseteq S$, $t' \equiv_\alpha t$ and $id\cdot q_i = q_i$ and from $q_i$ on we have equivalent runs for each $i \in I$, hence $q$ accepts $t'$ with an equivalent run.
  
      \item $\gamma = \nu a$: We have $q(\nu a.f(x_1,\dots,x_n)) \to \nu a.f(q_1x_1,\dots,q_nx_n) \in \Delta$ so by \autoref{lem:suppEdge} $\supp(q_i) \subseteq \supp(q) \cup \{a\}$ for all $i \in I$. If $a \in S$, we can just proceed as in the free case with $t' = \nu a.f(t_1,\dots,t_n)$, which is $\alpha$-equivalent to $t$ and accepted by $q$ with an equivalent run since the first transition is the same.
      
      So suppose $a\not\in S$. Then $a\#q_i$ for each $i\in I$. Furthermore, since
      $|I| \leq n_{\mathit{ar}}$, for all $i \in I$ $|\supp(q_i)| \leq m$ and $\supp(q_i) \subseteq S$,
      there exists $b \in S$ such that $b\#q_i$ for each $i \in I$. Since both $a\#q_i$ and
      $b\#q_i$, $\langle a\rangle q_i = \langle b\rangle q_i$. So by $\alpha$-invariance
      $q(\nu b.f(x_1,\dots,x_n)) \to \nu b.f(q_1x_1,\dots,q_nx_n) \in \Delta$, $q_i$ accepts
      $t_i$ for each $i \in I$ and thus $q$ accepts $t' = \nu b.f(t_1,\dots,t_n)$. Lastly,
      $t' \equiv_\alpha t$ since $a$ and $b$ are both
      fresh for $t_i$ for all $i \in I$. \qedhere
     \end{itemize}
  \end{proof}

  \subsubsection*{Proof of \autoref{lem:data-inclusion}}
  
  We first note an analogue of~\cite[Lemma~6.4]{UrbatEA21}:
  \begin{lemma}\label{lem:alpha-increase}
    If $t\alphaeq s$ and $t\sqsubseteq t'$, then there exists $s'$
    such that $t'\alphaeq s'$ and $s\sqsubseteq s'$.
  \end{lemma}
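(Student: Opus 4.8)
The plan is to prove the statement by induction on the height of~$t$, using that $t$, $t'$, $s$, and the $s'$ to be constructed all share the same underlying tree shape and family of operation symbols. Throughout I rely on the standard structural characterisation of $\alpha$-equivalence that the abstraction-based definition of $\alphaeq$ makes available: a free-labelled root is never $\alpha$-equivalent to a differently-labelled root, so $a.f(t_1,\dots,t_n)\alphaeq s$ forces $s=a.f(s_1,\dots,s_n)$ with $t_i\alphaeq s_i$; while $\nu b.f(t_1,\dots,t_n)\alphaeq s$ forces $s=\nu c.f(s_1,\dots,s_n)$ where, by the description of the abstraction equivalence~$\sim$, either $b=c$ and $t_i\alphaeq s_i$ for all~$i$, or else $c\notin\bigcup_i\FN(t_i)$ and $s_i\alphaeq(bc)\cdot t_i$ for all~$i$. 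I also use two easy facts, both by induction: $\sqsubseteq$ is equivariant (if $u\sqsubseteq u'$ then $\pi\cdot u\sqsubseteq\pi\cdot u'$), and $\sqsubseteq$ never increases free names, i.e.\ $u\sqsubseteq u'$ implies $\FN(u')\subseteq\FN(u)$, since inserting an occurrence of~$\nu$ only binds names.

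First I would treat the case where the root of~$t$ carries a free name, say $t=a.f(t_1,\dots,t_n)$; then $s=a.f(s_1,\dots,s_n)$ with $t_i\alphaeq s_i$, and $t'=\delta.f(t_1',\dots,t_n')$ with $a\le\delta$ and $t_i\sqsubseteq t_i'$. Applying the induction hypothesis to each pair $(t_i\alphaeq s_i,\ t_i\sqsubseteq t_i')$ yields $s_i'$ with $t_i'\alphaeq s_i'$ and $s_i\sqsubseteq s_i'$, and I set $s'=\delta.f(s_1',\dots,s_n')$. If $\delta=a$ this is immediate; if $\delta=\nu a$, then $s\sqsubseteq s'$ holds because $a\le\nu a$ at the root, and $\nu a.f(t_1',\dots,t_n')\alphaeq\nu a.f(s_1',\dots,s_n')$ holds by congruence of $\alphaeq$, the binders agreeing so that no freshness condition arises. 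The constant case $n=0$ is subsumed here, with the induction hypothesis applied vacuously.

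The main work is the case $t=\nu b.f(t_1,\dots,t_n)$, where $\nu b$ is maximal in~$\barNames$ and hence $t'=\nu b.f(t_1',\dots,t_n')$ with $t_i\sqsubseteq t_i'$, while $s=\nu c.f(s_1,\dots,s_n)$. If $b=c$ and $t_i\alphaeq s_i$, I proceed as above and re-bind with~$b$. The delicate subcase is $b\neq c$, where $c\notin\bigcup_i\FN(t_i)$ and $s_i\alphaeq(bc)\cdot t_i$. Here I would invoke equivariance of $\sqsubseteq$ to get $(bc)\cdot t_i\sqsubseteq(bc)\cdot t_i'$, and then feed the induction hypothesis the pair $((bc)\cdot t_i\alphaeq s_i,\ (bc)\cdot t_i\sqsubseteq(bc)\cdot t_i')$ (using symmetry of $\alphaeq$), obtaining $s_i'$ with $s_i\sqsubseteq s_i'$ and $(bc)\cdot t_i'\alphaeq s_i'$, i.e.\ $s_i'\alphaeq(bc)\cdot t_i'$. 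Setting $s'=\nu c.f(s_1',\dots,s_n')$ gives $s\sqsubseteq s'$ at once. For $t'\alphaeq s'$ I must verify the abstraction side condition, namely $c\notin\bigcup_i\FN(t_i')$ together with $s_i'\alphaeq(bc)\cdot t_i'$; the second is exactly what the induction hypothesis delivered, and the first follows because $\FN$ does not grow along~$\sqsubseteq$, so $\FN(t_i')\subseteq\FN(t_i)$ and hence $c\notin\bigcup_i\FN(t_i')$.

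The one genuinely load-bearing point is this last freshness check: the condition that blocks $\alpha$-renaming of~$c$ in~$t'$ is inherited from the corresponding condition in~$t$ precisely because inserting occurrences of~$\nu$ can only remove free names, never create them. This is exactly where the compatibility of $\sqsubseteq$ with $\alpha$-equivalence is used, so establishing the monotonicity $\FN(t')\subseteq\FN(t)$ along~$\sqsubseteq$ is the crux of the argument; everything else is routine structural bookkeeping.
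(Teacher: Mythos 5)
Your proof is correct. The paper itself gives no proof of \autoref{lem:alpha-increase}, merely citing it as the tree analogue of a lemma on bar strings from the B\"uchi RNNA paper~\cite{UrbatEA21}, and your induction on the height of $t$ --- with the $b\neq c$ case handled via equivariance of $\sqsubseteq$ and the correctly identified load-bearing fact that $t\sqsubseteq t'$ implies $\FN(t')\subseteq\FN(t)$, which transfers the freshness side condition $c\notin\bigcup_i\FN(t_i)$ from $t$ to $t'$ --- is exactly the expected lift of that word-case argument to trees.
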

  In \autoref{lem:data-inclusion}, the `if' direction is clear; we
  prove `only if'. The latter reduces immediately to the following:
  \begin{lemma}
    Let~$L$ be a closed alphatic language, and let~$t$ be a term
    such that $D(\{[t]_\alpha\})\subseteq D(L)$. Then there
    exists $[t']_\alpha\in L$ such that $t\sqsubseteq t'$.
  \end{lemma}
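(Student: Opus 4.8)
The plan is to reduce to a \emph{clean} representative and then exploit that the witness handed to us by the hypothesis is forced to bind at least the positions bound in that representative. Concretely, I would pick a clean term $\hat t\equiv_\alpha t$ (which exists, since every term is $\alpha$-equivalent to a clean one) and set $v=d\nu(\hat t)$. Since $[\hat t]_\alpha=[t]_\alpha$, we have $v\in D(\{[t]_\alpha\})$, so by hypothesis $v\in D(L)$; unfolding the definition of $D$ then yields a term $s$ with $[s]_\alpha\in L$ and $d\nu(s)=v$. As $L$ is a \emph{closed} alphatic language, $s$ is closed, which is the property that drives the rest of the argument.

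The heart of the proof is to show $\hat t\sqsubseteq s$. Note first that $t\sqsubseteq t'$ forces $d\nu(t)=d\nu(t')$, because the clause $\gamma\le\delta$ keeps the underlying name fixed ($a\le\nu a$); here $d\nu(\hat t)=v=d\nu(s)$ already holds, so it suffices to prove that every position at which $\hat t$ binds a name is also a binding position of $s$ (the attached name then matches automatically). So I would fix a binder $\nu a$ of $\hat t$ at a tree position $p$. Since $\hat t$ is clean, the name $a$ is used by no other binder and is not free in $\hat t$; hence every occurrence of $a$ in $\hat t$, and therefore every occurrence of $a$ in $v$, lies within the subtree rooted at $p$. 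Now suppose towards a contradiction that $s$ carries a \emph{free} marker $a$ at $p$ instead of a binder. As $s$ is closed, that occurrence is captured by a binder $\nu a$ of $s$ at some strict ancestor $p'$ of $p$, and applying $d\nu$ shows that $v$ carries the name $a$ at $p'$. But $p'$ lies strictly above $p$, hence outside the subtree of $p$, contradicting the confinement just established. Therefore $s$ binds at $p$, and ranging over all binders of $\hat t$ gives $\hat t\sqsubseteq s$.

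Finally, I would transport this relation from $\hat t$ back to $t$ via \autoref{lem:alpha-increase}: from $t\equiv_\alpha\hat t$ and $\hat t\sqsubseteq s$ it produces a term $t'$ with $s\equiv_\alpha t'$ and $t\sqsubseteq t'$. Then $[t']_\alpha=[s]_\alpha\in L$ and $t\sqsubseteq t'$, which is exactly what the lemma demands.

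The hard part — and the reason cleanness is indispensable — is precisely the subtree-confinement step: for an arbitrary representative $t$, a bound name may be shadowed or reused across branches, so a bound name could reappear strictly above one of its binders in $d\nu(t)$ and the contradiction would collapse. Passing to a clean representative eliminates exactly this pathology, at the cost of the final $\alpha$-equivalence bookkeeping, which is absorbed cleanly by \autoref{lem:alpha-increase}. I would also want to verify the base case of the structural/positional reasoning (constants at leaves, where a bound transition exists for every name by $\alpha$-invariance), but this is routine.
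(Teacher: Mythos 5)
Your proof is correct and follows essentially the same route as the paper's: reduce to a clean representative, argue position-by-position that every binder position of the clean term must also be a binder position of the witness (using cleanness of the representative and closedness of the witness from $L$), and transfer the relation back via \autoref{lem:alpha-increase}. The only cosmetic differences are that you invoke \autoref{lem:alpha-increase} at the end rather than as a ``w.l.o.g.'' at the start, and that you avoid the paper's unnecessary appeal to closedness of~$t$ itself (cleanness alone suffices, as your confinement argument shows); the stray remark about bound transitions and $\alpha$-invariance is irrelevant to this purely term-level lemma but harmless.
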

  \begin{proof}
    Since~$L$ is closed,~$t$ is closed. By
    \autoref{lem:alpha-increase}, we can replace~$t$ with any
    $\alpha$-equivalent term, so we can assume w.l.o.g.\ that~$t$ is
    clean. We use the terminology of positions in trees as in the
    proof of \autoref{lem:dnu-inj}. By assumption, we have
    $d\nu(t)\in D(L)$, so there exists $[t']\in L$ such that
    $d\nu(t')=d\nu(t)$. We claim that $t\sqsubseteq t'$. Since
    $d\nu(t')=d\nu(t)$,~$t$ and $t'$ have the same tree structure and
    have inner nodes carrying the same signature symbols, and differ
    only in whether the attached names are free or bound. Suppose
    that~$t$ carries a bound name $\nu a$ at position~$p_1$; we have
    to show that $t'$ also carries~$\nu a$ at~$p_1$. Assume the
    contrary; then~$t'$ carries~$a$ at~$p_1$. Since~$L$ is closed,~$a$
    must be bound at a position~$p_2$ above~$p_1$ in~$t'$,
    so~$d\nu(t)=d\nu(t')$ is labelled with~$a$ at both~$p_1$
    and~$p_2$. This leaves two cases for position~$p_2$ in~$t$:
    If~$p_2$ is labelled~$a$, then~$a$ must be bound somewhere
    above~$p_2$ in~$t$ because~$t$ is closed; otherwise,~$a$ is bound
    at~$p_2$ in~$t$. Both are in contradiction with~$t$ being clean.
  \end{proof}

\subsubsection*{Proof of \autoref{th:litDec}}
\begin{proof}
As mentioned in the sketch, the proof is analogous to the proof of \autoref{th:dec}.

\begin{enumerate}

\item We show that $D(L_\alpha(A))\subseteq D(L_\alpha(B))$ iff
$L(A)\cap\Terms_S(\Sigma)\subseteq{\downarrow}(L(B_\bot)\cap\Terms_S(\Sigma))$.
Again, $S$ is a finite set of $d_A \cdot n_{ar} + 1$ names.

"$\Rightarrow$": Let $D(L_\alpha(A))\subseteq D(L_\alpha(B))$ and
$t \in L(A)\cap\Terms_S(\Sigma)$,
we show that $t \in {\downarrow}(L(B_\bot)\cap\Terms_S(\Sigma))$.
Since $D(L_\alpha(A))\subseteq D(L_\alpha(B))$, we have by
\autoref{lem:data-inclusion} some $t'$ such that $t \sqsubseteq t'$
and $[t'] \in L_\alpha(B)$. Then by \autoref{th:nameDropAlpha},
$t' \in L(B_\bot)$ and consequently by the definition of $\downarrow$
holds $t \in {\downarrow}(L(B_\bot)\cap\Terms_S(\Sigma))$.

"$\Leftarrow$": Let $L(A)\cap\Terms_S(\Sigma)\subseteq
{\downarrow}(L(B_\bot)\cap\Terms_S(\Sigma))$ and $d\nu(t) \in \lfresh(L_\alpha(A))$,
we show that $d\nu(t) \in \lfresh(L_\alpha(B))$. By the definition of $\lfresh$,
$[t]_\alpha \in L_\alpha(A)$, $t' \in L(A)$ for some $t' \equiv_\alpha t$
and by \autoref{lem:restrictNames} $t'' \in (L(A) \cap \Terms_S(\Sigma))$
for some $t'' \equiv_\alpha t$. By hypothesis, $t'' \in
{\downarrow}(L(B_\bot)\cap\Terms_S(\Sigma))$, then in particular $t''
\in L(B_\bot)$, hence $[t]_\alpha \in L(B)$ and eventually $d\nu(t) \in \lfresh(L_\alpha(B))$
by definition.

\item The construction of the NFTAs~$A_S$ and~$B_S$ is essentially the
  same as in Step~\ref{item:nfta} in the proof of \autoref{th:dec}.
  The only modification is that for every rewrite rule
  $\delta = q(\nu a.f(x_1,\dots, x_n)) \to \nu a.f(q_1(x_1),\dots,
  q_n(x_n))$ in~$B_S$, we add another rewrite rule
  $\delta' = q(a.f(x_1,\dots, x_n)) \to a.f(q_1(x_1),\dots, q_n(x_n))$
  to close the accepted language under $\sqsubseteq$, thus at most
  doubling the number of rewrite rules in~$B_S$.

\item Since the closure under $\sqsubseteq$ increases the size of the
  NFTAs~$A_S$ and~$B_S$ by at most a constant factor~$2$, the size
  estimates in Step~\ref{item:size} in the proof of \autoref{th:dec}
  hold analogously.\qedhere

\end{enumerate}

\end{proof}

\end{document}